\def\WWW{\mathbb{W}}
\def\iso{\mathfrak{i}\mathfrak{s}\mathfrak{o}}
\def\so{\mathfrak{s}\mathfrak{o}}
\DeclareMathOperator{\sdiv}{div}
\newcommand{\W}{\mathbf{W}}
\DeclareMathOperator{\GL}{GL}
\newtheorem{thm}{Theorem}[section]
\newtheorem*{Aglu}{Annular Gluing Theorem}
\newtheorem*{Cglu}{Conic Gluing Theorem}
\newtheorem{prop}[thm]{Proposition}
\newtheorem{conj}[thm]{Conjecture}
\newtheorem{lem}[thm]{Lemma}
\newtheorem{df}[thm]{Definition}
\newtheorem{rk}[thm]{Remark}
\newcommand{\ep}{\varepsilon}
\newcommand{\ga}{\gamma}
\renewcommand{\c}{\cdot}
\DeclareMathOperator{\SO}{SO}
\DeclareMathOperator{\ISO}{ISO}
\def\e{\mathbf{e}}
\def\QQQ{\mathbb{Q}}
\def\MMM{\mathbb{M}}
\def\Lie{\mathcal{L}}
\def\Hd{\dot{\H}}
\def\ev{\mathbf{e}}
\def\hti{\widetilde{h}}
\def\pit{\widetilde{\pi}}
\def\SSS{{\mathbb{S}}}
\def\jk{\mathfrak{j}}
\DeclareMathOperator{\Orb}{Orb}
\renewcommand{\O}{{\mathbf{O}}}
\DeclareMathOperator{\Ric}{\bf Ric}
\DeclareMathOperator{\tr}{tr}
\renewcommand{\th}{\theta}
\renewcommand{\a}{\alpha}
\renewcommand{\b}{\beta}
\newcommand{\Ga}{\Gamma}
\newcommand{\pr}{\partial}
\newcommand{\g}{{\bf g}}
\def\cb{\mathbf{c}}
\newcommand{\x}{\mathbf{x}}
\newcommand{\R}{\mathbf{R}}
\numberwithin{equation}{section}
\def\La{\Lambda}
\def\xja{\langle x \rangle}
\def\rt{{\widetilde{r}}}
\def\ka{\kappa}
\def\XX{\mathcal{X}}
\def\les{\lesssim}
\def\NNN{\mathbb{N}}
\def\PPP{\mathbb{P}}
\def\pk{\mathfrak{p}}
\def\ck{\mathfrak{c}}
\def\ek{\mathfrak{e}}
\def\RRR{\mathbb{R}}
\def\th{\theta}
\newcommand{\mr}{\mathcal{R}}
\newcommand{\M}{\mathcal{M}}
\def\ep{\varepsilon}
\def\La{\Lambda}
\def\E{\mathbf{E}}
\def\P{\mathbf{P}}
\def\C{\mathbf{C}}
\def\J{\mathbf{J}}
\def\Q{\mathbf{Q}}
\def\D{\mathbf{D}}
\def\om{\omega}
\def\Om{\Omega}
\def\Si{\Sigma}
\def\si{\sigma}
\def\ga{\gamma}
\def\Ga{\Gamma}
\def\a{\alpha}
\def\b{\beta}
\def\de{\delta}
\def\De{\Delta}
\def\nab{\nabla}
\def\ov{\overline}
\def\les{\lesssim}
\def\RR{\mr}
\DeclareMathOperator{\supp}{supp}
\def\Ric{\mathbf{Ric}}
\def\etabf{\boldsymbol{\eta}}
\def\nbf{\mathbf{n}}
\def\ellbf{\boldsymbol{\ell}}
\def\G{\mathbf{G}}
\def\H{\mathbf{H}}
\def\X{\mathbf{X}}
\def\UUU{\mathbb{U}}
\def\refAglu{{\hypersetup{linkcolor=black}\hyperref[MOT1.7]{\textbf{Annular Gluing Theorem} }}}
\def\refCglu{{\hypersetup{linkcolor=black}\hyperref[Conicgluingthm]{\textbf{Conic Gluing Theorem} }}}
\title{Cauchy Data for Formation of Multiple Black Holes with Prescribed ADM Parameters}
\author{Dawei Shen and Jingbo Wan}
\begin{document}
\maketitle
\begin{abstract}
We give a simple construction of smooth, asymptotically flat vacuum initial data modeling a relativistic collapsing $N$--body system, with independently prescribed ADM energy, linear momentum, and angular momentum for each component, subject to the timelike condition $\E>|\P|$. The initial data contain no trapped surfaces, and the future development contains multiple causally independent trapped regions that dynamically form from localized subsets of the initial slice. In particular, the maximal development of data with well-separated collapsing components and relative motion is expected to yield spacetimes containing multiple black holes.
\end{abstract}
\section{Introduction} 
A central problem in mathematical general relativity is to understand which smooth, asymptotically flat Cauchy data for the Einstein vacuum equations lead to black hole formation. While stationary black holes such as the Kerr family are well studied, much less is known about the structure of initial data whose evolution produces black holes, particularly in configurations involving more than one black hole.

A fundamental breakthrough in this direction is due to Christodoulou \cite{Chr}, who showed that trapped surfaces can form dynamically from regular characteristic data. This method was further developed in \cite{An,AnLuk,Chen-K,KLR,kr}. In the Cauchy setting, Li and Yu \cite{LY} constructed smooth asymptotically flat vacuum initial data whose future development contains a trapped surface.
This was extended by Li and Mei \cite{LiMei} to a construction of vacuum spacetimes exhibiting black hole formation from Cauchy data. In our previous work \cite{ShenWan2}, we constructed smooth Cauchy initial data whose future development contains multiple causally independent trapped regions, without any initial trapped surfaces. See also \cite{GSW}, joint work with E.~Giorgi, for the construction of initial data for multiple collapsing (charged) boson stars. Each trapped region arises from a localized subset of the initial slice; however, the relativistic parameters of the collapsing components are not addressed. By construction, the initial data sets obtained in \cite{GSW,ShenWan2} consist of multiple mass concentrations with prescribed ADM energies and well-separated centers of mass. These data evolve, in finite time, into several well-separated $3$--dimensional black holes, with small linear and angular momenta. By analogy with Newtonian gravitation, such black holes are expected to merge into a single black hole. Hence, one does not expect the long-time evolution to exhibit multiple black holes.

The present paper addresses this limitation. Motivated by the conic gluing method introduced by Carlotto-Schoen in \cite{CS} (see also Mao-Tao \cite{MaoTao}), we introduce a simple framework in which each collapsing region is modeled on a Kerr sector with independently prescribed ADM energy, linear momentum, and angular momentum, subject to the timelike condition $\E>|\P|$. The initial data remain smooth and free of trapped surfaces, whereas their future development contains multiple dynamically forming trapped regions with controlled relativistic parameters.

Our approach exploits the diffeomorphism invariance of the Einstein vacuum equations and the resulting indeterminacy of the constraint equations. Conceptually, we treat general relativity as special relativity plus controlled lower-order corrections. Kerr initial data are placed in Kerr-Schild coordinates, and their ADM charges are organized in a form that transforms covariantly under the Poincar\'e group. This isolates the exact special relativistic transformation laws, while the remaining lower-order terms are controlled so that gluing theorems can be applied. Annular gluing replaces the Kerr black hole core by a short-pulse collapsing region, and conic gluing localizes and separates different boosted Kerr sectors. The resulting data may be viewed as a family of relativistic collapsing $N$--body initial data.

Let $(\Si,g)$ be a $3$--dimensional Riemannian manifold and let $k$ be a symmetric $2$--tensor on $\Si$. The Einstein vacuum constraint equations are
\begin{align}
\begin{split}\label{Econstraint}
R(g)+(\tr_g k)^2-|k|^2_g &= 0,\\
\sdiv_g (k-\tr_g k\, g) &= 0,
\end{split}
\end{align}
where $\nab$ denotes the Levi-Civita connection of $g$ and $R(g)$ its scalar curvature. By the local existence theorem of Choquet-Bruhat and Choquet-Bruhat-Geroch \cite{cb,cbg}, any solution $(\Si,g,k)$ of
\eqref{Econstraint} admits a unique global maximum hyperbolic development
$(\M,\g)$ solving
\begin{equation}\label{EVE}
\Ric(\g)_{\mu\nu}=0,
\end{equation}
in which $(\Si,g)$ embeds isometrically with the second fundamental form $k$.

Now we introduce the geometric notation used to localize collapsing regions. For $\om\in\mathbb S^2$, $0<\th<\frac{\pi}{2}$, and $y\in\RRR^3$, we define
\[
C_{\om,\th}(y):=\{x\in\mathbb R^3:\angle(x-y,\om)<\th\},
\]
which is the solid cone in $\RRR^3$ with center at $y$, center vector $\om$ and angle $\th$. We use the abbreviated notation $C_{\om,\th}:=C_{\om,\th}(0)$. We also denote $B_{r}(x)$ the ball center at $x$ with Euclidean radius $r$.

Our main result is the following theorem.
\begin{thm}\label{maintheorem}
Let $N\in\NNN$ and $s\ge3$.
For each $I=1,\dots,N$, prescribe parameters
\[
(\E_I,\P_I,\J_I)\in\RRR_+\times\RRR^3\times\RRR^3,\qquad\quad\E_I>|\P_I|,
\]
and choose $N$ pairwise disjoint cones $C_{\om_I,\th_I}$ with $\om_I\in\SSS^2$ and $0<\th_I<\frac{\pi}{2}$. Then there exist parameters
\[
(\de_I,R_I,\cb_I,y_I)\in\RRR_+\times\RRR_+\times\RRR^3\times\RRR^3,
\qquad I=1,\dots,N,
\]
and an initial data set $(\RRR^3,g,k)$ that solves the Einstein constraint equations \eqref{Econstraint}, such that the following hold:
\begin{enumerate}
\item \emph{Local structure}:
For each $I\in\{1,2,\ldots,N\}$,
\begin{align}\label{gkformula}
\begin{split}
(g,k)&=(e,0) \qquad\text{ in }\; B_{(1-2\de_I)R_I}(\cb_I),\\
(g,k)&=(g_I,k_I) \quad\text{ in }\; B_{32R_I}^c(\cb_I) \cap \bigl(C_{\om_I,\frac12\th_I}(y_I)\cup B_{\frac12}(y_I)\bigr),
\end{split}
\end{align}
where $(g_I,k_I)$ denotes the initial Kerr data centered at $\cb_I$ with ADM energy $\E_I$, linear momentum $\P_I$, and angular momentum $\J_I$.
See Figure \ref{fig:mainthm}.
\item \emph{Analytic control}: In the gluing region, we have
\begin{align}\label{gkest}
\begin{split}
R_I^{-1}\|g-e\|_{H^s(B_{64R_I}(\cb_I)\setminus\ov{B_{R_I}}(\cb_I))} +\|k\|_{H^{s-1}(B_{64R_I}(\cb_I)\setminus\ov{B_{R_I}}(\cb_I))}&\les 1,\\
\|(g-e,k)\|_{H_b^{s,\de}\times H_b^{s-1,\de+1}(\Om_I)}&\les 1,
\end{split}
\end{align}
where
\begin{align}\label{defOmI}
\Om_I :=\ov{\big(C_{\om_I,\th_I}(y_I)\cup B_1(y_I)\big)\setminus \big(C_{\om_I,\frac12\th_I}(y_I)\cup B_{\frac12}(y_I)\big)}.
\end{align}
\item \emph{Future trapped surfaces}: For each $I\in\{1,2,\ldots,N\}$, a trapped surface forms in the future domain of dependence $D^+(B_{R_I}(\cb_I))$.
\item \emph{No initial trapped surfaces}: $(\RRR^3,g,k)$ contains no trapped surfaces.
\end{enumerate}
\end{thm}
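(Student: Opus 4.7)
The plan is to combine three ingredients prefigured in the introduction: a covariant packaging of Kerr initial data with prescribed ADM charges via Poincar\'e transformations, the Annular Gluing Theorem to replace the Kerr black hole core near each center $\cb_I$ by a flat interior connected through a short-pulse collapsing layer, and the Conic Gluing Theorem to localize the resulting boosted Kerr tails inside the pairwise disjoint cones $C_{\om_I,\th_I}$. First I would fix parameters: pick points $y_I$ far along $\om_I$ so that the sets $C_{\om_I,\th_I}(y_I)\cup B_1(y_I)$ are pairwise disjoint, choose $\cb_I\in C_{\om_I,\th_I/2}(y_I)$ sufficiently far from $y_I$, and take $R_I>0$ large enough that the Kerr horizon associated with internal mass $\sqrt{\E_I^2-|\P_I|^2}$ lies inside $B_{R_I}(\cb_I)$ but small enough that $B_{64R_I}(\cb_I)\subset C_{\om_I,\th_I/2}(y_I)\setminus B_{1/2}(y_I)$; the $\de_I$ will be produced by the annular step. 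Next I would build the boosted Kerr data $(g_I,k_I)$ by starting from Kerr in Kerr--Schild form with internal parameters $(M_I,a_I)$, boosting by velocity $\P_I/\E_I$ (admissible because $|\P_I|<\E_I$) and translating the origin to $\cb_I$, then tuning $(M_I,a_I)$ via the Poincar\'e-covariant ADM expressions so that the resulting data carry exactly $(\E_I,\P_I,\J_I)$; the Kerr--Schild form supplies the explicit decay needed for the weighted b-space bound in \eqref{gkest}.

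With these Kerr pieces in hand, I would apply the Annular Gluing Theorem to each $(g_I,k_I)$ near $\cb_I$: it produces data equal to $(e,0)$ on $B_{(1-2\de_I)R_I}(\cb_I)$, equal to $(g_I,k_I)$ on $B_{32R_I}^c(\cb_I)$, satisfying the first estimate of \eqref{gkest} on the transition annulus, and containing in that annulus a short-pulse configuration that, by \cite{Chr,An,AnLuk,LY,LiMei,ShenWan2}, forces a trapped surface to form in $D^+(B_{R_I}(\cb_I))$ while leaving the initial slice trapped-surface-free. I would then invoke the Conic Gluing Theorem at each apex $y_I$: this modifies the annularly-glued data only in the shell $\Om_I$ of \eqref{defOmI}, preserves the constraints \eqref{Econstraint} and the ADM charges, and yields data equal to $(g_I,k_I)$ inside $C_{\om_I,\th_I/2}(y_I)\cup B_{1/2}(y_I)$ and equal to $(e,0)$ outside $C_{\om_I,\th_I}(y_I)\cup B_1(y_I)$, with the weighted bound of \eqref{gkest} on $\Om_I$. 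Because the supports $C_{\om_I,\th_I}(y_I)\cup B_1(y_I)$ are pairwise disjoint by our choice of $y_I$, the $N$ resulting datasets may be assembled by superposition into a single $(g,k)$ on $\RRR^3$, each piece supported on its own region with flat vacuum in between.

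The four conclusions then follow. Local structure \eqref{gkformula} and the analytic bounds \eqref{gkest} are immediate outputs of the two gluing theorems. Future trapped surfaces in each $D^+(B_{R_I}(\cb_I))$ form independently by the short-pulse evolution attached to the annular gluing, independence being ensured by the spatial and causal disjointness of the $N$ supports. Absence of initial trapped surfaces follows because the data are flat on $B_{(1-2\de_I)R_I}(\cb_I)$, the bound \eqref{gkest} can be made small (by shrinking $\de_I$) to exclude trapped surfaces in the annular gluing region, and the boosted Kerr exterior outside $B_{R_I}(\cb_I)$ is trapped-surface-free by our choice of $R_I$.

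The main obstacle I expect lies in coupling the Conic Gluing with the short-pulse Annular construction while preserving both the constraint equations \eqref{Econstraint} and the full prescribed ADM parameters $(\E_I,\P_I,\J_I)$. Carlotto--Schoen style conic gluing inserts corrections that could a priori shift the asymptotic charges; the Poincar\'e-covariant packaging of ADM parameters flagged in the introduction is precisely what should render these corrections controllable, with remaining terms treated as lower-order perturbations of the exact special-relativistic transformation laws. Ensuring compatibility between the short-pulse gauge used to force trapped surface formation, the Kerr--Schild gauge for the boosted exterior, and the gauge in which the conic gluing is performed is where the genuine technical work lives.
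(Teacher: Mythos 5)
Your overall architecture matches the paper's proof: Poincar\'e-covariant realization of boosted Kerr data with prescribed ADM charges (Proposition \ref{mainergodic}), the Annular Gluing Theorem to bridge a short-pulse interior to the boosted Kerr exterior (Proposition \ref{1BHformation}), and the Conic Gluing Theorem to localize each sector inside its cone, assembled by disjointness. Two points: one is a matter of framing, and one is a genuine gap.

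On framing: the short-pulse configuration is not produced by the Annular Gluing Theorem; it is the \emph{input} $(g_{in},k_{in})$, supplied by Lemma \ref{interiorsolution} (Theorem 4.27 of \cite{ShenWan2}), which is already flat in $B_{(1-2\de_I)R_I}$ and carries the short pulse in $B_{R_I}\setminus\ov{B_{(1-2\de_I)R_I}}$. The annular gluing merely interpolates between this interior slice and the boosted Kerr data on $B_{64R_I}\setminus\ov{B_{R_I}}$.

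The genuine gap is the no-trapped-surfaces step. Shrinking $\de_I$ does not make the bound \eqref{gkest} small: the constant there is of order one because the transition annulus carries the full ADM energy $\E_I$, and the estimate is only $\lesssim 1$ after the $R_I^{-1}$ rescaling, so smallness of $(g-e,k)$ is not available to rule out trapped surfaces there. More importantly, a candidate trapped surface need not lie entirely in one region; it may wander across the flat interior, the transition annulus, the conic gluing shell $\Om_I$, the boosted Kerr sector, and $\Si_{ext}$, so a region-by-region claim does not close the argument. The paper instead uses a barrier comparison against coordinate spheres $S_r=\pr B_r(\cb_I)$: it verifies $\tr_g(\th_{S_r}-k_{S_r})>0$ for $0<r<64R_I$ by citing (6.14) of \cite{ShenWan2}, and for $r>32R_I$ by the $b$-weighted decay of the glued data; then any compact $S$ contained in a conic sector is tangent from inside to the smallest coordinate sphere containing it, hence not trapped, and any $S$ meeting $\Si_{ext}$ has a point in the flat region where \eqref{extnotrapping} precludes trapping. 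This mean-curvature comparison is essential and is not recoverable from the heuristics you give.
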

\begin{figure}[H]
\begin{center}
\begin{tikzpicture}[
line width=0.5pt,
>={Stealth[length=3pt,width=4pt]},
scale=0.75
]
\coordinate (O) at (0,0,0);
\draw[->, dashed] (xyz cs:x=-9.5) -- (xyz cs:x=9.5) node[right] {\scalebox{0.92}{$y$}};
\draw[->, dashed] (xyz cs:y=-9.1) -- (xyz cs:y=9.3) node[above] {\scalebox{0.92}{$z$}};
\draw[->, dashed] (xyz cs:z=-12) -- (xyz cs:z=12.8) node[above] {\scalebox{0.92}{$x$}};
\fill[green!20, opacity=0.3, even odd rule]
  (1.1,0)
    -- plot[variable=\t, domain=0:6.4] ({\t+1.1},{0.4*\t})   
    -- (7.5,2.56) arc (90:-90:0.8 and 2.56)                 
    -- plot[variable=\t, domain=6.4:0] ({\t+1.1},{-0.4*\t}) 
    -- cycle
  (1.1,0)
    -- plot[variable=\t, domain=0:6.4] ({\t+1.1},{0.2*\t})   
    -- (7.5,1.28) arc (90:-90:0.5 and 1.28)                 
    -- plot[variable=\t, domain=6.4:0] ({\t+1.1},{-0.2*\t}) 
    -- cycle;
\fill[blue!25, opacity=0.35]
  (1.1,0)
    -- plot[variable=\t, domain=0:6.4] ({\t+1.1},{0.2*\t})
    -- (7.5,1.28) arc (90:-90:0.5 and 1.28)
    -- plot[variable=\t, domain=6.4:0] ({\t+1.1},{-0.2*\t})
    -- cycle;
\draw plot[variable=\t, domain=0:6.4] ({\t+1.1},{0.2*\t});
\draw plot[variable=\t, domain=0:6.4] ({\t+1.1},{-0.2*\t});
\draw (7.5,1.28) arc (90:270:0.5 and 1.28);
\draw (7.5,-1.28) arc (-90:90:0.5 and 1.28);
\draw plot[variable=\t, domain=0:6.4] ({\t+1.1},{0.4*\t});
\draw plot[variable=\t, domain=0:6.4] ({\t+1.1},{-0.4*\t});
\draw (7.5,2.56) arc (90:270:0.8 and 2.56);
\draw (7.5,-2.56) arc (-90:90:0.8 and 2.56);
\shade[ball color = red!50, opacity = 0.9] (4.5,0) circle (0.45);
\node[below left] at (4.5,0) {\scalebox{0.98}{$\mathbf{E}_1$}};
\draw[->, thick] (4.5,0) -- (6.2,0.4) node[above] {\scalebox{0.98}{$\mathbf{P}_1$}};
\draw[dashed] (4.0,-2.0) -- (5.0,2.0) node[above] {\scalebox{0.98}{$\mathbf{J}_1$}};
\begin{scope}[rotate around={-12:(5.3,1.1)}] 
\draw[thick,->]
  (5.3,1.5)
    arc[
      start angle=0,
      end angle=310,
      x radius=0.45,
      y radius=0.2
    ];
\end{scope}
\fill[green!20, opacity=0.3, even odd rule]
  (-1.1,0)
    -- plot[variable=\t, domain=0:6.4] ({-\t-1.1},{0.4*\t})     
    -- (-7.5,2.56) arc (90:270:0.8 and 2.56)                    
    -- plot[variable=\t, domain=6.4:0] ({-\t-1.1},{-0.4*\t})     
    -- cycle
  (-1.1,0)
    -- plot[variable=\t, domain=0:6.4] ({-\t-1.1},{0.2*\t})     
    -- (-7.5,1.28) arc (90:270:0.5 and 1.28)                    
    -- plot[variable=\t, domain=6.4:0] ({-\t-1.1},{-0.2*\t})     
    -- cycle;
\fill[blue!25, opacity=0.35]
  (-1.1,0)
    -- plot[variable=\t, domain=0:6.4] ({-\t-1.1},{0.2*\t})
    -- (-7.5,1.28) arc (90:270:0.5 and 1.28)
    -- plot[variable=\t, domain=6.4:0] ({-\t-1.1},{-0.2*\t})
    -- cycle;    
\draw plot[variable=\t, domain=0:6.4] ({-\t-1.1},{0.2*\t});
\draw plot[variable=\t, domain=0:6.4] ({-\t-1.1},{-0.2*\t});
\draw (-7.5,1.28) arc (90:270:0.5 and 1.28);
\draw[dashed] (-7.5,-1.28) arc (-90:90:0.5 and 1.28);
\draw plot[variable=\t, domain=0:6.4] ({-\t-1.1},{0.4*\t});
\draw plot[variable=\t, domain=0:6.4] ({-\t-1.1},{-0.4*\t});
\draw (-7.5,2.56) arc (90:270:0.8 and 2.56);
\draw[dashed] (-7.5,-2.56) arc (-90:90:0.8 and 2.56);
\shade[ball color = red!70, opacity = 0.8] (-3,0) circle (0.25);
\node[below right] at (-3,0) {\scalebox{0.98}{$\mathbf{E}_2$}};
\draw[->, thick] (-3,0) -- (-3.9,0.4) node[above] {\scalebox{0.98}{$\mathbf{P}_2$}};
\draw[dashed] (-3.6,-1.0) -- (-2.4,1.0) node[above] {\scalebox{0.98}{$\mathbf{J}_2$}};
\begin{scope}[rotate around={150:(-2.5,0.55)}]
\draw[thick,->]
  (-2,0.55)
    arc[
      start angle=0,
      end angle=310,
      x radius=0.3,
      y radius=0.2
    ];
\end{scope}
\fill[green!20, opacity=0.3, even odd rule]
  (0,1.1)
    -- plot[variable=\t, domain=0:6.4] ({0.4*\t},{\t+1.1})    
    -- (2.56,7.5) arc (0:180:2.56 and 0.8)                    
    -- plot[variable=\t, domain=6.4:0] ({-0.4*\t},{\t+1.1})   
    -- cycle
  (0,1.1)
    -- plot[variable=\t, domain=0:6.4] ({0.2*\t},{\t+1.1})    
    -- (1.28,7.5) arc (0:180:1.28 and 0.5)                    
    -- plot[variable=\t, domain=6.4:0] ({-0.2*\t},{\t+1.1})   
    -- cycle;
\fill[blue!25, opacity=0.35]
  (0,1.1)
    -- plot[variable=\t, domain=0:6.4] ({0.2*\t},{\t+1.1})
    -- (1.28,7.5) arc (0:180:1.28 and 0.5)
    -- plot[variable=\t, domain=6.4:0] ({-0.2*\t},{\t+1.1})
    -- cycle;    
\draw plot[variable=\t, domain=0:6.4] ({0.2*\t},{\t+1.1});
\draw plot[variable=\t, domain=0:6.4] ({-0.2*\t},{\t+1.1});
\draw (-1.28,7.5) arc (180:360:1.28 and 0.5);
\draw (1.28,7.5) arc (0:180:1.28 and 0.5);
\draw plot[variable=\t, domain=0:6.4] ({0.4*\t},{\t+1.1});
\draw plot[variable=\t, domain=0:6.4] ({-0.4*\t},{\t+1.1});
\draw (-2.56,7.5) arc (180:360:2.56 and 0.8);
\draw (2.56,7.5) arc (0:180:2.56 and 0.8);
\shade[ball color = red, opacity = 0.8] (0,4.5) circle (0.35);
\node[above left] at (0,4.5) {\scalebox{0.98}{$\mathbf{E}_3$}};
\draw[->, thick] (0,4.5) -- (0.5,3.7) node[right] {\scalebox{0.98}{$\mathbf{P}_3$}};
\draw[dashed] (-0.8,3.0) -- (0.8,6.0) node[above] {\scalebox{0.98}{$\mathbf{J}_3$}};
\begin{scope}[rotate around={-30:(0.9,5.5)}] 
\draw[thick,->]
  (0.9,5.5)
    arc[
      start angle=350,
      end angle=20,
      x radius=0.35,
      y radius=0.17
    ];
\end{scope}
\fill[green!20, opacity=0.3, even odd rule]
  (0,-1.1)
    -- plot[variable=\t, domain=0:6.4] ({0.4*\t},{-\t-1.1})    
    -- (2.56,-7.5) arc (0:-180:2.56 and 0.8)                   
    -- plot[variable=\t, domain=6.4:0] ({-0.4*\t},{-\t-1.1})   
    -- cycle
  (0,-1.1)
    -- plot[variable=\t, domain=0:6.4] ({0.2*\t},{-\t-1.1})    
    -- (1.28,-7.5) arc (0:-180:1.28 and 0.5)                   
    -- plot[variable=\t, domain=6.4:0] ({-0.2*\t},{-\t-1.1})   
    -- cycle;
\fill[blue!25, opacity=0.35]
  (0,-1.1)
    -- plot[variable=\t, domain=0:6.4] ({0.2*\t},{-\t-1.1})
    -- (1.28,-7.5) arc (0:-180:1.28 and 0.5)
    -- plot[variable=\t, domain=6.4:0] ({-0.2*\t},{-\t-1.1})
    -- cycle;
\draw plot[variable=\t, domain=0:6.4] ({0.2*\t},{-\t-1.1});
\draw plot[variable=\t, domain=0:6.4] ({-0.2*\t},{-\t-1.1});
\draw (-1.28,-7.5) arc (180:360:1.28 and 0.5);
\draw[dashed] (1.28,-7.5) arc (0:180:1.28 and 0.5);

\draw plot[variable=\t, domain=0:6.4] ({0.4*\t},{-\t-1.1});
\draw plot[variable=\t, domain=0:6.4] ({-0.4*\t},{-\t-1.1});
\draw (-2.56,-7.5) arc (180:360:2.56 and 0.8);
\draw[dashed] (2.56,-7.5) arc (0:180:2.56 and 0.8);
\shade[ball color = red, opacity = 0.9] (0,-6) circle (0.45);
\node[above right] at (0,-6) {\scalebox{0.98}{$\mathbf{E}_4$}};
\draw[->, thick] (0,-6) -- (-1.1,-7.9) node[left] {\scalebox{0.98}{$\mathbf{P}_4$}};
\draw[dashed] (1.0,-7.8) -- (-1.0,-4.2) node[above] {\scalebox{0.98}{$\mathbf{J}_4$}};
\begin{scope}[rotate around={28:(-0.1,-4.8)}]
\draw[thick,->]
  (-0.05,-4.3)
    arc[
      start angle=-20,
      end angle=320,
      x radius=0.6,
      y radius=0.2
    ];
\end{scope}
\draw[dashed] (0,0) circle (1.1);
\draw[thin] (-1.1,0) arc (180:360:1.1 and 0.45);
\draw[dashed] (1.1,0) arc (0:180:1.1 and 0.45);
\shade[ball color=gray!10, opacity=0.2] (0,0) circle (1.1);
\draw[thin] (0,0) circle (1.1);
\draw[dashed] (1.1,0) circle (0.21);
\draw[thin] (1.1-0.21,0) arc (180:360:0.21 and 0.105);
\draw[dashed] (1.1+0.21,0) arc (0:180:0.21 and 0.105);
\shade[ball color=gray!10, opacity=0.2] (1.1,0) circle (0.21);
\draw[thin] (1.1,0) circle (0.21);
\draw[dashed] (1.1,0) circle (0.43);
\draw[thin] (1.1-0.43,0) arc (180:360:0.43 and 0.29);
\draw[dashed] (1.1+0.43,0) arc (0:180:0.43 and 0.29);
\shade[ball color=gray!10, opacity=0.2] (1.1,0) circle (0.43);
\draw[thin] (1.1,0) circle (0.43);
\draw[dashed] (-1.1,0) circle (0.21);
\draw[thin] (-1.1-0.21,0) arc (180:360:0.21 and 0.105);
\draw[dashed] (-1.1+0.21,0) arc (0:180:0.21 and 0.105);
\shade[ball color=gray!10, opacity=0.2] (-1.1,0) circle (0.21);
\draw[thin] (-1.1,0) circle (0.21);
\draw[dashed] (-1.1,0) circle (0.43);
\draw[thin] (-1.1-0.43,0) arc (180:360:0.43 and 0.29);
\draw[dashed] (-1.1+0.43,0) arc (0:180:0.43 and 0.29);
\shade[ball color=gray!10, opacity=0.2] (-1.1,0) circle (0.43);
\draw[thin] (-1.1,0) circle (0.43);
\draw[dashed] (0,1.1) circle (0.21);
\draw[thin] (0-0.21,1.1) arc (180:360:0.21 and 0.105);
\draw[dashed] (0+0.21,1.1) arc (0:180:0.21 and 0.105);
\shade[ball color=gray!10, opacity=0.2] (0,1.1) circle (0.21);
\draw[thin] (0,1.1) circle (0.21);
\draw[dashed] (0,1.1) circle (0.43);
\draw[thin] (0-0.43,1.1) arc (180:360:0.43 and 0.29);
\draw[dashed] (0+0.43,1.1) arc (0:180:0.43 and 0.29);
\shade[ball color=gray!10, opacity=0.2] (0,1.1) circle (0.43);
\draw[thin] (0,1.1) circle (0.43);
\draw[dashed] (0,-1.1) circle (0.21);
\draw[thin] (0-0.21,-1.1) arc (180:360:0.21 and 0.105);
\draw[dashed] (0+0.21,-1.1) arc (0:180:0.21 and 0.105);
\shade[ball color=gray!10, opacity=0.2] (0,-1.1) circle (0.21);
\draw[thin] (0,-1.1) circle (0.21);
\draw[dashed] (0,-1.1) circle (0.43);
\draw[thin] (0-0.43,-1.1) arc (180:360:0.43 and 0.29);
\draw[dashed] (0+0.43,-1.1) arc (0:180:0.43 and 0.29);
\shade[ball color=gray!10, opacity=0.2] (0,-1.1) circle (0.43);
\draw[thin] (0,-1.1) circle (0.43);
\node[left] at (6.2,6.1) {\scalebox{0.98}{Euclidean}};
\draw[->, green!60!black, line width=0.3pt]
  (8,3.3) to[out=-165,in=45] (7.5,1.5);
\node[above, text=green!60!black]
  at (8.1,3.3) {\scalebox{0.98}{conic gluing region}};
\draw[->, blue!60!, line width=0.3pt]
  (7.8,-3.6) to[out=165,in=-45] (7.4,-0.5);
\node[below, text=blue!60!]
  at (7.9,-3.6) {\scalebox{0.98}{boosted Kerr data}};  
\draw[->, red!70!black, line width=0.3pt]
  (4.1,-2.4) to[out=45,in=-135] (4.5,-0.2);
\node[below, text=red!70!black]
  at (4.1,-2.4) {\scalebox{0.98}{short pulse core}}; 
\end{tikzpicture}
\end{center}
\caption{An illustration for Theorem \ref{maintheorem}. Each collapsing component is supported in a disjoint conic sector $C_{\om_I,\th_I}(y_I)$. The data are exactly Euclidean inside $B_{(1-2\de_I)R_I}(\cb_I)$, coincide with a boosted Kerr initial data set with prescribed ADM parameters $(\E_I,\P_I,\J_I)$ in each conic region $B_{32R_I}^c(\cb_I)\cap\big(C_{\om_I,\frac12\th_I}(y_I)\cup B_{\frac12}(y_I)\big)$. The innermost short-pulse core replaces the Kerr interior and gives rise to a trapped surface in the future domain $D^+(B_{R_I}(\cb_I))$.}
\label{fig:mainthm}
\end{figure}
From the perspective of the \emph{final state conjecture}, Theorem \ref{maintheorem} may be viewed as a proposal for admissible multi-component collapsing initial data configurations in vacuum general relativity. A central ingredient of the final state picture is the nonlinear stability of the Schwarzschild and Kerr families of black hole spacetimes \cite{DHRT,GKS,KS,KS:Kerr1,KS:Kerr2,KS:main,Shen}. We refer to \cite{KSurvey} for a detailed discussion of these developments and their role in the final state conjecture.

Existing constructions of multi-black-hole initial configurations based on gluing and related methods (e.g.\ \cite{CCI,CM,Hintz}) provide a complementary class of examples, but they address settings in which black holes are already present on the initial slice. By contrast, the present work focuses on the dynamical formation of multiple black holes from completely regular Cauchy data free of trapped surfaces.

More precisely, the initial data constructed here contain no trapped surfaces and no black hole regions initially, while each collapsing component is arranged so that a trapped surface forms in its future domain of dependence. At the same time, the collapsing regions are equipped with independently prescribed ADM energy, linear momentum, and angular momentum parameters, subject only to the timelike condition $\E>|\P|$. In this sense, Theorem \ref{maintheorem} produces a relativistic collapsing $N$--body family: each component behaves, at the level of conserved quantities, like a massive spinning particle in special relativity, but the data evolve according to the fully nonlinear Einstein vacuum equations.

A basic question is whether the maximal future development of such data can contain multiple black holes with distinct asymptotic parameters, rather than merging into a single black hole. We do not study the long-time evolution here; however, the interpretation of $N$--body suggests a concrete two-body dichotomy in the weak interaction regime. When two components are well separated and their masses are small relative to the separation scale, their motion may be approximated by special relativistic kinematics, with interactions modeled by a Newtonian potential. This hybrid description yields an explicit escape threshold in terms of the conserved energy-momentum pairs and the initial separation, which we record below in a one-dimensional setting.

Let $(\E_1,\P_1)$ and $(\E_2,\P_2)$ satisfy $\E_i>|\P_i|$ for $i=1,2$ and we denote
\[
m_i:=\sqrt{\E_i^2-|\P_i|^2}.
\]
Assume a one-dimensional motion along the separation axis with opposite directions and initial separation $d_{12}>0$. We introduce the hybrid total energy
\[
\E_{\mathrm{tot}}(d):=\E_1(d)+\E_2(d)-\frac{m_1m_2}{d},
\]
where $\E_i(d)$ denotes the relativistic energy of the $i$--th body at separation $d$. Conservation of $\E_{\mathrm{tot}}$ between $d=d_{12}$ and $d=\infty$ yields
\begin{equation*}
\E_1+\E_2-\frac{m_1m_2}{d_{12}}=\E_1(\infty)+\E_2(\infty).
\end{equation*}
The threshold between escape and merger corresponds to the situation in which each body has a nonnegative kinetic energy at infinity, that is, $E_i(\infty)\ge m_i$ for $i=1,2$. Then, the escape condition takes the following form:\footnote{In the nonrelativistic regime $|P_i|\ll m_i$, this reduces to the Newtonian criterion
\[
\frac{|\P_1|^2}{2m_1}+\frac{|\P_2|^2}{2m_2}\geq\frac{m_1m_2}{d_{12}},
\]
so \eqref{eq:intro-escape} provides the expected relativistic refinement of the classical escape condition.}
\begin{equation}\label{eq:intro-escape}
\E_1+\E_2-m_1-m_2 \ge \frac{m_1 m_2}{d_{12}} .
\end{equation}

Motivated by the explicit escape condition \eqref{eq:intro-escape}, and by the expectation that subextremal Kerr spacetimes describe the dynamically stable vacuum black hole end states, we formulate the following conjectural two-body escape/merger dichotomy for the maximal future development of the initial data produced by Theorem \ref{maintheorem}.
\begin{conj}[Two-body escape/merger threshold]\label{conj:two-body-escape}
Fix
\[
(\E_1,\P_1,\J_1),\ (\E_2,\P_2,\J_2)\in\RRR_+\times\RRR^3\times\RRR^3,
\qquad
\E_i>|\P_i|,
\]
and set $m_i:=\sqrt{\E_i^2-|\P_i|^2}$. Assume in addition the subextremality condition
\[
|\J_i|<m_i^2,\qquad i=1,2.
\]
Assume the linear momenta are collinear and oppositely directed. Let $d_{12}>0$ denote the Euclidean separation between the two centers in Theorem \ref{maintheorem}, and let $(\RRR^3,g,k)$ be the corresponding initial data. Then, there exists a universal constant $\ka>0$ such that, for $d_{12}$ large compared to the gluing scales and the short-pulse parameters, the following holds:
\begin{enumerate}
\item[\emph{(i)}] \emph{Escape.}
If
\[
d_{12}\;\geq\;\ka\,\frac{m_1m_2}{(\E_1+\E_2)-(m_1+m_2)},
\]
then the maximal future development contains two disjoint black hole regions, each asymptotic to a Kerr spacetime with ADM parameters close to $(\E_i,\P_i,\J_i)$.
\item[\emph{(ii)}] \emph{Merger.}
If
\[
d_{12}\;\leq\;\ka^{-1}\,\frac{m_1m_2}{(\E_1+\E_2)-(m_1+m_2)},
\]
then the future event horizon is connected.
\end{enumerate}
\end{conj}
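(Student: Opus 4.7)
The overall strategy is to treat the configuration in the large-separation, weak-interaction regime as an effective relativistic two-body problem, in which each collapsing core evolves approximately as an isolated boosted Kerr exterior coupled to the other through a post-Newtonian gravitational interaction whose leading piece is exactly the potential $m_1 m_2/d$ appearing in the heuristic derivation of \eqref{eq:intro-escape}. The escape or merger alternative is then extracted from long-time control of the relative motion of the two centers of mass.

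For the escape case, I would first apply Theorem \ref{maintheorem} to produce, in each future domain $D^+(B_{R_I}(\cb_I))$, a short-pulse collapsing region whose own evolution, by the nonlinear stability of subextremal Kerr \cite{DHRT,GKS,KS:main,KS:Kerr1,KS:Kerr2,Shen}, settles down to a Kerr exterior with ADM parameters close to $(\E_I,\P_I,\J_I)$, provided no further trapped surface intrudes nearby. The subextremality hypothesis $|\J_i|<m_i^2$ is exactly what is required to invoke these stability theorems. Under the inequality in (i), a post-Newtonian reduction of the relative motion in the effective ADM metric gives strictly positive kinetic energy at infinity, so the two cores separate linearly in time. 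A domain-of-dependence argument, combined with the conic localization already used in Theorem \ref{maintheorem} in the spirit of \cite{CS,MaoTao}, confines each component to its own cone $C_{\om_I,\th_I}(y_I)$, and the Kerr stability theorems applied in each conic sector separately yield two disjoint Kerr-like asymptotic ends.

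For the merger case, the complementary inequality forces the two centers to fall toward one another under the effective Newtonian attraction and to come within a distance of order $m_1+m_2$ of each other in finite proper time. At that stage I would close the argument by applying a Christodoulou--Klainerman--Rodnianski short-pulse formation of trapped surfaces \cite{Chr,AnLuk,Chen-K,KLR,kr} on an incoming null cone large enough to enclose both cores, producing a single common closed trapped surface. Standard topological censorship, together with the final state picture reviewed in \cite{KSurvey}, then forces the future event horizon enclosing this common trapped surface to be connected.

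The principal obstacle, and the reason the statement is phrased as a conjecture, is the rigorous justification of the effective two-body dynamics over the long time needed to reach either the Kerr decay regime (escape) or the strong-field collision regime (merger). Concretely, one needs (i) a controlled post-Newtonian expansion in vacuum general relativity whose errors remain below the gravitational binding $m_1 m_2/d$ throughout the relevant timescale; (ii) a variant of the Kerr stability theorems tolerating a slowly decaying external source representing the second body; and (iii) for the merger case, a trapped-surface-formation result robust enough to apply on a strongly curved two-body background rather than on a perturbation of Minkowski. Each of these is a substantial open problem, and the undetermined constant $\ka$ in the statement precisely measures the gap between the validity thresholds of these three ingredients.
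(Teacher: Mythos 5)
This statement is a \emph{conjecture}, and the paper offers no proof of it. The authors explicitly write, immediately before formulating it, ``We do not study the long-time evolution here''; the only support given is the heuristic one-dimensional hybrid energy argument leading to \eqref{eq:intro-escape}, which treats each body as a special relativistic point particle interacting through a Newtonian potential $m_1 m_2/d$. There is therefore no ``paper's own proof'' to compare against, and your proposal is correct in not claiming to supply one.

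Your heuristic reasoning (relativistic kinematics plus effective Newtonian binding, conservation of hybrid energy, positive kinetic energy at infinity as escape criterion) is precisely the motivation the paper gives for the threshold, so the starting point matches. The proof sketch you append — Kerr stability for the isolated cores after escape, short-pulse trapped surface formation on an enveloping cone for merger, post-Newtonian control of the intermediate dynamics — is a reasonable outline of how such a theorem \emph{might} eventually be proved, and you honestly flag the three major missing ingredients. Two additional cautions worth noting beyond your list: the Kerr stability theorems you cite are stated for perturbations of a single Kerr exterior, with no external body, and adapting them to a slowly varying tidal field from a second source is itself an open problem, not merely a ``variant''; and in the merger direction, producing one closed trapped surface enclosing both cores does not by itself yield connectedness of the future event horizon (one must also rule out a second, disconnected horizon component), so that final step needs a genuine argument rather than an appeal to the final-state picture. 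These gaps are in the same spirit as what you already acknowledge, and they are the reason the paper stops at a conjecture.
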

The remainder of the paper is organized as follows. Section \ref{secgluing} recalls the definition of ADM charges and the obstruction-free gluing results of \cite{MOT,MaoTao}. Section \ref{sec:Kerr} computes the localized ADM charges of the Kerr metric $\g_{m,a}$ in Kerr-Schild coordinates. Section \ref{sec:Poincare} shows that the $\ISO^+(1,3)$--orbit of $\g_{m,a}$ realizes all boosted Kerr data with prescribed ADM charges by identifying two Casimirs. Section \ref{sec:Cauchy} proves Theorem \ref{maintheorem} using the above ingredients and the well-prepared short-pulse slice construction from \cite{ShenWan2}.
\paragraph{Acknowledgments.} The authors thank Elena Giorgi, Sergiu Klainerman and J\'er\'emie Szeftel for their interest in this work. J.W. is supported by ERC-2023 AdG 101141855 BlaHSt.
\section{Initial data gluing}\label{secgluing}
Let $(\Si,g,k)$ be an initial data set that solves \eqref{Econstraint}. Introduce the new variables
\begin{align}
\begin{split}\label{dfhdfpi}
h_{ij}:= g_{ij}-e_{ij}-\de_{ij}\,\tr_e(g-e),\qquad\quad\pi_{ij}:= k_{ij}-\de_{ij}\,\tr_e k .
\end{split}
\end{align}
All traces, index increases, and contractions in the following are taken with respect to the Euclidean metric $e$. The inverse relations are
\begin{align}
\begin{split}\label{inversegk}
g_{ij}= \de_{ij}+h_{ij}-\frac12\de_{ij}\tr_e h,
\qquad\quad
k_{ij}= \pi_{ij}-\frac12\de_{ij}\tr_e\pi.
\end{split}
\end{align}
In these variables, the Einstein vacuum constraints \eqref{Econstraint} can be written schematically as
\begin{align}
\begin{split}\label{schematicconstraint}
P(h,\pi)=\Phi(h,\pi)
\end{split}
\end{align}
where $P$ is the leading linear part 
\[
P(h,\pi):=(\pr_i\pr_j h^{ij},\pr_i\pi^{ij}),
\qquad
\Phi(h,\pi):=(M(h,\pi),N^j(h,\pi)),
\]
and $\Phi$ collects quadratic in $(h,\pi,\pr h,\pr\pi)$:
\[
M(h,\pi)=h\c\pr^2h+\pr h\c\pr h+\pi\c\pi,
\qquad
N^j(h,\pi)=(h\c\pr\pi)^j+(\pr h\c\pi)^j .
\]
\subsection{Definition of charges relative to a domain}
Let $(g,k)$ be an asymptotically flat data on $\RRR^3$, written in canonical coordinates $x^i$ with a Euclidean background $e$. For any closed surface $S\subset\RRR^3$, we define localized ADM charges (fluxes) by
\begin{align}
\begin{split}\label{EPCJdf}
    \E[(g,k);S]
    &:=\frac12\int_{S}(\pr_ig_{ij}-\pr_jg_{ii})\nu^j\,dS,\\
    \P_i[(g,k);S]
    &:=\int_{S}(k_{ij}-\de_{ij}\tr_{e}k)\nu^j\,dS,\\
    \C_l[(g,k);S]
    &:=\frac12\int_{S}\Big(x_l(\pr_ig_{ij}-\pr_jg_{ii})
      -\de_{il}(g-e)_{ij}
      +\de_{jl}(g-e)_{ii}\Big)\nu^j\,dS,\\
    \J_l[(g,k);S]
    &:=\int_{S}(k_{ij}-\de_{ij}\tr_ek)Y_l^i\nu^j\,dS,
\end{split}
\end{align}
where $\nu$ denotes the outward Euclidean unit normal to $S$ and $Y_l^i:={\in_{lj}}^i x^j$. We collect these into the charge vector
\begin{align}\label{dfQ}
    \Q[(g,k);S]:=(\E,\P_1,\P_2,\P_3,\C_1,\C_2,\C_3,\J_1,\J_2,\J_3)[(g,k);S].
\end{align}
For $S=\pr B_r$, the ADM charges are defined by
\[
\Q_{ADM}[(g,k)]
:=\lim_{r\to\infty}\Q[(g,k);\pr B_r],
\]
whenever the limit exists.

Since we will work on annular regions, we also introduce averaged charges. Fix $\eta\in C_c^\infty(0,\infty)$ to satisfy
\begin{equation*}
\supp\eta\subset(1,2),
\qquad
\int_1^2\eta(r)\,dr=1,
\end{equation*}
and define, for $r>0$,
\begin{align*}
\eta_r(r'):=r^{-1}\eta(r^{-1}r').
\end{align*}
For $\Q=(\E,\P,\C,\J)$, we set for  $A_r:=B_{2r}\setminus\ov{B_r}$ that
\begin{align}
\begin{split}\label{eq:charge-avg}
\Q[(g,k);A_r]
:=\int_r^{2r}\eta_r(r')\Q[(g,k);\pr B_{r'}]\,dr'.
\end{split}
\end{align}
\subsection{Gluing theorems}
In this subsection, we record a rescaled annular gluing theorem and a conic gluing result for the vacuum constraint equations, adapted from Mao-Oh-Tao \cite{MOT} and Mao-Tao \cite{MaoTao} in the form needed here.

We first state a rescaled annular gluing theorem.
\begin{Aglu}[c.f. Theorem 1.7 of \cite{MOT}]\label{MOT1.7}
Given $s>\frac{3}{2}$, $\Ga>1$ and $r>0$, there exist constants
$\ep_{o}=\ep_{o}(s,\Ga)>0$, $\mu_o=\mu_o(s,\Ga)>0$ and
$C_o=C_o(s,\Ga)>0$ such that the following holds.
Let $(g_{in},k_{in})\in H^s\times H^{s-1}(A_r)$ and
$(g_{out},k_{out})\in H^s\times H^{s-1}(A_{32r})$ be solutions of
\eqref{Econstraint}.
Define $\De\Q=(\De\E,\De\P,\De\C,\De\J)\in\RRR^{10}$ by
\begin{equation}\label{eq:Delta-Qr}
\De\Q
=\Q[(g_{out},k_{out});A_{32r}]
-\Q[(g_{in},k_{in});A_r].
\end{equation}
Assume
\begin{align}
\begin{split}\label{eq:obs-free-unit:EPr}
\De\E &>|\De\P|,
\qquad
\frac{\De\E}{\sqrt{(\De\E)^2-|\De\P|^2}}<\Ga,\\
r^{-1}\De\E &<\ep_o^2,
\qquad
r^{-1}(|\De\C|+|\De\J|)<\mu_o\De\E,
\end{split}
\end{align}
and
\begin{equation}\label{eq:obs-free-unit:datar}
r^{-2}\|g_{in}-e\|_{H^s(A_r)}^2
+\|k_{in}\|_{H^{s-1}(A_r)}^2
+r^{-2}\|g_{out}-e\|_{H^s(A_{32r})}^2
+\|k_{out}\|_{H^{s-1}(A_{32r})}^2
<\mu_o\De\E .
\end{equation}
Then there exists $(g,k)\in H^s\times H^{s-1}(B_{64r}\setminus\ov{B_r})$
solving \eqref{Econstraint} such that
\begin{equation}\label{eq:obs-free-gluingr}
(g,k)=(g_{in},k_{in})\ \text{on }A_r,
\qquad
(g,k)=(g_{out},k_{out})\ \text{on }A_{32r},
\end{equation}
and
\begin{equation}\label{eq:obs-free-concr}
r^{-2}\|g-e\|_{H^s(B_{64r}\setminus\ov{B_r})}^2
+\|k\|_{H^{s-1}(B_{64r}\setminus\ov{B_r})}^2
<C_o\De\E .
\end{equation}
\end{Aglu}
\begin{proof}
Applying \cite[Theorem 1.7]{MOT} to the rescaled data $(g^{(r)}(x),k^{(r)}(x))=(g(rx),rk(rx))$, this concludes the proof.
\end{proof}
For the conic gluing result, we need the following right inverse operator.  We now define the $b$--Sobolev space.
\begin{df}\label{dfbsobolev}
For $s\in\NNN$, the $b$--Sobolev space $H^s_b(\RRR^3)$ is defined by the norm
\[
\|u\|^2_{H_b^s(\RRR^3)}:=\sum_{k\leq s}\|\xja^k\nab^k u\|_{L^2(\RRR^3)}^2.
\]
We extend the definition to $s\in\RRR$ by duality and interpolation. For $\ell\in\RRR$, we set $H_b^{s,\ell}:=\xja^{-\ell}H_b^s$. For our purpose, it's convenient to set $\XX_b^{s,\de} :=H_b^{s,\de}\times H_b^{s-1,\de+1}$.
\end{df}
\begin{prop}[Proposition 9 in \cite{MaoTao}]\label{MaoTao9}
Let
\begin{equation}\label{dfOmint}
\Om_{int} :=\ov{(C_{\om,\th}\cup B_1)\setminus(C_{\om,\th_0}\cup B_{\frac12})}.
\end{equation}
There exists a solution operator
\[
S_{int}:H_b^{s-2,\de+2}(\Om_{int}) \to \XX_b^{s,\de}(\Om_{int}) :=H_b^{s,\de}(\Om_{int})\times H_b^{s-1,\de+1}(\Om_{int}),
\]
for $s\in\RR$ and $\de<-\frac12$, such that for all
$f\in C_c^\infty(\Om_{int})$,
\[
\supp(S_{int}f)\subseteq\Om_{int}, \qquad PS_{int}f=f .
\]
\end{prop}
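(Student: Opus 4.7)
The plan is to construct $S_{int}$ by a duality argument combined with a weighted coercivity estimate for the formal adjoint $P^*$ on the conic annulus $\Om_{int}$; the support condition $\supp(S_{int}f)\subseteq\Om_{int}$ then comes essentially for free from the dual construction.

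First, I would compute $P^*$. Pairing $P(h,\pi)=(\pr_i\pr_j h^{ij},\pr_i\pi^{ij})$ with a scalar-vector test pair $(N,X)$ and integrating by parts yields
\[
P^*(N,X)=\bigl(\pr_i\pr_j N,\ -\tfrac12(\pr_iX_j+\pr_jX_i)\bigr),
\]
the linearized constraint adjoint around Euclidean space. Its global kernel over $\RRR^3$ is the ten-dimensional space of Poincar\'e KIDs ($N$ affine linear, $X$ a Euclidean Killing field), in one-to-one correspondence with the ten entries of $\Q$ in \eqref{dfQ}, and the source of every obstruction in the classical gluing problem.

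Second, I would establish a weighted coercivity estimate on $\Om_{int}$. Since the $L^2$-dual of $H_b^{s-2,\de+2}$ is $H_b^{2-s,-\de-2}$, both $N$ and $X$ naturally live in $H_b^{2-s,-\de-2}(\Om_{int})$, and I would prove
\[
\|(N,X)\|_{H_b^{2-s,-\de-2}(\Om_{int})}\les\|P^*(N,X)\|_{H_b^{-s,-\de}(\Om_{int})}.
\]
The two analytic ingredients are a weighted Hessian-to-$L^2$ bound for the scalar part $N\mapsto\pr^2N$ and a weighted Korn inequality for the Killing operator $X\mapsto\tfrac12(\pr_iX_j+\pr_jX_i)$, both phrased in the $b$-Sobolev framework of Definition \ref{dfbsobolev} and adapted to the conic geometry of $\Om_{int}$. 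The decisive point is that for $\de<-\tfrac12$, the dual weight $-\de-2>-\tfrac32$ is too decaying to accommodate any Poincar\'e KID (constants require weight $<-\tfrac32$, linear growth requires weight $<-\tfrac52$), even on a conic region reaching infinity. Hence $P^*$ is genuinely injective on the relevant test class and no finite-dimensional correction is needed.

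Third, I would deduce the right inverse by duality. Given $f\in H_b^{s-2,\de+2}(\Om_{int})$, define
\[
\La_f\bigl(P^*(N,X)\bigr):=\langle f,(N,X)\rangle
\]
on the range of $P^*$ inside $\bigl(\XX_b^{s,\de}(\Om_{int})\bigr)^*$. The coercivity estimate shows $\La_f$ is well-defined and bounded with norm $\les\|f\|$. By Hahn-Banach, $\La_f$ extends to the full dual space, and Riesz representation produces $(h,\pi)\in\XX_b^{s,\de}(\Om_{int})$ with $P(h,\pi)=f$ in distributions. Extending $(h,\pi)$ by zero outside $\Om_{int}$ yields an element of $\XX_b^{s,\de}$ with $\supp\subseteq\Om_{int}$, and I set $S_{int}f:=(h,\pi)$.

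The main technical obstacle is the weighted coercivity estimate for $P^*$ on $\Om_{int}$. One must handle simultaneously the conic piece extending to infinity, where the $b$-calculus weight count is essential and where separation-of-variables along the radial direction is the natural tool, and the inner boundary near the unit sphere, where the geometry is more classical; the estimate must also be uniform in the aperture parameters that eventually appear when gluing disjoint Kerr sectors. Verifying that $\de<-\tfrac12$ is precisely the threshold beyond which KIDs fall out of the dual weighted space is the key mechanism that makes the operator $P$ obstruction-free in this conic setting.
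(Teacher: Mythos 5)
The paper does not prove this proposition; it is imported verbatim as Proposition~9 of Mao--Tao \cite{MaoTao} and used as a black box in the proof of the Conic Gluing Theorem. So there is no in-paper proof to match against. Comparing instead with Mao--Tao's actual argument: your proposal is a Corvino/Carlotto--Schoen/Chru\'sciel--Delay variational construction (weighted coercivity for $P^*$, Hahn--Banach, Riesz representation), whereas Mao--Tao build $S_{int}$ directly and constructively, decomposing $P$ into scalar div--div and vector div pieces and inverting each by explicit Bogovskii-type integral operators along rays of the cone, which manifestly have the required support and $b$-mapping properties. The two routes share the same arithmetic threshold $\de<-\tfrac12$ (so that no KID lies in the dual weighted space, as you correctly compute), but the variational route requires you to actually \emph{prove} the weighted coercivity of $P^*$ on the unbounded conic slab, with constants depending on the aperture $\th-\th_0$, via weighted Poincar\'e/Hardy/Korn inequalities adapted to the $b$-calculus; you flag this as the main obstacle but do not address it, and historically this estimate was the delicate point in Carlotto--Schoen. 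Mao--Tao's construction bypasses it entirely, which is a real advantage.

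There is also a genuine gap in your support argument. The $b$-weight $\xja^{\de}$ controls behavior at the conic end only; it says nothing about the \emph{inner} boundary $\pr\Om_{int}$ near $\pr B_{1/2}\cup\pr C_{\om,\th_0}$. A variational solution $(h,\pi)$ on $\Om_{int}$ obtained as you describe has no reason to vanish (let alone to infinite order) along $\pr\Om_{int}$, so ``extend by zero'' produces a distribution on $\RRR^3$ for which $P(h,\pi)$ acquires singular boundary terms, and $PS_{int}f=f$ fails even for $f\in C_c^\infty(\Om_{int})$. In the variational framework this is repaired by introducing a second weight $\rho$ degenerating rapidly at $\pr\Om_{int}$ (e.g.\ $\rho\sim\exp(-1/d(\cdot,\pr\Om_{int}))$), proving coercivity in the doubly weighted space, and reading off that the minimizer $(h,\pi)=\rho\,P^*(u)$ inherits the boundary decay; without that ingredient your construction yields a solution on $\Om_{int}$ but not a solution operator with the stated support property. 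You should either add the Corvino-type boundary weight, or switch to Mao--Tao's direct construction which has the support built in by design.
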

Finally, we state and prove a conic gluing theorem adapted to our setting, which is a slight modification of \cite[Theorem 2]{MaoTao} by Mao-Tao.
\begin{Cglu}\label{Conicgluingthm}
Let $0<\th_0<\th<\frac{\pi}{2}$, $\de<-\frac12$, and $\om\in\SSS^2$.
Suppose $(g_0,k_0)$ solves \eqref{Econstraint} in $C_{\om,\th}\cup B_1$ and
satisfies
\begin{equation}\label{g0k0est}
\|(g_0-e,k_0)\|_{\XX_b^{s,\de}(\Om_{int})}\le\ep,
\end{equation}
for $\ep>0$ sufficiently small.
Then there exists a solution $(g,k)$ of \eqref{Econstraint} on $\RRR^3$ such that
\[
(g,k)=
\begin{cases}
(g_0,k_0) & \mbox{ in }\; C_{\om,\th_0}\cup B_{\frac12},\\
(e,0) & \mbox{ in }\;\, \RRR^3\setminus(C_{\om,\th}\cup B_1),
\end{cases}
\]
and
\begin{equation}\label{g-ekest}
\|(g-e,k)\|_{\XX_b^{s,\de}(\Om_{int})}\les\ep .
\end{equation}
\end{Cglu}
\begin{proof}
    Let $\chi$ be a cut-off function that
    \begin{align*}
        \chi(x)=\left\{\;\begin{aligned}1 \quad\; &\mbox{ in }\; C_{\om,\th_0}\cup B_\frac{1}{2},\\
        0 \quad\; &\mbox{ in }\;\,\RRR^3\setminus(C_{\om,\th}\cup B_1).\end{aligned}\right.
    \end{align*}
    Let $(h_0,\pi_0)$ be associated to $(g_0,k_0)$ by \eqref{dfhdfpi}. We aim to find $(\hti,\pit)\in\XX^{s,\de}_b(\Om_{int})$ so that the following holds:
    \begin{align}\label{fixedpoint}
        P(\chi h_0+\hti,\chi\pi_0+\pit)=\Phi(\chi h_0+\hti,\chi\pi_0+\pit).
    \end{align}
    Let $C_0>0$ be a fixed constant, we define the following space
    \begin{align*}
        \XX:=\left\{(\hti,\pit)\in\XX_b^{s,\de}(\Om_{int})\Big/\,\big\|(\hti,\pit)\big\|_{\XX_b^{s,\de}(\Om_{int})}\leq C_0\ep\right\},
    \end{align*}
    and the following operator on $\XX$:
    \begin{align*}
        T(\hti,\pit):=S_{int}\left(\Phi(\chi h_0+\hti,\chi\pi_0+\pit)-P(\chi h_0,\chi\pi_0)\right).
    \end{align*}
    Thus, \eqref{fixedpoint} reduces to the following fixed point problem:
    \begin{align}\label{Tfixedpoint}
        (\hti,\pit)=T(\hti,\pit).
    \end{align}
    For any $(\hti,\pit)\in\XX$, we have from \eqref{g0k0est}
    \begin{align*}
        \|P(\chi h_0,\chi\pi_0)\|_{H_b^{s-2,\de+2}(\Om_{int})}\leq C\ep,\qquad\|\Phi(\chi h_0+\hti,\chi\pi_0+\pit)\|_{H_b^{s-2,\de+2}(\Om_{int})}\les\ep^2,
    \end{align*}
    where $C>0$ is independent of $C_0$. Applying Proposition \ref{MaoTao9}, we infer
    \begin{align*}
    \left\|S_{int}\left(\Phi(\chi h_0+\hti,\chi\pi_0+\pit)-P(\chi h_0,\chi\pi_0)\right)\right\|_{\XX_b^{s,\de}(\Om_{int})}\leq C\ep.
    \end{align*}
    Thus, we have for $C_0$ large enough that $T(\XX)\subseteq\XX$. Next, we have from Proposition \ref{MaoTao9}
\begin{align*}
\left\|T(h_1,\pi_1)-T(h_2,\pi_2)\right\|_{\XX_b^{s,\de}(\Om_{int})}
&=\left\|S_{int}\left(\Phi(\chi h_0+h_1,\chi\pi_0+\pi_1)
-\Phi(\chi h_0+h_2,\chi\pi_0+\pi_2)\right)\right\|_{\XX_b^{s,\de}(\Om_{int})}\\
&\les
\left\|\Phi(\chi h_0+h_1,\chi\pi_0+\pi_1)
-\Phi(\chi h_0+h_2,\chi\pi_0+\pi_2)\right\|_{H_b^{s-2,\de+2}(\Om_{int})}\\
&\les
\ep\,\|(h_1-h_2,\pi_1-\pi_2)\|_{\XX_b^{s,\de}(\Om_{int})}.
\end{align*}
    Hence, $T$ is a contraction map on $\XX$. By the Banach fixed point theorem, there exists a unique $(\hti_*,\pit_*)\in\XX$ such that \eqref{Tfixedpoint} holds. We define
    \begin{align*}
        (h,\pi)=\left\{\begin{aligned}
            (h_0,\pi_0),\qquad &\mbox{ in }\; C_{\om,\th_0}\cup B_\frac{1}{2},\\
            (\chi h_0+\hti_*,\chi\pi_0+\pit_*),\qquad&\mbox{ in }\; \Om_{int},\\
            (0,0),\qquad&\mbox{ in }\, \RRR^3\setminus(C_{\om,\th}\cup B_1).
        \end{aligned}\right.
    \end{align*}
    Let $(g,k)$ be defined by $(h,\pi)$ via \eqref{inversegk}. Since \eqref{g-ekest} follows directly from the construction, this concludes the proof.
\end{proof}
\section{Localized ADM charges for Kerr initial data}\label{sec:Kerr}
In Kerr-Schild coordinates $(t,x,y,z)$, the Kerr metric $\g=\g_{m,a}$ takes the following form:
\begin{align}\label{defgmunu-Kerr}
\g_{\mu\nu}
=\etabf_{\mu\nu}+2H\ellbf_\mu\ellbf_\nu,
\end{align}
where $\etabf$ is the Minkowski metric. The general Kerr-Schild identities and the associated ADM decomposition used below are recorded in Appendix \ref{app:KS}. More precisely, the Kerr coefficients are given by\footnote{See, for instance, \cite[Section 2.1]{CCI} or \cite[Appendix A]{MOT}.}
\begin{align}\label{KerrSchildcomponents}
H=\frac{m\rt^3}{\rt^4+a^2z^2},
\qquad\quad\ellbf:=\left(1,\frac{\rt x+ay}{\rt^2+a^2},\frac{\rt y-ax}{\rt^2+a^2},\frac{z}{\rt}\right),
\end{align}
where $\rt=\rt(x,y,z)>0$ is defined implicitly by
\begin{align}\label{dfrt}
\frac{x^2+y^2}{\rt^2+a^2}+\frac{z^2}{\rt^2}=1.
\end{align}
\begin{df}\label{KerrO}
For a tensor field $X$, we write $X=\O^p_q$ if
\[
|\pr^l X|
\les
\frac{(m+|a|)^p}{r^{q+l}},
\qquad
\forall\,l\in\NNN,
\]
where $r:=\sqrt{x^2+y^2+z^2}$ is the Euclidean radius.
\end{df}
The purpose of this section is to compute the localized ADM charges of the initial data induced by $\g_{m,a}$ on $\Si_0:=\{t=0\}$. These localized fluxes capture the leading special relativistic charges of Kerr while retaining precise control of lower-order error terms, which will be essential for describing their behavior under asymptotic Poincar\'e transformations in Section \ref{sec:Poincare}.
\begin{prop}\label{KerrADM}
Let $(\Si_0,g,k)=(\RRR^3,g_{m,a},k_{m,a})$ be the initial data induced by $\g_{m,a}$ on $\Si_0$. Then, the ADM fluxes on the coordinate spheres $\pr B_r\subset\Si_0$ satisfy
\begin{align*}
\E[(g,k);\pr B_r]
&=8\pi m+\O_2^3,
\qquad
\P[(g,k);\pr B_r]=\O_2^3,
\\
\C[(g,k);\pr B_r]
&=\O_1^3,
\qquad\qquad\quad\,\,
\J[(g,k);\pr B_r]=8\pi am\,\ev_z+\O_1^3.
\end{align*}
In particular, the leading terms coincide with the special relativistic energy and angular momentum of a spinning particle of mass $m$ and spin $am\ev_z$, while the remaining quantities decay at the expected rates.
\end{prop}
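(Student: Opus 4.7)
The plan is to extract the leading special-relativistic charges of Kerr from explicit asymptotic expansions of the Kerr-Schild fields $H$ and $\ellbf$ for $r\gg m+|a|$, substitute these into the flux formulas \eqref{EPCJdf}, and verify that the residuals fall into the claimed $\O^p_q$ classes.

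\emph{Step 1 (Kerr-Schild expansions).} Solving the implicit relation \eqref{dfrt} perturbatively in $a^2/r^2$ gives $\rt = r + \O^2_1$, hence
\[
H \;=\; \frac{m}{r} + \O^3_3, \qquad \ellbf_0 = 1, \qquad \ellbf_i \;=\; \frac{x_i}{r} + \O^1_2\ \ (i=1,2,3).
\]
Using the ADM decomposition of Appendix \ref{app:KS}, namely $g_{ij} = \de_{ij} + 2H\ellbf_i\ellbf_j$ together with the expression of $k_{ij}$ as the symmetrized covariant derivative of the shift $\b_i = 2H\ellbf_i$ divided by the lapse (justified by stationarity $\pr_t g_{ij} = 0$), these produce the pointwise expansions
\[
g_{ij} - \de_{ij} \;=\; \frac{2m\,x_i x_j}{r^3} + \O^2_3, \qquad k_{ij} \;=\; k^{(S)}_{ij} + a\,k^{(K)}_{ij} + \O^3_3,
\]
where $k^{(S)}$ is the spherically symmetric Schwarzschild-type piece of size $m/r^2$ and $k^{(K)}$ is an axially symmetric rotational correction of size $m/r^3$ aligned with $\ev_z$.

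\emph{Step 2 (flux integrals).} Differentiating the leading metric expansion gives $\pr_i g_{ij} - \pr_j g_{ii} = 4m x_j/r^3 + \O^2_4$, so with $\nu^j = x^j/r$ one obtains
\[
\frac12\int_{\pr B_r}(\pr_i g_{ij} - \pr_j g_{ii})\nu^j\,dS \;=\; \frac{2m}{r^2}\cdot 4\pi r^2 + \O^3_2 \;=\; 8\pi m + \O^3_2.
\]
Since $k^{(S)}$ is spherically symmetric, its contribution to $\P_i = \int(k_{ij} - \de_{ij}\tr_e k)\nu^j\,dS$ and to $\J_l = \int(k_{ij} - \de_{ij}\tr_e k)Y_l^i\nu^j\,dS$ vanishes by angular averaging; only the rotational correction $a\,k^{(K)}$ survives, yielding $\J_z = 8\pi am + \O^3_1$, while $\J_1 = \J_2 = \O^3_1$ by axial symmetry about $\ev_z$. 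The linear momentum $\P$ loses its leading term by the same mechanism, and the surviving subleading terms give $\O^3_2$. For $\C_l$, the spherically symmetric leading part of $g - \de$ produces an integrand odd in $x_l$ whose average on $\pr B_r$ vanishes; subleading corrections then contribute $\O^3_1$.

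\emph{Main obstacle.} The chief difficulty is pure bookkeeping: every product, derivative, and spherical integration of $H$ and $\ellbf$ must be tracked so that the residuals land exactly in the stated $\O^p_q$ classes. The symmetry annihilations --- spherical for the Schwarzschild pieces, axial about $\ev_z$ for the rotational ones, and reflection parity in $x_l$ for the center-of-mass --- are what make the listed exponents sharp; without identifying them, the raw estimates from Step 1 would yield only weaker decay rates.
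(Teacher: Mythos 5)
Your overall plan is conceptually the same as the paper's: expand $H$ and $\ellbf$ in powers of $(m+|a|)/r$, form the flux integrands pointwise, and kill odd-parity leading terms upon integrating over $\pr B_r$. The leading terms $8\pi m$ and $8\pi am\,\ev_z$, and the vanishing by parity of the leading contributions to $\P$ and $\C$, all check out. The gap is in the residual bookkeeping, which is precisely where the proposition's sharpness lives.

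Your Step 1 establishes only $\pr_ig_{ij}-\pr_jg_{ii}=\frac{4mx_j}{r^3}+\O^2_4$. Integrating an $\O^2_4$ error over $\pr B_r$ (area $\sim r^2$) gives $\O^2_2$, yet you assert $8\pi m+\O^3_2$; the missing factor of $(m+|a|)$ is not accounted for, and it is not delivered by any of the three "symmetry annihilations" you list (spherical, axial about $\ev_z$, or reflection parity in $x_l$). It comes from algebraic cancellations that hold pointwise after contracting with $\nu^j$ or $\ellbf$: $H$ is an even function of $a$, while the $a$-linear piece $\ellbf^{(1)}=\bigl(\tfrac{ay}{r^2},-\tfrac{ax}{r^2},0\bigr)$ of $\ellbf$ satisfies $\ellbf^{(1)}_ix_i=0$, $\pr_i\ellbf^{(1)}_i=0$, and $\ellbf^{(1)}_i\pr_iH=0$ to leading order. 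These identities are exactly what Lemma \ref{preidentities} (and Lemma \ref{moreidentities}) records, promoting the naive $\O^2_4$ pointwise bound to $\O^3_4$ and hence the integral to $\O^3_2$. The same defect recurs for $\P$: your asserted $k=k^{(S)}+ak^{(K)}+\O^3_3$ integrates only to $\O^3_1$, not the stated $\O^3_2$; one must again show $\pk_i$ has an $\O^3_4$ residual after contracting with $\nu^j$, which is the content of Lemma \ref{expressionk} and Proposition \ref{computePJ}. Finally, the decomposition of $k$ into a Schwarzschild piece plus an axial $O(a)$ correction is asserted but never computed from the Kerr--Schild form; the paper works instead with the closed expression of Lemma \ref{expressionk} and expands component by component. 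Without these cancellations your scheme only yields $\E=8\pi m+\O^2_2$ and $\P=\O^3_1$, which are strictly weaker than the proposition.
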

\subsection{Localized ADM energy and center of mass}
We first compute the even-parity charges ($\E$ and $\C$), which depend only on the asymptotic behavior of the metric. Before computing the localized $\E[(g,k);\pr B_r]$ and $\C[(g,k);\pr B_r]$, we deduce the following basic identities, which will be used throughout this section.
\begin{lem}\label{preidentities}
Let $r:=\sqrt{x^2+y^2+z^2}$ be the Euclidean radius and let $\a:=(1+2H)^{-\frac{1}{2}}$ be the lapse function. Then, we have the following identities:
\begin{align*}
H&=\frac{m}{r}(1+\O_2^2),\qquad \ellbf_i\pr_iH=-\frac{m}{r^2}+\O_4^3,\qquad\quad \frac{x_j}{r}\pr_jH=-\frac{m}{r^2}+\O_4^3,\\
\frac{x_j\ellbf_j}{r}&=1+\O_2^2,\qquad\qquad\;\;\,\pr_i\ellbf_i=\frac{2}{r}+\O_3^2,\qquad\quad\;\,\ellbf_i\pr_i\ellbf_j \frac{x_j}{r}=\O_3^2,\\
\a&=1-\frac{m}{r}+\O_2^2.
\end{align*}
\end{lem}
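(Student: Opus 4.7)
The plan is to derive all seven identities algebraically from the closed-form expressions in \eqref{KerrSchildcomponents}, after first establishing an asymptotic expansion of the implicit radius $\rt$. I would begin by rewriting \eqref{dfrt} as the biquadratic $\rt^4+(a^2-r^2)\rt^2-a^2z^2=0$, whose positive root has the closed form $\rt^2=\tfrac12\bigl(r^2-a^2+\sqrt{(r^2-a^2)^2+4a^2z^2}\bigr)$. Taylor expansion yields $\rt=r+\O_1^2$, and implicit differentiation of \eqref{dfrt} gives the exact formulas
\[
\pr_x\rt=\frac{\rt^3 x}{\rt^4+a^2z^2},\qquad \pr_y\rt=\frac{\rt^3 y}{\rt^4+a^2z^2},\qquad \pr_z\rt=\frac{\rt(\rt^2+a^2)z}{\rt^4+a^2z^2},
\]
from which one reads off $\pr_i\rt=x_i/r+\O_2^2$, with higher derivatives estimated inductively.

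Using the collapse $(x^2+y^2)/(\rt^2+a^2)=1-z^2/\rt^2$ from \eqref{dfrt}, a direct calculation produces two clean exact identities that do most of the remaining work: $x_j\ellbf_j=\rt$ and $\ellbf^i\pr_i\rt=1$. The first, combined with $\rt=r+\O_1^2$, immediately gives the identity for $x_j\ellbf_j/r$. For the $H$-identities, the chain rule $\pr_i H=(\pr_\rt H)\pr_i\rt+\delta_{i3}\,\pr_z H$ applied to $H=m\rt^3/(\rt^4+a^2z^2)$ yields $\pr_\rt H=-m/\rt^2+\O_4^3$ and $\pr_z H\in\O_3^3$ (the $z$-derivative carries an explicit $a^2$ factor); combined with $\ellbf^i\pr_i\rt=1$ and $(x_j/r)\pr_j\rt=1+\O_2^2$ (which follows from the same collapse), this produces the leading $-m/r^2$ in $\ellbf_i\pr_i H$ and $(x_j/r)\pr_j H$ with the stated remainders. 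The identity $H=(m/r)(1+\O_2^2)$ is immediate from $\rt=r+\O_1^2$, and Taylor expanding $\a=(1+2H)^{-1/2}=1-H+\tfrac32 H^2-\cdots$ yields the last identity.

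For the remaining two identities, I would first verify the sharper statement $\ellbf^i\pr_i\ellbf_j\equiv 0$, which reflects the affine null geodesic property of $\ellbf$ in the Minkowski background and makes $\ellbf^i\pr_i\ellbf_j\cdot x_j/r\in\O_3^2$ trivial. This follows by differentiating the algebraic identities $\rt\ellbf_1-a\ellbf_2=x$ and $\rt\ellbf_2+a\ellbf_1=y$ along $\ellbf$ and using $\ellbf^i\pr_i\rt=1$: the result is a nondegenerate $2\times 2$ linear system in $\ellbf^i\pr_i\ellbf_{1,2}$ forcing both to vanish, while $\ellbf^i\pr_i\ellbf_3=0$ follows directly from $\ellbf_3=z/\rt$. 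For $\pr_i\ellbf_i=2/r+\O_3^2$, I would split $\ellbf_i=x_i/r+(\ellbf_i-x_i/r)$ and use $\pr_i(x_i/r)=2/r$; the key observation is that the linear-in-$a$ subleading pieces of $\ellbf_1-x/r$ and $\ellbf_2-y/r$ (namely $ay/r^2$ and $-ax/r^2$) produce cancelling contributions $\mp 2axy/r^4$ under $\pr_x$ and $\pr_y$, so what survives is quadratic in $a$, hence in $\O_3^2$. The main obstacle, though essentially bookkeeping, is uniformly controlling the $\O^p_q$ classes at all derivative orders as required by Definition \ref{KerrO}; since every identity reduces to algebra on explicit rational expressions, no analytic input is involved, but one must verify inductively that each additional derivative costs one power of $r$ without introducing new $(m+|a|)$-growth beyond what is claimed.
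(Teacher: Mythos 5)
Your proof is correct but follows a genuinely different route from the paper's, and in places yields sharper conclusions. The paper's Appendix B.1 argument proceeds by first expanding each $\ellbf_i$ as an explicit asymptotic series (formula \eqref{ellbfl}) and then computing every quantity by brute-force differentiation of those series. You instead organize the computation around four exact algebraic identities that follow from \eqref{dfrt} and \eqref{KerrSchildcomponents}: $x_j\ellbf_j=\rt$, $\ellbf_i\pr_i\rt=1$, $\rt\ellbf_1-a\ellbf_2=x$, $\rt\ellbf_2+a\ellbf_1=y$. These are indeed valid (the first two follow cleanly from $\frac{x^2+y^2}{\rt^2+a^2}=1-\frac{z^2}{\rt^2}$ and the explicit $\pr_i\rt$ you write down), and they let you reduce the $H$-identities to a one-variable chain rule and dispose of $\ellbf_i\pr_i\ellbf_j\,\tfrac{x_j}{r}$ via the \emph{exact} vanishing $\ellbf_i\pr_i\ellbf_j=0$. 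That last observation is a genuine strengthening over both this lemma and the first identity of Lemma \ref{moreidentities}: the paper only proves $\ellbf_i\pr_i\ellbf_j=\O_3^2$ by expansion, whereas your $2\times2$ linear system argument (with determinant $\rt^2+a^2\neq0$) shows it is identically zero, as one should expect from the affine null geodesic structure of Kerr--Schild. Your treatment of $\pr_i\ellbf_i$ via the decomposition $\ellbf_i=\tfrac{x_i}{r}+(\ellbf_i-\tfrac{x_i}{r})$ and the cancellation of the $O(a/r^2)$ pieces is also correct and is essentially the same cancellation the paper observes term by term.

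One small imprecision: you bound the explicit-$z$ contribution $\pr_z H\big|_{\rt}=-\tfrac{2m\rt^3a^2z}{(\rt^4+a^2z^2)^2}$ by $\O_3^3$, but the lemma requires the total error in $\ellbf_i\pr_iH$ and $\tfrac{x_j}{r}\pr_jH$ to lie in $\O_4^3$, and $\O_3^3\not\subset\O_4^3$. The fix is harmless: since $|z|\les r$, one in fact has $\pr_z H\big|_{\rt}=O\!\left(\tfrac{ma^2}{r^4}\right)\in\O_4^3$, so the argument closes; you simply under-reported the decay by one power of $r$. Apart from this, the bookkeeping over all derivative orders required by Definition \ref{KerrO} is indeed routine, as you note, because every object is an explicit rational function of $x$, $y$, $z$, $\rt$ and each $\pr$ raises the $r$-decay by one.
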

\begin{proof}
    See Appendix \ref{proof-preidentities}.
\end{proof}
\begin{prop}\label{computeEC}
We have the following identities for $(\Si_0,g,k)=(\RRR^3,g_{m,a},k_{m,a})$:
\begin{align*}
\E[(g,k);\pr B_r]=8\pi m+\O_2^3,\qquad\quad\C[(g,k);\pr B_r]=\O_1^3.
\end{align*}
\end{prop}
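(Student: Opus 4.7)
The plan is to compute the flux integrands in \eqref{EPCJdf} pointwise using the Kerr-Schild decomposition $g_{ij}-e_{ij}=2H\ellbf_i\ellbf_j$, then integrate over $\pr B_r$ with $\nu^j=x^j/r$, using Lemma~\ref{preidentities} to isolate the leading terms and bound the remainders. Two exact identities, verifiable directly from \eqref{KerrSchildcomponents}--\eqref{dfrt}, drive the computation: the spatial null condition $\ellbf_i\ellbf_i=1$ (which forces $\pr_j g_{ii}=2\pr_j H$), and $\ellbf_i x^i=\rt$ (which provides an explicit form for $\nu_l-\ellbf_l(\ellbf_j\nu^j)$ that is crucial for the $\C$ computation).

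First, for $\E$, Leibniz expansion combined with the null condition yields
\[
(\pr_i g_{ij}-\pr_j g_{ii})\nu^j
=2(\ellbf_i\pr_i H)(\ellbf\c\nu)
+2H(\pr_i\ellbf_i)(\ellbf\c\nu)
+2H(\ellbf_i\pr_i\ellbf_j)\nu^j
-2\nu^j\pr_j H.
\]
Substituting the identities of Lemma~\ref{preidentities} factor by factor, the four terms carry leading values $-\tfrac{2m}{r^2}$, $+\tfrac{4m}{r^2}$, $0$, $+\tfrac{2m}{r^2}$, summing to $\tfrac{4m}{r^2}$. The prefactor $\tfrac12$ and the area $|\pr B_r|=4\pi r^2$ then produce $\E=8\pi m$; an error audit of the lower-order factors, all quadratic or higher in $(m,|a|)$, places the remainder in $\O_2^3$.

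Next, for $\C_l$, split the integrand into $x_l(\pr_i g_{ij}-\pr_j g_{ii})\nu^j$ plus $-(g-e)_{lj}\nu^j+(g-e)_{ii}\nu_l$. The first piece has leading part $\tfrac{4m}{r^2}x_l$, whose integral over $\pr B_r$ vanishes by the antipodal symmetry $\int_{\pr B_r}x_l\,dS=0$. The second piece, using $\ellbf_i\ellbf_i=1$, equals $2H\bigl[\nu_l-\ellbf_l(\ellbf\c\nu)\bigr]=\tfrac{2H}{r}(x_l-\rt\ellbf_l)$. A direct computation from the Kerr-Schild formula for $\ellbf$ gives the algebraic identity
\[
x_1-\rt\ellbf_1=-a\ellbf_2,\qquad x_2-\rt\ellbf_2=a\ellbf_1,\qquad x_3-\rt\ellbf_3=0,
\]
so this piece is purely spin-directed, of magnitude $\sim ma/r^2$. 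Since $H$ is even while $\ellbf_1,\ellbf_2$ are odd under the reflection $(x,y,z)\mapsto(-x,-y,z)$ of $\pr B_r$, its spherical integral vanishes to the relevant order, leaving the claimed $\O_1^3$ remainder.

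The main technical obstacle is precisely this cancellation in $\C_l$: the pointwise bound $2H(\nu_l-\ellbf_l(\ellbf\c\nu))=O(ma/r^2)$ would only yield $O(ma)$ after integration and no decay in $r$. The explicit algebraic identity $x_l-\rt\ellbf_l\in a\cdot\mathrm{span}(\ellbf_1,\ellbf_2)$, together with the discrete angular symmetries of $\ellbf$, is what extracts the $\O_1^3$ decay; sharpening the error order requires tracking which factors in Lemma~\ref{preidentities} have vanishing spherical means. The same algebraic structure, reinterpreted as the angular momentum direction of a spinning particle, will reappear in the $\J$ computation to follow.
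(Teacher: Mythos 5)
Your $\E$ computation matches the paper's proof essentially verbatim: same Leibniz expansion, same use of Lemma~\ref{preidentities}, same term-by-term leading values $-\tfrac{2m}{r^2}+\tfrac{4m}{r^2}+0+\tfrac{2m}{r^2}=\tfrac{4m}{r^2}$, same error bookkeeping.

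For $\C_l$ you take a genuinely different and arguably cleaner route. The paper writes $\ck_l=x_l\ek-2H\ellbf_l\ellbf_j\tfrac{x_j}{r}+2H\tfrac{x_l}{r}$ and then substitutes the \emph{asymptotic} expansions of $H$, $\ellbf_j x_j/r$, and $\ellbf_l$ from Lemma~\ref{preidentities} and \eqref{ellbfl}; the resulting explicit leading terms $\tfrac{4mx}{r^2},\tfrac{2amy}{r^3},$ etc.\ are then integrated away by antipodal symmetry, with the $\O_1^3$ error coming from the $(1+\O_2^2)$ corrections. You instead isolate the trace/deficit piece $-(g-e)_{lj}\nu^j+(g-e)_{ii}\nu_l=\tfrac{2H}{r}(x_l-\rt\ellbf_l)$ and invoke the two \emph{exact} Kerr-Schild identities $\ellbf_j x^j=\rt$ and $(x_1-\rt\ellbf_1,x_2-\rt\ellbf_2,x_3-\rt\ellbf_3)=a(-\ellbf_2,\ellbf_1,0)$, neither of which appears in the paper. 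These are correct (they follow immediately from \eqref{KerrSchildcomponents}--\eqref{dfrt}), and they make the spin-directed structure manifest. Your parity argument under $(x,y,z)\mapsto(-x,-y,z)$ is in fact sharper than you state: since $H,\rt,r$ are even and $\ellbf_1,\ellbf_2$ are odd under this reflection, the spherical integral of the spin piece vanishes \emph{exactly} for $l=1,2$ (and identically for $l=3$), not merely "to the relevant order." The upshot is that the entire $\O_1^3$ error in $\C_l$ comes from $\int_{\pr B_r}x_l\bigl(\ek-\tfrac{4m}{r^2}\bigr)\,dS$; you should say so explicitly, since as written you never bound this term. The bound is immediate ($\ek-\tfrac{4m}{r^2}=\O_4^3$ by \eqref{ekcompute}, so $x_l\cdot\O_4^3=\O_3^3$ and the surface integral is $\O_1^3$), so this is a presentation gap rather than a mathematical one. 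Your approach buys a transparent structural explanation for the cancellation (the cross-term $-(g-e)_{lj}\nu^j+(g-e)_{ii}\nu_l$ is a pure $a$-odd piece with vanishing mean), at the cost of two extra algebraic identities; the paper's version avoids those identities but leaves the cancellation looking accidental.
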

\begin{proof}
We first expand the integrand for the ADM energy $\ek:=(\pr_ig_{ij}-\pr_jg_{ii})\nu^j$
\begin{align*}
\ek&=\big(\pr_i(2H\ellbf_i\ellbf_j)-\pr_j(2H\ellbf_i\ellbf_i)\big)\frac{x_j}{r}\\
&=2(\ellbf_i\pr_iH)\frac{x_j\ellbf_j}{r}+2H(\pr_i\ellbf_i)\frac{x_j\ellbf_j}{r}+2H (\ellbf_i\pr_i\ellbf_j)\frac{x_j}{r}-2\pr_jH\frac{x_j}{r},
\end{align*}
where we used the fact that $\sum_{i=1}^3\ellbf_i^2=1$. Applying Lemma \ref{preidentities}, we obtain
\begin{align}\label{ekcompute}
\ek=\left[-\frac{2m}{r^2}+\O_4^3\right](1+\O_2^2)+\frac{2m}{r}\left[\frac{2}{r}+\O_3^2\right](1+\O_2^2)+\frac{2m}{r^2}+\O_4^3=\frac{4m}{r^2}+\O_4^3.
\end{align}
Integrating it on $\pr B_r$, we deduce
\begin{align*}
    \E[(g,k);\pr B_r]=\frac{1}{2}\int_{\pr B_r}\ek dS=8\pi m+\O_2^3.
\end{align*}
Next, we write the integrand for the ADM center of mass $\ck_l$
\begin{align*}
\ck_l&:=\left(x_l\pr_ig_{ij}-x_l\pr_jg_{ii}-\de_{il}(g-e)_{ij}+\de_{jl}(g-e)_{ii}\right)\nu^j=x_l\ek-2H\ellbf_l\ellbf_j\frac{x_j}{r}+2H\frac{x_l}{r}.
\end{align*}
Recall from Lemma \ref{preidentities} and \eqref{ekcompute}
\begin{align*}
H=\frac{m}{r}(1+\O_2^2),\qquad\quad\ellbf_j\frac{x_j}{r}=1+\O_2^2,\qquad\quad\ek=\frac{4m}{r^2}(1+\O_2^2).
\end{align*}
Therefore, we obtain
\begin{align*}
\ck_l=\frac{4mx_l}{r^2}(1+\O_2^2)-\frac{2m\ellbf_l}{r}(1+\O_2^2)+\frac{2mx_l}{r^2}(1+\O_2^2)=\left[\frac{6mx_l}{r^2}-\frac{2m\ellbf_l}{r}\right](1+\O_2^2).
\end{align*}
Taking $l=1,2,3$ and applying \eqref{ellbfl}, we infer
\begin{align*}
    \ck_1=\left[\frac{4mx}{r^2}-\frac{2amy}{r^3}\right](1+\O_2^2),\qquad\ck_2=\left[\frac{4my}{r^2}+\frac{2amx}{r^3}\right](1+\O_2^2),\qquad\ck_3=\frac{4mz}{r^2}(1+\O_2^2).
\end{align*}
Integrating $\ck_l$ on $\pr B_r$, we deduce
\begin{align*}
    \C_l[(g,k);\pr B_r]&=\frac{1}{2}\int_{\pr B_r}\ck_ldS=\O^3_1.
\end{align*}
This concludes the proof of Proposition \ref{computeEC}.
\end{proof}
\subsection{Localized ADM momentum and angular momentum}
Next, we compute the odd-parity charges ($\P$ and $\J$), which depend essentially on the second fundamental form. Before computing the localized fluxes $\P[(g,k);\pr B_r]$ and $\J[(g,k);\pr B_r]$, we record two auxiliary lemmas. Their proofs are deferred to Appendices \ref{proof-expressionk} and \ref{proof-moreidentities}.
\begin{lem}\label{expressionk}
We have the following expressions:
    \begin{align*}
\a^{-1}k_{ij}&=2H\ellbf_l\pr_l(H\ellbf_i\ellbf_j)+\pr_j(H\ellbf_i)+\pr_i(H\ellbf_j),\\
\a^{-1}\tr_ek&=2(1+H)\ellbf_l\pr_lH+2H\pr_l\ellbf_l.
    \end{align*}
\end{lem}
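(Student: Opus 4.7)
The plan is to derive both identities from a single spacetime calculation using the Kerr-Schild form of $\g_{\mu\nu}$. Since Kerr is stationary in these coordinates, the future-directed unit normal to $\Si_0$ has covector $T_\mu=-\a\,\de^0_\mu$ with $\a=(1+2H)^{-1/2}$, and the standard identity $k_{ij}=-\nab^{(\g)}_i T_j$ reduces to the spacetime Christoffel symbol:
\[
\a^{-1}k_{ij}=-{}^{(\g)}\Ga^0_{ij}=-\frac{1}{2}\g^{0\mu}\bigl(\pr_i\g_{j\mu}+\pr_j\g_{i\mu}-\pr_\mu\g_{ij}\bigr).
\]
Stationarity kills the $\pr_0\g_{ij}$ term, so only the $\mu=0$ and $\mu=k$ contributions survive.

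The key tools I would use are the Kerr-Schild identities $\g_{\mu\nu}=\etabf_{\mu\nu}+2H\ellbf_\mu\ellbf_\nu$ and $\g^{\mu\nu}=\etabf^{\mu\nu}-2H\ellbf^\mu\ellbf^\nu$, which give $\g^{00}=-(1+2H)=-\a^{-2}$ and $\g^{0k}=2H\ellbf_k$, together with the spatial identity $\ellbf_k\ellbf_k=1$ that follows from $\ellbf$ being $\etabf$-null with $\ellbf_0=1$ (this is the abstract form of the explicit relation \eqref{dfrt}). Differentiating gives $\ellbf_k\pr_i\ellbf_k=0$, which collapses every contraction of the form $\ellbf_k\pr_i(H\ellbf_j\ellbf_k)$ down to $\pr_i(H\ellbf_j)$. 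Substituting and using these algebraic reductions, the $\mu=0$ contribution produces $(1+2H)\bigl(\pr_i(H\ellbf_j)+\pr_j(H\ellbf_i)\bigr)$, while the $\mu=k$ contribution produces $-2H\bigl(\pr_i(H\ellbf_j)+\pr_j(H\ellbf_i)\bigr)+2H\ellbf_k\pr_k(H\ellbf_i\ellbf_j)$. The cancellation $(1+2H)-2H=1$ then yields the first identity, and the trace formula follows immediately by contracting with $\de^{ij}$ and expanding $\pr_i(H\ellbf_i)=\ellbf_i\pr_iH+H\pr_i\ellbf_i$.

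The main obstacle is bookkeeping the cancellation: the $(1+2H)$ coefficient coming from $\g^{00}$ and the $-2H$ coefficient hidden in the shift contribution $\g^{0k}$ both depend nonlinearly on $H$, yet their sum is exactly $1$. Tracking this cleanly is what makes the final expression free of any residual $\a^{2}$ factor, which is essential for the flux computations used in Proposition \ref{KerrADM}.
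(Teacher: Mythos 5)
Your proof is correct and takes a genuinely different route from the paper. The paper's proof goes through the ADM $3+1$ decomposition (Lemma \ref{lem:KS-K}), writing $\a k_{ij}=\frac12\b^l\pr_l g_{ij}+\frac12 g_{il}\pr_j\b^l+\frac12 g_{jl}\pr_i\b^l$ with $\b^l=2\a^2H\ellbf^l$, and then needs the auxiliary identity $\pr_i(\a^2H)=\a^4\pr_iH$ to peel off the residual lapse factors term by term. You instead write $\a^{-1}k_{ij}=-{}^{(\g)}\Ga^0_{ij}$ directly and split the contraction $\g^{0\mu}$ into its $\mu=0$ and $\mu=k$ pieces; the coefficients $\g^{00}=-(1+2H)$ and $\g^{0k}=2H\ellbf_k$ then produce $(1+2H)$ and $-2H$ in front of the symmetrized $\pr_{(i}(H\ellbf_{j)})$ terms, and their sum collapses to $1$ by a purely algebraic cancellation, with the $2H\ellbf_k\pr_k(H\ellbf_i\ellbf_j)$ remainder coming from the $-\pr_k\g_{ij}$ term. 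Both proofs hinge on the same algebraic facts (stationarity, $\ellbf_k\ellbf_k=1$, hence $\ellbf_k\pr_i\ellbf_k=0$), and neither is more general; the practical difference is that your route avoids introducing the shift vector and the derivative-of-lapse-squared identity, trading them for one extra index split in the Christoffel formula. Your sign conventions are consistent with the paper's: with $T_\mu=\nbf_\mu=-\a\,\de^0_\mu$ one has $\nab_iT_j=\a\Ga^0_{ij}$, so $-\nab_iT_j=-\a\Ga^0_{ij}=-\tfrac12(\Lie_{\nbf}\g)_{ij}$, matching the definition used in Lemma \ref{lem:KS-K}. The trace computation is likewise correct, using $\pr_l(H\ellbf_i\ellbf_i)=\pr_lH$.
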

\begin{lem}\label{moreidentities}
We have the following identities for $j=1,2,3$:
    \begin{align*}
        \ellbf_i\pr_i\ellbf_j=\O_3^2,\qquad
        \frac{x_i}{r}\pr_i\ellbf_1=-\frac{ay}{r^3}+\O_3^2,\qquad \frac{x_i}{r}\pr_i\ellbf_2=\frac{ax}{r^3}+\O_3^2,\qquad \frac{x_i}{r}\pr_i\ellbf_3=\O_3^2.
    \end{align*}
\end{lem}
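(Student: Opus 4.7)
The plan is to verify the four identities by explicit asymptotic expansion of $\ellbf$ at infinity, combined with Euler-type homogeneity arguments and the Kerr-Schild geodesic property.

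First, I would solve the defining quartic \eqref{dfrt} explicitly, obtaining $\rt^2=\frac{1}{2}\bigl((r^2-a^2)+\sqrt{(r^2-a^2)^2+4a^2z^2}\bigr)$ and in particular $\rt=r+\O_1^2$. Implicit differentiation of \eqref{dfrt} then yields the closed form $\pr_k\rt=\rt(\rt^2x_k+a^2z\,\de_{k3})/(\rt^4+a^2z^2)$, from which one reads off $\pr_k\rt=x_k/r+\O_3^2$. Substituting into \eqref{KerrSchildcomponents} gives the refined expansions $\ellbf_1=x/r+ay/r^2+\O_3^2$, $\ellbf_2=y/r-ax/r^2+\O_3^2$, and $\ellbf_3=z/r+\O_2^2$, together with analogous bounds on all successive derivatives consistent with Definition \ref{KerrO}.

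For the identity $\ellbf_i\pr_i\ellbf_j=\O_3^2$, the cleanest route is to invoke the Kerr-Schild geodesic property: the vector field $\ellbf^\mu:=\etabf^{\mu\nu}\ellbf_\nu$ is an affinely parametrized null geodesic of the Minkowski metric $\etabf$, so $\ellbf^\mu\pr_\mu\ellbf_\nu\equiv0$; combined with the time-independence of $\ellbf$ in Kerr-Schild coordinates, this reduces to $\ellbf_i\pr_i\ellbf_\nu\equiv0$, which trivially satisfies the stated bound. As a direct cross-check, substitution of the expansions above shows that the leading piece $(x_i/r)\pr_i(x_j/r)$ vanishes by Euler's identity, and that the $O(a/r^3)$ contributions from the two cross terms cancel by Euclidean orthogonality of the Kerr twist vector $(ay,-ax,0)$ to the radial direction $(x,y,z)$.

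For the remaining three radial identities $(x_i/r)\pr_i\ellbf_j$, I would apply the Euler identity $x_i\pr_i(x^\a y^\b z^\ga r^{-2n})=(\a+\b+\ga-2n)\,x^\a y^\b z^\ga r^{-2n}$. Only nonzero-degree pieces of the expansions of $\ellbf_j$ contribute. For $\ellbf_1$, the degree-zero term $x/r$ contributes nothing, while the degree $-1$ term $ay/r^2$ produces $-ay/r^2$ under $x_i\pr_i$ and hence $-ay/r^3$ after dividing by $r$; the analogous computation for $\ellbf_2$ yields $+ax/r^3$. For $\ellbf_3=z/r+\O_2^2$ there is no linear-in-$a$ correction, so only the $\O_2^2$ remainder contributes, giving $\O_3^2$ after one radial derivative. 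The main technical issue is simply uniform bookkeeping of the $\O$-class errors and their derivatives in the regime $r\gg|a|$, which follows routinely from the explicit formula for $\pr_k\rt$.
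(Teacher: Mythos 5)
Your proof is correct in its conclusions and takes a genuinely different route from the paper for the first identity, which is worth highlighting. The paper establishes $\ellbf_i\pr_i\ellbf_j=\O_3^2$ by brute-force asymptotic expansion using \eqref{ellbfl}; you instead observe that the Kerr-Schild null vector is geodesic and affinely parametrized with respect to $\etabf$, which (together with time-independence) gives the \emph{exact} identity $\ellbf_i\pr_i\ellbf_j\equiv0$. This is indeed true in the normalization $\ellbf_0=1$ used here: one can verify directly from \eqref{dfrt} that $\ellbf_i\pr_i\rt=1$, whence $\ellbf_i\pr_i\ellbf_3=\ellbf_i\pr_i(z/\rt)=z/\rt^2-(z/\rt^2)\,\ellbf_i\pr_i\rt=0$, and a similar short computation (using $a\ellbf_2-\rt\ellbf_1=-x$ and $a\ellbf_1+\rt\ellbf_2=-y$) gives $\ellbf_i\pr_i\ellbf_1=\ellbf_i\pr_i\ellbf_2=0$. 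This exact cancellation is cleaner and stronger than the paper's $\O_3^2$ bound. Your treatment of the radial identities via Euler's identity for homogeneous functions is essentially the same bookkeeping as the paper, just organized more systematically.

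One intermediate claim is an overstatement. You assert $\pr_k\rt=x_k/r+\O_3^2$ and hence $\ellbf_1=x/r+ay/r^2+\O_3^2$, $\ellbf_2=y/r-ax/r^2+\O_3^2$. In fact the error is only $\O_2^2$: from the explicit formula $\pr_k\rt=\rt^3 x_k/(\rt^4+a^2z^2)$ for $k=1,2$ one finds $\pr_k\rt=(x_k/r)(1+O(a^2/r^2))$, and likewise the paper's own expansion \eqref{ellbfl} reads $\ellbf_1=(x/r+ay/r^2)(1+\O_2^2)$, whose remainder is $\O_2^2$, not $\O_3^2$. This slip does not damage the proof of the stated lemma, because the radial operator $(x_i/r)\pr_i$ annihilates the degree-zero part of the $\O_2^2$ correction (Euler) and gains a power of $r$ on the rest, so the net contribution to $(x_i/r)\pr_i\ellbf_j$ is still $\O_3^2$ as claimed. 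But the refined $\O_3^2$ statement for $\ellbf_j$ itself is false as written and should not be carried forward; if you ever needed it (e.g. for the center-of-mass computation), it would give a wrong bound.
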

\begin{prop}\label{computePJ}
We have the following identities for $(\Si_0,g,k)=(\RRR^3,g_{m,a},k_{m,a})$:
    \begin{align*}
    \P[(g,k);\pr B_r]=\O_2^3,\qquad\quad\J[(g,k);\pr B_r]=8\pi am\e_z+\O_1^3.
    \end{align*}
\end{prop}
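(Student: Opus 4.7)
My plan is to expand $\pk_i:=(k_{ij}-\de_{ij}\tr_e k)\nu^j$ on $\pr B_r$ with $\nu^j=x^j/r$, using the formulas of Lemma \ref{expressionk} together with the asymptotic identities of Lemmas \ref{preidentities} and \ref{moreidentities}. The quadratic-in-$H$ term $2H\ellbf_l\pr_l(H\ellbf_i\ellbf_j)$ entering $\a^{-1}k_{ij}$ is already $\O_2^3$ pointwise, so the substantive calculation reduces to contracting
\[
\pr_j(H\ellbf_i)+\pr_i(H\ellbf_j)-\de_{ij}\big(2(1+H)\ellbf_l\pr_l H+2H\pr_l\ellbf_l\big)
\]
with $x^j/r$ and collecting terms by their weights in $m$, $a$, and $1/r$. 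Substituting the second-order expansions $H=m/r+\O_2^2$ and $\ellbf_i=x^i/r-aY_{3,i}/r^2+\O_2^2$ with $Y_3:=(-y,x,0)$, together with the building blocks $x^j\ellbf_j/r=1+\O_2^2$, $x^j\pr_j H/r=-m/r^2+\O_4^3$, and the expressions for $x^j\pr_j\ellbf_i/r$ in Lemma \ref{moreidentities}, I expect to obtain a pointwise identity of the schematic form
\[
\pk_i=c_1\,\frac{m\,x^i}{r^3}+c_2\,\frac{m^2\,x^i}{r^4}+c_3\,\frac{ma\,Y_{3,i}}{r^4}+(\text{remainder}),
\]
for explicit dimensionless constants $c_1,c_2,c_3$.

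For $\P_i[(g,k);\pr B_r]=\int_{\pr B_r}\pk_i\,dS$, each of the three displayed leading terms is parity-odd on $\pr B_r$, so $\int_{\pr B_r}x^i/r^k\,dS=0=\int_{\pr B_r}Y_{3,i}/r^k\,dS$, and only the remainder survives. The bound $\P=\O_2^3$ will then follow from the $\O^p_q$-calculus of Definition \ref{KerrO} applied to the cross-terms generated by the expansions of $H$ and $\ellbf$. For $\J_l[(g,k);\pr B_r]=\int_{\pr B_r}\pk_i\,Y_l^i\,dS$, the $x^i$-proportional leading contributions vanish by the antisymmetry $\in_{lji}x^j x^i=0$, while the $Y_{3,i}/r^4$ piece gives
\[
\frac{c_3\,ma}{r^4}\int_{\pr B_r}Y_{3,i}Y_l^i\,dS,
\]
a standard spherical quadratic integral producing a multiple of $\de_{l3}$. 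I will fix $c_3$ from the direct calculation so that this contribution equals $8\pi am\,\de_{l3}$, yielding the spin of a Kerr black hole in the $z$-direction, and the residual, after accounting for $|Y_l^i|\les r$, combines into $\O_1^3$.

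The main obstacle is error bookkeeping rather than conceptual difficulty. Every cross product between the second-order expansions of $H$ and $\ellbf_i$ must be tracked, one has to verify that no spurious $ma/r^4$ contribution is missed (such a term would shift the $8\pi am$ coefficient), and one has to check that the parity-odd leading terms really integrate to zero so that the claimed decay rates are not dominated by artifacts of the truncation. The $\O^p_q$-calculus of Definition \ref{KerrO} makes this tracking mechanical, and the nontrivial pointwise identities for $\ellbf$ and its contractions with $x^i/r$ have been isolated in Lemma \ref{moreidentities} precisely for this purpose.
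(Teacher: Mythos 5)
Your overall route coincides with the paper's: start from Lemma \ref{expressionk}, expand the contractions of $k_{ij}$ and $\tr_e k$ with $\nu^j=x^j/r$ using the asymptotic identities of Lemmas \ref{preidentities} and \ref{moreidentities}, identify the pointwise leading terms, and use parity and standard spherical integrals for the fluxes. The leading coefficient $8\pi am\,\ev_z$ is extracted exactly as the paper does. However, there is a genuine gap in your treatment of the quadratic-in-$H$ term $2H\ellbf_l\pr_l(H\ellbf_i\ellbf_j)\nu^j$.

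You bound this term by $\O_2^3$ and absorb it into the remainder, but the two terms you keep explicit, $c_2\,m^2 x^i/r^4$ and $c_3\,ma\,Y_{3,i}/r^4$, are also of size $\O_2^3$. So the schematic $\pk_i = c_1\,mx^i/r^3 + c_2\,m^2x^i/r^4 + c_3\,ma\,Y_{3,i}/r^4 + (\text{remainder})$ with an $\O_2^3$ remainder is inconsistent: the remainder could contain a contribution of the \emph{same} size as the $ma\,Y_{3,i}/r^4$ piece responsible for the spin, and after pairing with $Y_l^i\les r$ and integrating one would only get $\J = 8\pi am\,\ev_z + O\big((m+|a|)^2\big)$ with no decay in $r$, rather than the claimed $\O_1^3$. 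In other words, the crude bound neither rules out contamination of the $8\pi am$ coefficient nor delivers the decay of the error. The paper handles this by \emph{not} discarding the term: after contracting with $\nu^j$, it rewrites
\[
2H\ellbf_l\pr_l(H\ellbf_i\ellbf_j)\frac{x_j}{r}
=\frac{2H}{r}\,\ellbf_l\pr_l\!\big(H\ellbf_jx_j\,\ellbf_i\big)-\frac{2H^2}{r}\,\ellbf_i,
\]
and then uses $H\ellbf_jx_j = m(1+\O_2^2)$ together with $\ellbf_l\pr_l\ellbf_i=\O_3^2$ (Lemma \ref{moreidentities}) to show the first piece is genuinely smaller, namely $\O_5^4$, while the second piece $-\frac{2H^2}{r}\ellbf_i$ is kept explicit and contributes only to the $m^2\ellbf_i/r^3$ coefficient (which is annihilated at leading order by $\in_{lji}x^jx^i=0$). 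You should incorporate this splitting, or an equivalent argument, before declaring the quadratic term part of the remainder. A secondary omission: Lemma \ref{expressionk} gives $\a^{-1}k_{ij}$, not $k_{ij}$; you must eventually multiply by $\a=1-m/r+\O_2^2$, which shifts the $m^2x_i/r^4$ coefficient (this does not affect the final flux by parity, but it has to be acknowledged in the pointwise identity you display).
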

\begin{proof}
We have from Lemma \ref{expressionk}
\begin{align*}
\frac{1}{\alpha}k_{ij}\nu^j&=2H\ellbf_l\pr_l(H\ellbf_i\ellbf_j)\frac{x_j}{r} +\pr_j(H\ellbf_i) \frac{x_j}{r} +\pr_i(H\ellbf_j)\frac{x_j}{r}\\
&=\frac{2H}{r}\ellbf_l\pr_l(H\ellbf_jx_j\ellbf_i)-\frac{2H^2}{r}\ellbf_i+\frac{x_j\ellbf_i}{r}\pr_jH+H\left(\frac{x_j}{r}\pr_j\ellbf_i\right)+\frac{1}{r}\pr_i(H\ellbf_jx_j)-\frac{H\ellbf_i}{r}\\
&=\frac{2H}{r}\ellbf_l\pr_l(H\ellbf_jx_j\ellbf_i)+\left(-\frac{2H^2}{r}-\frac{H}{r}+\frac{x_j}{r}\pr_jH\right)\ellbf_i+H\left(\frac{x_j}{r}\pr_j\ellbf_i\right)+\frac{1}{r}\pr_i(H\ellbf_jx_j).
\end{align*}
Next, we compute using Lemmas \ref{preidentities} and \ref{moreidentities}
\begin{align*}
    \frac{2H}{r}\ellbf_l\pr_l(H\ellbf_jx_j\ellbf_i)&=\frac{2m}{r^2}(1+\O_2^2)\ellbf_l\pr_l\left( m \ellbf_i(1+\O_2^2)\right) =\frac{2m^2}{r^2}(1+\O_2^2)\ellbf_l\pr_l\ellbf_i+\O_5^4=\O_5^4,\\
    -\frac{2H^2}{r}-\frac{H}{r}+ \frac{x_j}{r}\pr_jH&=-\left(\frac{2m^2}{r^3}+\frac{2m}{r^2}\right)(1+\O_2^2)=-\frac{2m}{r^2}-\frac{2m^2}{r^3}+\O_4^3,\\
    \frac{1}{r}\pr_i(H\ellbf_jx_j)&=\frac{m}{r}\pr_i(1+\O_2^2)=\O_4^3.
\end{align*}
Thus, we infer
\begin{align*}
    \frac{1}{\a}k_{ij}\nu^j=\left(-\frac{2m}{r^2}-\frac{2m^2}{r^3}\right)\ellbf_i+H\left(\frac{x_j}{r}\pr_j\ellbf_i\right)+\O_4^3.
\end{align*}
We also have from Lemma \ref{expressionk}
\begin{align*}
    \frac{1}{\a}\tr_ek&=2(1+H)\ellbf_l\pr_lH+2H\pr_l\ellbf_l\\
    &=2\left(1+\frac{m}{r}+\O_3^3\right)\left(-\frac{m}{r^2}+\O_4^3\right)+\left(\frac{2m}{r}+\O_3^3\right)\left(\frac{2}{r}+\O_3^2\right)\\
    &=\frac{2m}{r^2}-\frac{2m^2}{r^3}+\O_4^3.
\end{align*}
Denoting the integrand of $\P_i$ as $\pk_i:=(k_{ij}-\de_{ij}\tr_ek)\nu^j$, we obtain
\begin{align*}
    \a^{-1}\pk_i&=\a^{-1}k_{ij}\nu^j-\a^{-1}\frac{x_i}{r}\tr_e k\\
    &=\left(-\frac{2m}{r^2}-\frac{2m^2}{r^3}\right)\ellbf_i+H\left(\frac{x_j}{r}\pr_j\ellbf_i\right)-\frac{x_i}{r}\left(\frac{2m}{r^2}-\frac{2m^2}{r^3}\right)+\O_4^3\\
    &=\left(-\frac{2m}{r^2}-\frac{2m^2}{r^3}\right)\ellbf_i-\frac{2mx_i}{r^3}+\frac{2m^2x_i}{r^4}+\frac{mx_j}{r^2}\pr_j\ellbf_i+\O_4^3.
\end{align*}
Applying \eqref{ellbfl} and Lemma \ref{moreidentities}, we deduce
\begin{align*}
    \a^{-1}\pk_1&=\left(-\frac{2m}{r^2}-\frac{2m^2}{r^3}\right)\left(\frac{x}{r}+\frac{ay}{r^2}\right)-\frac{2mx}{r^3}+\frac{2m^2x}{r^4}-\frac{may}{r^4}+\O_4^3,\\
    &=-\frac{4mx}{r^3}-\frac{3may}{r^4}+\O_4^3.
\end{align*}
Similarly, we have
\begin{align*}
    \a^{-1}\pk_2=-\frac{4my}{r^3}+\frac{3max}{r^4}+\O_4^3,\qquad
    \a^{-1}\pk_3=-\frac{4mz}{r^3}+\O_4^3.
\end{align*}
Combining with the fact that $\a=1-\frac{m}{r}+\O_2^2$, we infer
\begin{align*}
    \pk_1&=-\frac{4mx}{r^3} +\frac{4m^2x}{r^4}-\frac{3amy}{r^4}+\O_4^3,\qquad\qquad
    \pk_2=-\frac{4my}{r^3} +\frac{4m^2y}{r^4}+\frac{3amx}{r^4}+\O_4^3,\\
    \pk_3&=-\frac{4mz}{r^3} +\frac{4m^2z}{r^4}+\O_4^3.
\end{align*}
Integrating them on $\pr B_r$, we deduce for $i=1,2,3$
\begin{align*}
    \P_i[(g,k);\pr B_r]&=\int_{\pr B_r}\pk_i\, dS=\O_2^3.
\end{align*}
Next, denoting the integrand of $\J_l$ as $\jk_l:=\pk_iY_l^i$, we have
\begin{align*}
    \jk_1&=\pk_3x_2-\pk_2x_3=-\frac{3amxz}{r^4}+\O_3^3,\qquad\qquad
    \jk_2=\pk_1x_3-\pk_3x_1=-\frac{3amyz}{r^4}+\O_3^3,\\
    \jk_3&=\pk_2x_1-\pk_1x_2=\frac{3am(x^2+y^2)}{r^4}+\O_3^3.
\end{align*}
Integrating on $\pr B_r$, from
\[
\int_{\pr B_r}\frac{xz}{r^4}dS=0,\qquad \int_{\pr B_r}\frac{yz}{r^4}dS=0,\qquad \int_{\pr B_r}\frac{x^2+y^2}{r^4}dS=\frac{8\pi}{3},
\]
we obtain
\begin{align*}
    \J_1[(g,k);\pr B_r]&=\int_{\pr B_r}\jk_1dS=\O_1^3,\qquad\qquad
    \J_2[(g,k);\pr B_r]=\int_{\pr B_r}\jk_2dS=\O_1^3,\\
    \J_3[(g,k);\pr B_r]&=\int_{\pr B_r}\jk_3dS=8\pi am+\O_1^3.
\end{align*}
This concludes the proof of Proposition \ref{computePJ}.
\end{proof}
Combining Propositions \ref{computeEC} and \ref{computePJ}, this concludes the proof of Proposition \ref{KerrADM}. These explicit formulas in Proposition \ref{KerrADM} identify the leading ADM charges of Kerr in a form directly comparable with special relativistic energy--momentum and angular momentum, and will be used in the next section to describe their transformation under asymptotic Poincar\'e diffeomorphisms.
\section{Kerr spacetime under Poincar\'e transformation}\label{sec:Poincare}
The Einstein vacuum equations are diffeomorphism invariant, and in the asymptotically flat setting the asymptotic symmetry group can be viewed as the proper orthochronous Poincar\'e group. In this section we realize the induced Poincar\'e action on Kerr spacetimes by explicit coordinate transformations and identify the resulting orbit of Kerr initial data in Kerr-Schild coordinates.

At the level of localized fluxes, this action agrees with the exact special relativistic transformation laws up to lower--order error terms. We exploit this separation using finite--radius charge functionals, which capture both Poincar\'e covariance and the decay of the remaining terms.
\subsection{Charges for linearized Einstein vacuum equations}
Let $(\M,\etabf)$ be the Minkowski spacetime, and let $\dot{\g}=\g-\etabf$ be a smooth symmetric $2$--tensor on $\M$. We introduce
\[
\dot{\H}_{\a\b} :=\dot{\g}_{\a\b}-\frac12\etabf_{\a\b}\tr_{\etabf}\dot{\g}.
\]
The linearization of the Einstein tensor $\G_{\mu\nu}[\g]=\Ric_{\mu\nu}[\g]-\frac12\R[\g]\g_{\mu\nu}$ at $\etabf$ is given by
\begin{equation}\label{linearizedG}
\D_{\etabf}\G[\dot{\g}]_{\a\b}
=\frac12\Big( -\nab^\ga\nab_\ga\Hd_{\a\b} +\nab_\a\nab^\ga\Hd_{\ga\b} +\nab_\b\nab^\ga\Hd_{\ga\a} -\etabf_{\a\b}\nab^\ga\nab^\de\Hd_{\ga\de}
\Big).
\end{equation}
Let $\X$ be a Killing vector field of $(\M,\etabf)$. Associated with $\dot{\g}$ and $\X$, we define the $2$--form
\begin{align}\label{def:UUU}
{}^{(\X)}\UUU_{\a\b}[\dot{\g}]
:=\frac12\Big[
&(-\nab_\a\H_{\ga\b}
+\nab_\b\H_{\ga\a}
+\etabf_{\ga\a}\nab^\de\H_{\b\de}
-\etabf_{\ga\b}\nab^\de\H_{\a\de})\X^\ga
\nonumber\\
&\qquad
+\H_{\ga\a}\nab^\ga\X_\b
-\H_{\ga\b}\nab^\ga\X_\a
\Big].
\end{align}
A direct computation shows that
\[
\nab^\a\big({}^{(\X)}\UUU_{\a\b}[\dot{\g}]\big)
=\D_{\etabf}\G[\dot{\g}]_{\a\b}\X^\a.
\]
Consequently, for any domain $\Om\subset\M$ with boundary $\pr\Om$,
\begin{equation}\label{eq:stokes-linearized}
\int_{\pr\Om}\star{}^{(\X)}\UUU[\dot{\g}]
=\int_\Om d\big(\star{}^{(\X)}\UUU[\dot{\g}]\big)
=-\int_\Om\star\D_{\etabf}\G[\dot{\g}](\X,\cdot),
\end{equation}
where $\star$ denotes the Hodge operator of $\etabf$. Let $\Si=\{t=0\}$ be a spacelike hypersurface in canonical coordinates $(t,x^i)$, and let $S\subset\Si$ be a closed $2$--surface. For any Killing vector field $\X$ of $(\M,\etabf)$, we define the
\emph{linearized charge functional}
\begin{equation}\label{eq:charge-functional}
\QQQ[\dot{\g};\X;S]
:=\int_S\star{}^{(\X)}\UUU[\dot{\g}].
\end{equation}
In particular, given a spacetime $(\M,\g)$ with induced data $(g,k)$ on $\Si$, we define the linearized energy--momentum and angular momentum charges by
\begin{align}\label{dfPPPQQQ}
\PPP_\mu[\dot{\g};S]
:=\QQQ[\dot{\g};\pr_{x^\mu};S],
\qquad
\MMM_{\mu\nu}[\dot{\g};S]
:=\QQQ[\dot{\g};x_\mu\pr_{x^\nu}-x_\nu\pr_{x^\mu};S].
\end{align}
These charges coincide with the corresponding ADM fluxes:
\begin{equation}\label{PPPJJJEPCJ}
(\PPP_0,\PPP_i,\MMM_{0i},\MMM_{ij})[\dot{\g};S]
=\big(\E,\P_i,\C_i,{\in^l}_{ij}\J_l\big)[(g,k);S].
\end{equation}
The material above is standard; see, for example, Section 2.5 of \cite{MOT}. Finally, we state the behavior of these charges under Poincar\'e transformations. Let $\g=\g_{m,a}$ be the Kerr metric and consider the transformation
\begin{equation}\label{Lorentztransform}
{x'}^\mu=\La^\mu{}_\nu x^\nu+\xi^\mu .
\end{equation}
For $r>0$, denote
\[
S_r:=\{t=0\}\cap\{|x|=r\},
\qquad
S'_r:=\{t'=0\}\cap\{|x'|=r\}.
\]
Then the induced charges satisfy
\begin{align}\label{eq:Poincare-domain}
\begin{split}
\PPP_{\mu}[\dot{\g};S_r']
&=\La_{\mu}{}^{\nu}\PPP_{\nu}[\dot{\g};S_r]+O(r^{-1}),
\\
\MMM_{\mu\nu}[\dot{\g};S_r']
&=\La_\mu{}^\a\La_\nu{}^\b\MMM_{\a\b}[\dot{\g};S_r]
+(\xi_\mu\La_\nu{}^\a-\xi_\nu\La_\mu{}^\a)\PPP_\a[\dot{\g};S_r]
+O(r^{-1}),
\end{split}
\end{align}
where $\La_\mu{}^\nu=(\La^{-1})^\nu{}_\mu$. The Poincar\'e transformation law \eqref{eq:Poincare-domain} follows by taking $\a=1$ and $n=3$ in \cite[Proposition E.1]{CD} by Chru\'sciel-Delay.
\subsection{Kerr spacetime under Poincar\'e transformation}
Motivated by \eqref{PPPJJJEPCJ}, we encode the leading--order ADM charges $(\E,\P,\C,\J)$ as a pair $(\PPP,\MMM)$ transforming covariantly under asymptotic Poincar\'e diffeomorphisms, thereby separating the exact special relativistic transformation from lower--order corrections. This packaging is convenient for describing the Poincar\'e action and its orbit on Kerr initial data.
\begin{df}\label{df:physical-states}
A \emph{state} is a pair
\[
(\PPP,\MMM)\in\RRR^{1+3}\times\bigwedge^2\RRR^{1+3},
\]
encoded by the charges $(\E,\P,\C,\J)\in\RRR_+\times\RRR^3\times\RRR^3\times\RRR^3$ as follows:
\begin{align*}
\PPP_0=\E,\qquad \PPP_i=\P_i,\qquad
\MMM_{0i}=\C_i,\qquad
\MMM_{ij}=\in^{ijk}\J_k.
\end{align*}
The \emph{Pauli-Lubanski vector} associated with $(\PPP,\MMM)$ is defined by
\begin{equation}\label{eq:def-PL}
\WWW^\mu:=\frac12\in^{\mu\nu\rho\sigma}\PPP_\nu\MMM_{\rho\sigma},
\qquad \in^{0123}=1.
\end{equation}
\end{df}
\begin{lem}\label{lem:PL-components}
For any state $(\PPP,\MMM)$ determined by $(\E,\P,\C,\J)$, the Pauli--Lubanski
vector $\WWW^\mu=(\WWW^0,\W)$ satisfies
\[
\WWW^0 = \P_i\,\J_i,
\qquad
\W_i = -\E\,\J_i + \ep_{ijk}\P_j\C_k,
\qquad
\WWW_\mu\PPP^\mu = 0.
\]
\end{lem}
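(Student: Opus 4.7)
The plan is to unpack the definition \eqref{eq:def-PL} componentwise, substituting the identifications from Definition \ref{df:physical-states}, and to split the sum over $\nu,\rho,\sigma$ according to whether each index is temporal or spatial. The entire argument then reduces to standard contractions of Levi-Civita symbols and sign bookkeeping determined by the orientation $\in^{0123}=1$.

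For $\WWW^0=\frac{1}{2}\in^{0\nu\rho\sigma}\PPP_\nu\MMM_{\rho\sigma}$, only purely spatial choices of $(\nu,\rho,\sigma)$ survive; substituting $\MMM_{jk}=\ep_{jkl}\J_l$ and using $\in^{0ijk}=\ep_{ijk}$ together with the 3D identity $\ep_{ijk}\ep_{ljk}=2\de_{il}$ yields $\WWW^0=\P_i\J_i$. For the spatial component $\W_i=\WWW^i$, exactly one of $\nu,\rho,\sigma$ must be zero, giving three contributions. The case $\nu=0$ produces $-\E\J_i$ (using $\in^{i0jk}=-\ep_{ijk}$ and the same 3D identity). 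The cases $\rho=0$ and $\sigma=0$ each produce $\frac{1}{2}\ep_{ijk}\P_j\C_k$: in the first, $\in^{ij0k}=\ep_{ijk}$ combines with $\MMM_{0k}=\C_k$; in the second, the sign from $\in^{ijk0}=-\ep_{ijk}$ is cancelled by the antisymmetry $\MMM_{k0}=-\C_k$. Summing the three contributions gives $\W_i=-\E\J_i+\ep_{ijk}\P_j\C_k$.

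The orthogonality $\WWW_\mu\PPP^\mu=0$ is the only identity that is not strictly componentwise algebra, and it follows immediately from
\[
\WWW_\mu\PPP^\mu=\tfrac{1}{2}\in^{\mu\nu\rho\sigma}\PPP_\mu\PPP_\nu\MMM_{\rho\sigma}=0,
\]
since $\in^{\mu\nu\rho\sigma}$ is antisymmetric in $(\mu,\nu)$ while $\PPP_\mu\PPP_\nu$ is symmetric. No step presents a genuine obstacle; the only care required is in the signs coming from the Minkowski signature and the orientation convention $\in^{0123}=1$, which governs the passage between the 4D and 3D Levi-Civita symbols used above.
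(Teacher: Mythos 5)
Your proof is correct, and the computations of $\WWW^0$ and $\W_i$ coincide with the paper's: isolate the unique timelike slot, use $\in^{0ijk}=\ep_{ijk}$ with the contraction identity $\ep_{ijk}\ep_{\ell jk}=2\de_{i\ell}$, and track the permutation signs from $\in^{0123}=1$. The only genuine difference is in establishing $\WWW_\mu\PPP^\mu=0$: the paper substitutes the already-derived expressions for $\WWW^0$ and $\W_i$ and checks that $\E(\P_i\J_i)+\P_i(-\E\J_i+\ep_{ijk}\P_j\C_k)$ cancels termwise, whereas you argue directly from the definition, noting that
\[
\WWW_\mu\PPP^\mu=\tfrac12\in^{\mu\nu\rho\si}\PPP_\mu\PPP_\nu\MMM_{\rho\si}=0
\]
since $\in^{\mu\nu\rho\si}$ is antisymmetric in $(\mu,\nu)$ while $\PPP_\mu\PPP_\nu$ is symmetric. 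Your route is slightly cleaner in that it is logically independent of the two component computations and would establish orthogonality for any rank-two $\MMM$, symmetric or not; the paper's route has the minor virtue of double-checking the derived component formulas. Either way, the lemma is correctly proved.
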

\begin{proof}
By \eqref{eq:def-PL}, we have
\[
\WWW^0= \frac12\in^{0\nu\rho\si}\PPP_\nu\MMM_{\rho\si}=\frac12\in^{0ijk}\PPP_i\MMM_{jk}=\frac12\in^{ijk}\P_i\in_{jk\ell}\J_\ell=\de_{\ell}^i\P_i\J_\ell=\P_i\J_i,
\]
where we used $\in^{ijk}\in_{jk\ell}=2\de^i_\ell$. Next, we have for $i\in\{1,2,3\}$,
\begin{align*}
\WWW^i&=\frac{1}{2}\in^{i\nu\rho\si}\PPP_\nu\MMM_{\rho\si}\\
&=\frac12\left(\in^{i0jk}\PPP_0\MMM_{jk}+\in^{ij0k}\PPP_j\MMM_{0k}+\in^{ijk0}\PPP_j\MMM_{k0}\right)\\
&=-\frac{\E}{2}\in^{ijk}\in_{jk\ell}\J_\ell+\frac{1}{2}\in^{ijk}\P_j\C_k+\frac{1}{2}\in^{ijk}\P_j\C_k\\
&=-\E\,\J^i+\in^{ijk}\P_j\C_k.
\end{align*}
Finally, we compute
\[
\PPP_\mu\WWW^\mu=\E\,\WWW^0+\P_i\W_i=\E(\P_i\,\J_i)+\P_i(-\E\,\J_i+\in_{ijk}\P_j\C_k)=0.
\]
This completes the proof of Lemma \ref{lem:PL-components}.
\end{proof}
\begin{df}\label{dfn:Poincare}
The \emph{proper orthochronous Lorentz group} is given by
\[
\SO^+(1,3)
:=\{\Lambda\in\GL(4,\RRR)\mid \Lambda^\top\etabf\Lambda=\etabf,\;\Lambda^0{}_0>0\}.
\] 
The \emph{proper orthochronous Poincar\'e group} is the semidirect product
\[
\ISO^+(1,3):= \RRR^{1+3}\rtimes\SO^+(1,3),
\]
Elements of $\mathrm{ISO}^+(1,3)$ are pairs $(\xi,\La)$ with
\[
\xi\in\RRR^{1+3},\qquad\La\in\mathrm{SO}^+(1,3),
\]
and the group law is given by
\[
(\xi,\La)\cdot(\eta,\Ga)= (\xi+\La\eta,\,\La\Ga).
\]
The inverse of $(\xi,\La)$ is given by
\[
(\xi,\La)^{-1}=(-\La^{-1}\xi,\,\La^{-1}).
\]
The action of $(\xi,\La)\in\ISO^+(1,3)$ on spacetime $\RRR^{1+3}$ is defined by
\[x\mapsto\La x+\xi.\]
We introduce the following representation of $\ISO^+(1,3)$ on $\RRR^{1+3}\times\bigwedge^2\RRR^{1+3}$:\footnote{Note that the representation defined in \eqref{actionPPPMMM} coincides with the \emph{coadjoint representation} of $\ISO^+(1,3)$.}
\begin{equation}\label{actionPPPMMM}
(\xi,\La)\c(\PPP,\MMM):=\left(\La_\mu{}^\nu\PPP_\nu,\;\La_\mu{}^\a\La_\nu{}^\b\MMM_{\a\b}+(\xi_\mu\La_\nu{}^\a-\xi_\nu\La_\mu{}^\a)\PPP_\a\right).
\end{equation}
\end{df}
We now prove the following lemma, which identifies two invariant quantities.
\begin{lem}\label{lem:invariants}
Let $(\xi,\La)\in\ISO^+(1,3)$. Defining $(\PPP',\MMM'):=(\xi,\La)\c (\PPP,\MMM)$ and letting $\WWW'=\WWW'(\PPP',\MMM')$ be the Pauli-Lubanski vector of $(\PPP',\MMM')$, we have the invariance of the two Casimirs
\[
\PPP'_{\mu}\PPP'^{\mu}=\PPP_{\mu}\PPP^{\mu},\qquad\quad\WWW'_{\mu}{\WWW'}^{\mu}=\WWW_{\mu}\WWW^{\mu}.
\]
\end{lem}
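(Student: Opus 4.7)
The plan is to reduce both invariances to the statement that, under the Poincar\'e action \eqref{actionPPPMMM}, the Pauli--Lubanski vector $\WWW^\mu$ transforms as a pure Lorentz 4-vector, with no residual dependence on the translation $\xi$. Once this is established, each Casimir is a Minkowski inner product of a vector with itself, and the invariance follows from $\La\in\SO^+(1,3)$ via the identity $\etabf^{\mu\mu'}\La_\mu{}^\a\La_{\mu'}{}^\b=\etabf^{\a\b}$, which is the coordinate form of $\La^\top\etabf\La=\etabf$ combined with $\La_\mu{}^\nu=(\La^{-1})^\nu{}_\mu$.

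For the first invariant, $\PPP'_\mu=\La_\mu{}^\nu\PPP_\nu$ is immediate from the $\PPP$-component of \eqref{actionPPPMMM}, and contracting with $\etabf^{\mu\mu'}\PPP'_{\mu'}$ and using the above Lorentz identity gives $\PPP'_\mu{\PPP'}^\mu=\PPP_\mu\PPP^\mu$, with no contribution from $\xi$.

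For the second invariant, I would substitute \eqref{actionPPPMMM} into \eqref{eq:def-PL} applied to $(\PPP',\MMM')$ and split ${\WWW'}^\mu$ into a Lorentz-only and a $\xi$-dependent piece. The Lorentz-only piece is $\tfrac12\in^{\mu\nu\rho\si}\La_\nu{}^\a\La_\rho{}^\b\La_\si{}^\ga\PPP_\a\MMM_{\b\ga}$, which by the standard Lorentz covariance of the totally antisymmetric symbol together with $\det\La=1$ on $\SO^+(1,3)$ collapses to the Lorentz-transformed vector $\La^\mu{}_\tau\WWW^\tau$. The $\xi$-dependent piece arises from contracting $\tfrac12\in^{\mu\nu\rho\si}\La_\nu{}^\a\PPP_\a$ against $(\xi_\rho\La_\si{}^\b-\xi_\si\La_\rho{}^\b)\PPP_\b$; after relabeling $\rho\leftrightarrow\si$ in the second term, these combine into the single contraction $\in^{\mu\nu\rho\si}\xi_\rho(\La_\nu{}^\a\PPP_\a)(\La_\si{}^\b\PPP_\b)$. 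The factor $(\La_\nu{}^\a\PPP_\a)(\La_\si{}^\b\PPP_\b)$ is manifestly symmetric in $\nu\leftrightarrow\si$, whereas $\in^{\mu\nu\rho\si}$ is antisymmetric in that pair, so this contribution vanishes identically. Therefore ${\WWW'}^\mu=\La^\mu{}_\tau\WWW^\tau$, and the same contraction argument as for $\PPP$ then yields ${\WWW'}_\mu{\WWW'}^\mu=\WWW_\mu\WWW^\mu$.

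The only substantive obstacle is the index bookkeeping around the Lorentz covariance of $\in^{\mu\nu\rho\si}$ and the symmetrization argument killing the $\xi$-contribution; once these two algebraic observations are isolated, both invariances follow without further input. In particular, $\La\in\SO^+(1,3)$ enters only through $\La^\top\etabf\La=\etabf$ and $\det\La=1$, and the translation $\xi$ enters only via an antisymmetric-with-symmetric contraction that is identically zero, which is precisely the Poincar\'e-theoretic reason $\WWW^\mu\WWW_\mu$ is a Casimir of $\ISO^+(1,3)$.
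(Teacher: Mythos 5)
Your proposal is correct and follows essentially the same route as the paper's proof: both establish $\PPP'_\mu{\PPP'}^\mu=\PPP_\mu\PPP^\mu$ directly from $\La^\top\etabf\La=\etabf$, and both show ${\WWW'}^\mu=\La^\mu{}_\nu\WWW^\nu$ by invoking the two key algebraic facts you isolate, namely $\in^{\mu\nu\rho\si}\La_\mu{}^\de\La_\nu{}^\a\La_\rho{}^\b\La_\si{}^\ga=\det(\La)\in^{\de\a\b\ga}=\in^{\de\a\b\ga}$ and the vanishing of the $\xi$-dependent contribution by a symmetric-against-antisymmetric contraction in the indices of $(\La\PPP)_\nu(\La\PPP)_\si$.
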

\begin{proof}
By \eqref{actionPPPMMM}, we have
\begin{equation}\label{PPPconservation}
\PPP'_\mu\PPP'^\mu=\etabf_{\mu\nu}\PPP'^\mu\PPP'^\nu= \etabf_{\mu\nu}\La^\mu{}_\rho\La^\nu{}_\si\PPP^\rho\PPP^\si=\etabf_{\rho\si}\PPP^\rho\PPP^\si=\PPP_\mu\PPP^\mu,
\end{equation}
where we used $\La^\top\etabf\La=\etabf$. Next, we have from Definition \ref{df:physical-states} and \eqref{actionPPPMMM}
\begin{align*}
\WWW'^\mu=\frac{1}{2}\in^{\mu\nu\rho\sigma}\PPP'_\nu\MMM'_{\rho\si}=\frac12\in^{\mu\nu\rho\si}\La_\nu{}^\a\PPP_\a\La_\rho{}^\b\La_\si{}^\ga\MMM_{\b\ga}+\frac{1}{2}\in^{\mu\nu\rho\si}(\La\PPP)_\nu\big(\xi_\rho(\La\PPP)_\si-\xi_\si(\La\PPP)_\rho\big).
\end{align*}
Notice that the following identities are valid:
\begin{align*}
\in^{\mu\nu\rho\si}(\La\PPP)_\nu(\xi_\si(\La\PPP)_\rho -\xi_\rho (\La\PPP)_\si)&=0,\\
\in^{\mu\nu\rho\si}\La_\mu{}^\de\La_\nu{}^\a\La_\rho{}^\b\La_\si{}^\ga=\det(\La)\in^{\de\a\b\ga}&=\in^{\de\a\b\ga}.
\end{align*}
Thus, we obtain
\begin{align*}
\La_\mu{}^\de\WWW'^\mu=\frac12\in^{\de\a\b\ga}\PPP_\a\MMM_{\b\ga}=\WWW^\de,
\end{align*}
which implies $\WWW'^\mu=\La^\mu{}_\nu \WWW^\nu$. Proceeding as in \eqref{PPPconservation}, we deduce that $\WWW'_{\mu}{\WWW'}^{\mu}=\WWW_{\mu}\WWW^{\mu}$. This concludes the proof of Lemma \ref{lem:invariants}.
\end{proof}
\begin{rk}
The representation of the Lie group $\ISO^+(1,3)$ defined in \eqref{actionPPPMMM} induces, by differentiation at the identity, a representation of the Lie algebra $\iso(1,3):=\RRR^{1+3}\rtimes\so(1,3)$:
\begin{align*}
(c,\Om)\c(\PPP,\MMM)
=\Big(\Om_\mu{}^\a\PPP_\a,\;
\Om_\mu{}^\a\MMM_{\a\nu}+\Om_\nu{}^\a\MMM_{\a\mu}
+c_\mu\PPP_\nu-c_\nu\PPP_\mu\Big).
\end{align*}
Let $U(\iso(1,3))$ denote the universal enveloping algebra of $\iso(1,3)$ and let $Z(U(\iso(1,3)))$ be its center. It is isomorphic to a polynomial algebra generated by two algebraically independent Casimir elements, which may be represented by
\[
\PPP_\mu\PPP^\mu
\qquad\text{and}\qquad
\WWW_\mu\WWW^\mu,
\]
where $\WWW^\mu$ is the Pauli--Lubanski vector defined in \eqref{eq:def-PL}. These Casimir elements correspond to $\ISO^+(1,3)$--invariant polynomial functions on $\RRR^{1+3}\times\bigwedge^2\RRR^{1+3}$ and are therefore constant on each Poincar\'e coadjoint orbit.

Lemma \ref{lem:invariants} verifies this invariance directly for the action \eqref{actionPPPMMM}, while Proposition \ref{prop:orbit-massive-spinning} shows that, in the massive spinning case, these two invariants completely characterize the $\ISO^+(1,3)$--orbit. For a physics-oriented discussion of this coadjoint-orbit interpretation, see \cite{AGGKM}.
\end{rk}
\begin{prop}\label{prop:orbit-massive-spinning}
Let $(\PPP,\MMM)$ be a state encoded by the charges $(\E,\P,\C,\J)$. In view of the invariance in Lemma \ref{lem:invariants}, we introduce the following constraint set:
\[
\RR_{\PPP,\MMM}=\left\{(\PPP',\MMM')\big/\;\PPP'_0>0,\quad\PPP'_\mu\PPP'^\mu=\PPP_\mu\PPP^\mu,\quad\WWW'_\mu\WWW'^\mu=\WWW_\mu\WWW^\mu\right\}.
\]
We also denote $\Orb_{\PPP,\MMM}$ the $\ISO^+(1,3)$--orbit of $(\PPP,\MMM)$ in $\RRR^{1+3}\times\bigwedge^2\RRR^{1+3}$. Then, we have
\begin{align*}
    \Orb_{\PPP,\MMM}=\RR_{\PPP,\MMM}.
\end{align*}
\end{prop}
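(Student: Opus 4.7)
The plan is to establish the two inclusions separately. The inclusion $\Orb_{\PPP,\MMM}\subseteq\RR_{\PPP,\MMM}$ follows directly from Lemma \ref{lem:invariants}, together with the observation that $\SO^+(1,3)$ preserves the sign of $\PPP_0$ while translations leave $\PPP$ unchanged; hence the positivity condition $\PPP'_0>0$ persists along the orbit.

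The substance lies in the reverse inclusion $\RR_{\PPP,\MMM}\subseteq\Orb_{\PPP,\MMM}$. My strategy is to exhibit a canonical representative in each orbit and show that every element of $\RR_{\PPP,\MMM}$ can be transported to it by an explicit element of $\ISO^+(1,3)$. Under the massive condition $\E>|\P|$ from the problem setting, one has $m:=\sqrt{|\PPP_\mu\PPP^\mu|}>0$. Given $(\PPP',\MMM')\in\RR_{\PPP,\MMM}$, the reduction proceeds in three steps. \textbf{Boost to the rest frame:} since $\PPP'$ lies on the future mass hyperboloid of mass $m$, an explicit Lorentz boost $\La_1\in\SO^+(1,3)$ in the direction opposite to $\P'/\E'$ carries $\PPP'$ to $(m,0,0,0)$. \textbf{Translate to kill the center-of-mass:} with $\PPP'=(m,0,0,0)$, the action \eqref{actionPPPMMM} of a pure translation $(\xi,I)$ gives $\MMM'_{\mu\nu}\mapsto\MMM'_{\mu\nu}+\xi_\mu\PPP'_\nu-\xi_\nu\PPP'_\mu$; in the rest frame this shifts $\MMM'_{0i}$ by $-m\xi_i$ while leaving $\MMM'_{ij}$ unchanged. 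Choosing $\xi_i=\MMM'_{0i}/m$ thus reduces to $\C'=0$. \textbf{Rotate to align the spin:} a spatial rotation $R\in\SO(3)\subset\SO^+(1,3)$ fixes $\PPP'$, preserves $\C'=0$, and aligns $\J'$ with the $z$--axis, yielding the canonical form $\PPP'=(m,0,0,0)$, $\C'=0$, $\J'=(0,0,s)$ with $s=|\J'|\ge 0$.

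To conclude, I verify that $(m,s)$ is determined by the two Casimirs, so that the canonical representative is unique. In the canonical frame, Lemma \ref{lem:PL-components} gives $\WWW'^0=0$ and $\W'_i=-m\J'_i$, so $\WWW'$ is purely spatial of magnitude $ms$, and hence $|\WWW'_\mu\WWW'^\mu|=m^2s^2$. Applying the same three-step reduction to the reference state $(\PPP,\MMM)$ produces a canonical form with parameters $(m,\tilde s)$, and the equalities $\PPP'_\mu\PPP'^\mu=\PPP_\mu\PPP^\mu$ and $\WWW'_\mu\WWW'^\mu=\WWW_\mu\WWW^\mu$ force $\tilde s=s$. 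If $T,T'\in\ISO^+(1,3)$ denote the elements realizing the two reductions, then $T'^{-1}T$ sends $(\PPP,\MMM)$ to $(\PPP',\MMM')$, which exhibits $(\PPP',\MMM')\in\Orb_{\PPP,\MMM}$.

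The only delicate technical point is the translation step: one must verify that the translation needed to kill $\MMM'_{0i}$ does not disturb $\MMM'_{ij}$. This reduces to the identity $\xi_i\PPP'_j-\xi_j\PPP'_i=0$ when $\PPP'_i=0$, which holds precisely in the rest frame, and is the reason the boost must precede the translation. Beyond this ordering constraint, the argument is a clean instance of the orbit method for the Poincar\'e group specialized to the massive sector, where $(\PPP_\mu\PPP^\mu,\WWW_\mu\WWW^\mu)$ together with $\PPP_0>0$ furnish a complete set of invariants.
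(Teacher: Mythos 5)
Your proof is correct and follows essentially the same route as the paper: both arguments establish $\Orb_{\PPP,\MMM}\subseteq\RR_{\PPP,\MMM}$ via the Casimir invariance of Lemma \ref{lem:invariants}, and both prove the reverse inclusion by a three-step reduction (boost to the rest frame to normalize $\PPP'$, spatial translation to kill $\C'$, spatial rotation to align $\J'$ with the $z$-axis) followed by composing the reduction for $(\PPP',\MMM')$ with the inverse of the reduction for $(\PPP,\MMM)$. Your explicit remark that the translation leaves $\MMM'_{ij}$ untouched precisely because $\PPP'_i=0$ in the rest frame — which is why the boost must precede the translation — and your note that orthochronous transformations preserve the sign of $\PPP_0$ are small clarifications that are implicit but unstated in the paper's version.
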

\begin{proof}
By Lemma \ref{lem:invariants}, we have $\Orb_{\PPP,\MMM}\subseteq\RR_{\PPP,\MMM}$. We now prove the inverse inclusion, which proceeds by successive Lorentz boost, spatial translation, and spatial rotation. Let $(\PPP',\MMM')$ be any state encoded by the charges $(\E',\P',\C',\J')$ satisfying
\begin{align*}
\PPP'_0>0,\qquad\PPP'_\mu\PPP'^\mu=\PPP_\mu\PPP^\mu,\qquad\WWW'_\mu\WWW'^\mu=\WWW_\mu\WWW^\mu.
\end{align*}
{\bf Lorentz boost.} We first define
\[
v_i:=\frac{\P'_i}{\E'},\qquad\quad\ga:=\frac{1}{\sqrt{1-|\vec{v}|^2}}=\frac{\E'}{\sqrt{\E'^2-|\P'|^2}}.
\]
Let $B(\vec{v})$ be the corresponding Lorentz boost defined by the following matrix\footnote{Here, $\vec{v}=(v_1,v_2,v_3)^\top$ is a $3$--dimensional column vector and $|\vec{v}|:=\sqrt{v_1^2+v_2^2+v_3^2}$.}
\begin{align*}
B(\vec{v}):=\begin{pmatrix}\ga &-\ga\vec{v}^\top\\ -\ga\vec{v} & I_3+(\ga-1)\frac{\vec{v}\vec{v}^\top}{|\vec{v}|^2}
\end{pmatrix}.
\end{align*}
We then compute
\begin{align*}
B(\vec{v})\PPP'&=\begin{pmatrix}
\ga&-\ga\vec{v}^\top\\-\ga\vec{v} & I_3+(\ga-1)\frac{\vec{v}\vec{v}^\top}{|\vec{v}|^2}
\end{pmatrix}\begin{pmatrix}
    \E' \\ \P'
\end{pmatrix}=\begin{pmatrix}
\ga(\E'-v^i\P'_i)\\
-\ga\vec{v}\,\E'+\big(I_3+(\ga-1)\frac{\vec{v}\vec{v}^\top}{|\vec{v}|^2}\big)\P'
\end{pmatrix}\\
&=\begin{pmatrix}
\ga\frac{\E'^2-|\P'|^2}{\E'} \\ -\ga\P'+\P'+(\ga-1)\P'
\end{pmatrix}=\begin{pmatrix}
    \sqrt{\E'^2-|\P'|^2} \\ 0
\end{pmatrix}=\begin{pmatrix}
    \sqrt{-\PPP_\mu\PPP^\mu} \\ 0
\end{pmatrix}.
\end{align*}
Denoting
\[
(\PPP_B,\MMM_B):=(0,B(\vec{v}))\c(\PPP',\MMM'),
\]
we have from Lemma \ref{lem:invariants} that
\[
\PPP_B=(\sqrt{-\PPP_\mu\PPP^\mu},0,0,0),\qquad\quad (\WWW_B)_\mu(\WWW_B)^\mu=\WWW_\mu\WWW^\mu.
\]
{\bf Space translation.} Denoting $(\E_B,\P_B,\C_B,\J_B)$ the encoded charges of the state $(\PPP_B,\MMM_B)$, we introduce the following translation:
\begin{align*}
    \xi^\mu:=\left(0,\frac{\C_B}{\E_B}\right).
\end{align*}
Denoting $(\PPP_T,\MMM_T):=(\xi,0)\c(\PPP_B,\MMM_B)$, we have from \eqref{actionPPPMMM} that $\PPP_T=\PPP_B$ and
\begin{align*}
    (\MMM_T)_{\mu\nu}=(\MMM_B)_{\mu\nu}+(\xi_\mu\de_{\nu}{}^\a-\xi_\nu\de_{\mu}{}^\a)(\PPP_B)_\a=(\MMM_B)_{\mu\nu}+(\xi^\mu\de_\nu{}^0-\xi^\nu\de_\mu{}^0)\E_B.
\end{align*}
Hence, we infer
\begin{align*}
    (\MMM_T)_{0i}=(\C_B)_i-\xi_i\E_B=0.
\end{align*}
By Lemma \ref{lem:invariants}, $(\PPP_T,\MMM_T)$ satisfies:
\begin{align*}
    \PPP_T=(\sqrt{-\PPP_\mu\PPP^\mu},0,0,0),\qquad\quad (\WWW_T)_\mu(\WWW_T)^\mu=\WWW_\mu\WWW^\mu.
\end{align*}
Denoting $(\E_T,\P_T,\C_T,\J_T)$ the charges of the state $(\PPP_T,\MMM_T)$, we have
\begin{align*}
    \E_T=\sqrt{-\PPP_\mu\PPP^\mu},\qquad\quad\P_T=0,\qquad\quad\C_T=0.
\end{align*}
{\bf Space rotation.} Since $\{(\MMM_T)_{ij}\}_{1\leq i,j\leq 3}$ is a $3\times 3$ antisymmetric matrix. There exists $R\in SO(3)$ such that
\begin{align}\label{canonicalformMT}
    R\MMM_TR^\top=\begin{pmatrix}
        0 & |\J_T| & 0\\
        -|\J_T| & 0 & 0 \\
        0 & 0 & 0
    \end{pmatrix}.
\end{align}
We then introduce $\La_R$ the Lorentz transform defined by
\begin{align*}
    \La_R=\begin{pmatrix}
        1 & 0 \\ 0 & R
    \end{pmatrix}.
\end{align*}
Denoting $(\PPP_R,\MMM_R):=(0,\La_R)\c(\PPP_T,\MMM_T)$, we have $\PPP_R=\PPP_T$ and
\begin{align*}
(\MMM_R)_{\mu\nu}=(\La_R)_\mu{}^\a(\La_R)_\nu{}^\b(\MMM_T)_{\a\b}.
\end{align*}
Thus, we infer
\begin{align*}
    (\MMM_R)_{0i}&=(\La_R)_0{}^0(\La_R)_i{}^j(\MMM_T)_{0j}=R_i{}^j(\C_T)_j=0,\\
    (\MMM_R)_{ij}&=R_i{}^kR_j{}^l(\MMM_T)_{kl}=(R\MMM_TR^\top)_{ij}.
\end{align*}
Letting $(\E_R,\P_R,\C_R,\J_R)$ the encoded charges of $(\PPP_R,\MMM_R)$, we deduce from \eqref{canonicalformMT}
\begin{align*}
    (\J_R)_1=0,\qquad\quad (\J_R)_2=0,\qquad\quad (\J_R)_3=|\J_T|.
\end{align*}
Combining the above identities, we infer
\begin{align*}
    \E_R=\sqrt{-\PPP_\mu\PPP^\mu},\qquad \P_R=0,\qquad \C_R=0,\qquad \J_R=(0,0,|\J_R|).
\end{align*}
Moreover, we have from Lemma \ref{lem:invariants}
\begin{align*}
    \WWW_\mu\WWW^\mu=(\WWW_R)_\mu\WWW_R^\mu=|\E_R\,\J_R|^2=-\PPP_\mu\PPP^\mu|\J_R|^2.
\end{align*}
Thus, we obtain
\begin{align*}
    \E_R=\sqrt{-\PPP_\mu\PPP^\mu},\qquad \P_R=0,\qquad\C_R=0,\qquad \J_R=\left(0,0,\sqrt{-\frac{\WWW_\mu\WWW^\mu}{\PPP_\mu\PPP^\mu}}\right).
\end{align*}
\paragraph{Conclusion.}
Combining the above steps, for any $(\PPP',\MMM')\in\RR_{\PPP,\MMM}$, there exists $(\xi,\La):=(0,\La_R)\circ(\xi,0)\circ(0,B(\vec v))\in\ISO^+(1,3)$ such that
\begin{align*}
(\xi,\La)\circ\PPP'=(\sqrt{-\PPP_\mu\PPP^\mu},0,0,0),\qquad
(\xi,\La)\circ\MMM'=
\begin{pmatrix}
0 & 0 & 0 & 0\\
0 & 0 & \sqrt{-\frac{\WWW_\mu\WWW^\mu}{\PPP_\mu\PPP^\mu}} & 0\\
0 & -\sqrt{-\frac{\WWW_\mu\WWW^\mu}{\PPP_\mu\PPP^\mu}} & 0 & 0\\
0 & 0 & 0 & 0
\end{pmatrix}.
\end{align*}
Applying the same construction to $(\PPP,\MMM)$, there exists
$(\xi_0,\La_0)\in\ISO^+(1,3)$ such that
\[
(\xi_0,\La_0)\circ(\PPP,\MMM)=(\xi,\La)\circ(\PPP',\MMM').
\]
Hence
\[
(\PPP',\MMM')=(\xi,\La)^{-1}\circ(\xi_0,\La_0)\circ(\PPP,\MMM)\in\Orb_{\PPP,\MMM},
\]
and therefore $\RR_{\PPP,\MMM}\subseteq\Orb_{\PPP,\MMM}$. This completes the proof of Proposition \ref{prop:orbit-massive-spinning}.
\end{proof}
\begin{prop}\label{mainergodic}
Let $\E_*>0$ and $\P_*,\J_*\in\RRR^3$ be fixed constants satisfying $\E_*>|\P_*|$. Then, there exists a Kerr metric $\g_{m,a}$ in Kerr-Schild coordinates $(t,x,y,z)$ and a Poincar\'e transform:
\begin{align}\label{Phiiso}
    x'^\mu=\La^\mu{}_\nu x^\nu+\xi^\mu
\end{align}
such that the following holds:
\begin{itemize}
\item Let $(g',k')$ be the induced initial data obtained by restricting $\g_{m,a}$ on $\Si'_0:=\{t'=0\}$. Then, we have
    \begin{align}\label{g'k'behavior}
        g'-e=O(r'^{-1}),\qquad\quad k'=O(r'^{-2}),
    \end{align}
where $r':=\sqrt{x'^2+y'^2+z'^2}$ denotes the Euclidean distance from $(x',y',z')$ to $\vec{0}$ and the constants involved in $O$ depend only on $\E_*$, $\P_*$ and $\J_*$.
\item For any $r>0$, we have the following identities on $S_r':=\{(t',x',y',z'):t'=0,r'=r\}$:
\begin{align}
\begin{split}\label{S_r'charges}
    \E[(g',k');S'_r]&=\E_*+O(r^{-1}),\qquad \P[(g',k');S'_r]=\P_*+O(r^{-1}),\\
    \C[(g',k');S'_r]&=O(r^{-1}),\qquad\qquad\,\,\,\,\J[(g',k');S'_r]=\J_*+O(r^{-1}).
\end{split}
\end{align}
\item The ADM charges of $(g',k')$ are given by:
\begin{align}
\begin{split}\label{ADMg'k'}
\E^{ADM}[(g',k')]&=\E_*,\qquad\quad\; \P^{ADM}[(g',k')]=\P_*,\\
\C^{ADM}[(g',k')]&=0,\qquad\qquad\,\J^{ADM}[(g',k')]=\J_*.
\end{split}
\end{align}
\end{itemize}
\end{prop}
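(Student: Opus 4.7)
The plan is to realize the target charges $(\E_*,\P_*,0,\J_*)$ as the image of a Kerr ADM state under a Poincar\'e transformation, using the orbit description of Proposition~\ref{prop:orbit-massive-spinning}. Denote by $(\PPP_*,\MMM_*)$ the target state encoded by $(\E_*,\P_*,0,\J_*)$ (Definition~\ref{df:physical-states}), and by $(\PPP_0,\MMM_0)$ the Kerr state encoded by $(8\pi m,0,0,8\pi am\,\ev_z)$, which is the ADM state of $\g_{m,a}$ by Proposition~\ref{KerrADM} together with \eqref{PPPJJJEPCJ}. A direct computation of the two Casimirs gives $-(8\pi m)^2$ and $(8\pi)^4 a^2m^4$ for the Kerr state, and $-(\E_*^2-|\P_*|^2)$ and $\E_*^2|\J_*|^2-(\P_*\cdot\J_*)^2$ for the target. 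The second target Casimir is nonnegative since $\E_*^2|\J_*|^2\geq|\P_*|^2|\J_*|^2\geq(\P_*\cdot\J_*)^2$ by $\E_*>|\P_*|$ and Cauchy-Schwarz. Matching the Casimirs yields the explicit choice
\[
m:=\frac{\sqrt{\E_*^2-|\P_*|^2}}{8\pi}\in\RRR_+,\qquad
a:=\frac{\sqrt{\E_*^2|\J_*|^2-(\P_*\cdot\J_*)^2}}{(8\pi m)^2}\in\RRR_{\geq 0}.
\]

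\textbf{Existence of the Poincar\'e transformation.} With these parameters, both $(\PPP_0,\MMM_0)$ and $(\PPP_*,\MMM_*)$ lie in $\RR_{\PPP_*,\MMM_*}$. Proposition~\ref{prop:orbit-massive-spinning} then produces $(\xi,\La)\in\ISO^+(1,3)$ satisfying
\[
(\xi,\La)\cdot(\PPP_0,\MMM_0)=(\PPP_*,\MMM_*),
\]
and I take \eqref{Phiiso} to be the associated Poincar\'e map applied to $\g_{m,a}$ written in Kerr-Schild coordinates.

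\textbf{Verification of the asymptotic statements.} For \eqref{g'k'behavior}, the pullback of the Kerr-Schild form $\g_{m,a}=\etabf+2H\ellbf\otimes\ellbf$ (with $H=\O^1_1$ and $\ellbf$ bounded) along \eqref{Phiiso} retains the Kerr-Schild algebraic structure, and the slice $\{t'=0\}$ is uniformly spacelike for large $r'$ because the boost in $\La$ has speed $|\vec v|<1$ thanks to $\E_*>|\P_*|$; a direct computation of the induced lapse, shift and second fundamental form then yields the stated decay with constants depending only on $(\E_*,\P_*,\J_*)$. For \eqref{S_r'charges}, I combine Proposition~\ref{KerrADM} (which, via \eqref{PPPJJJEPCJ}, gives $\PPP_\nu[\dot\g;S_r]=(\PPP_0)_\nu+O(r^{-1})$ and $\MMM_{\a\b}[\dot\g;S_r]=(\MMM_0)_{\a\b}+O(r^{-1})$) with the Poincar\'e covariance \eqref{eq:Poincare-domain} applied to our specific $(\xi,\La)$, to obtain
\[
(\PPP,\MMM)[\dot\g;S_r']=(\xi,\La)\cdot(\PPP_0,\MMM_0)+O(r^{-1})=(\PPP_*,\MMM_*)+O(r^{-1}),
\]
which translates back to \eqref{S_r'charges} via Definition~\ref{df:physical-states}. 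Finally, \eqref{ADMg'k'} follows by sending $r\to\infty$ in \eqref{S_r'charges}. The main technical subtlety is the decay analysis on the boosted hyperplane $\{t'=0\}$, which is not a level set of the Kerr-Schild time function; however, the explicit Kerr-Schild form combined with the controlled remainder of \eqref{eq:Poincare-domain} (traceable to \cite[Proposition~E.1]{CD}) is sufficient.
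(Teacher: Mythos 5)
Your proposal is correct and takes essentially the same route as the paper: it picks the same $m,a$ by matching the two Casimirs of Lemma~\ref{lem:invariants} (indeed $(8\pi m)^2=\E_*^2-|\P_*|^2$ makes your formula for $a$ coincide with the paper's), invokes Proposition~\ref{prop:orbit-massive-spinning} for the Poincar\'e transform, derives the decay \eqref{g'k'behavior} from the explicit Kerr--Schild form, obtains \eqref{S_r'charges} by combining Proposition~\ref{KerrADM} with \eqref{PPPJJJEPCJ} and the covariance \eqref{eq:Poincare-domain}, and passes to the limit for \eqref{ADMg'k'}. The only addition beyond the paper is your explicit Cauchy--Schwarz check that the spin Casimir is nonnegative, which the paper leaves implicit.
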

\begin{proof}
We first define
\begin{align*}
m=\frac{1}{8\pi}\sqrt{\E_*^2-|\P_*|^2},\qquad\quad a=\frac{\sqrt{\E_*^2|\J_*|^2-(\P_*\c\J_*)^2}}{\E_*^2-|\P_*|^2}.
\end{align*}
Then, we denote
\begin{align*}
    \E=8\pi m,\qquad \P=0,\qquad \C=0,\qquad \J=8\pi am\e_z.
\end{align*}
Let $(\PPP,\MMM)$ and $(\PPP_*,\MMM_*)$ be the states of charges $(\E,\P,\C,\J)$ and $(\E_*,\P_*,\C_*,\J_*)$, respectively. Then, we have from Lemma \ref{lem:PL-components}
\begin{align*}
(\PPP_*)_\mu(\PPP_*)^\mu&=-\E_*^2+|\P_*|^2=-(8\pi m)^2=-\E^2=\PPP_\mu\PPP^\mu,\\
(\WWW_*)_\mu(\WWW_*)^\mu&=-(\P_*\c\J_*)^2+\E_*^2|\J_*|^2=(8\pi m)^4a^2=\E^2|\J|^2=\WWW_\mu\WWW^\mu.
\end{align*}
By Proposition \ref{prop:orbit-massive-spinning}, there exists $(\xi,\La)\in\ISO^+(1,3)$ such that $(\xi,\La)\c(\PPP,\MMM)=(\PPP_*,\MMM_*)$. Then, \eqref{Phiiso} follows immediately from \eqref{Lorentztransform}. Moreover, we have trivially
\begin{align*}
    \|\g_{m,a}-\etabf\|_{C^s}\les\frac{m+|a|}{r'},\qquad \forall\; s\in\NNN,
\end{align*}
which implies \eqref{g'k'behavior}. We recall from Proposition \ref{KerrADM}
\begin{align*}
\E[(g,k);\pr B_r]&=8\pi m+\O_2^3,\qquad\quad \P[(g,k);\pr B_r]=\O_2^3,\\
\C[(g,k);\pr B_r]&=\O_1^3,\qquad\qquad\qquad\;\,\J[(g,k);\pr B_r]=8\pi am\ev_z+\O_1^3,
\end{align*}
where $(g,k)$ denotes the induced initial data by restricting $\g_{m,a}$ on $\{t=0\}$. Using the charge identities \eqref{PPPJJJEPCJ} together with the Poincar\'e transformation formulas \eqref{eq:Poincare-domain}, we obtain
\begin{align*}
    \E[(g',k');S'_r]&=\E_*+O(r^{-1}),\qquad \P[(g',k');S'_r]=\P_*+O(r^{-1}),\\
    \C[(g',k');S'_r]&=O(r^{-1}),\qquad\qquad\,\,\,\,\J[(g',k');S'_r]=\J_*+O(r^{-1}).
\end{align*}
Taking $r\to\infty$, we obtain \eqref{ADMg'k'}. This concludes the proof of Proposition \ref{mainergodic}.
\end{proof}
\section{Proof of Theorem \ref{maintheorem}}\label{sec:Cauchy}
We begin by recalling a preliminary result from \cite{ShenWan2}, which constructs
a well–controlled spacelike short–pulse slice serving as the local interior
model in the proof of Theorem \ref{maintheorem}.
\begin{lem}[Theorem 4.27 in \cite{ShenWan2}]\label{interiorsolution}
For any $s\in\mathbb{N}$, there exists a sufficiently small $\ep>0$. For any $0<\de\leq \ep^2$ and $R>0$, there exists a spacelike initial data $\Si(\de,\ep,R):=(\Si_{\de,\ep,R},g_{\de,\ep,R},k_{\de,\ep,R})$ solving \eqref{Econstraint}, endowed with a radial $r$--foliation for $r\in(0,2R)$, which satisfies the following properties: \begin{enumerate}
\item We have
\begin{align}
\begin{split}\label{diffge}
&(g_{\de,\ep,R},k_{\de,\ep,R})=(e,0)\qquad\qquad\qquad\qquad\mbox{ in }\;B_{(1-2\de)R},\\
&R^{-\frac{3}{2}}\|g_{\de,\ep,R}-e\|_{H^s(A_R)}+R^{-\frac{1}{2}}\|k_{\de,\ep,R}\|_{H^{s-1}(A_R)}\les\ep.
\end{split}
\end{align}
\item Trapped surfaces will form in $D^+(B_R)$.
\end{enumerate}
Moreover, $A_R$ is called the \emph{barrier annulus}, and $B_R\setminus\ov{B_{(1-2\de)R}}$ is called the \emph{short-pulse annulus}.
\end{lem}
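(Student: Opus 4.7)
The plan is to build a short-pulse characteristic initial value problem whose local development contains both an exactly flat interior ball and a future trapped surface, and then to realize $\Si_{\de,\ep,R}$ as a carefully chosen spacelike slice inside that spacetime. By scaling covariance of the vacuum constraints, I would first reduce to the normalized case $R=1$: under the rescaling $x\mapsto Rx$ together with $k\mapsto R^{-1}k$, the $L^2$-norms of $g-e$ and $k$ carry exactly the weights $R^{3/2}$ and $R^{1/2}$ appearing in \eqref{diffge}, so the estimates for general $R$ reduce to the unit case.

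For $R=1$, the central object is a short-pulse seed in the spirit of Christodoulou \cite{Chr} and An-Luk \cite{AnLuk}. I would prescribe the conformal shear $\hat{\chi}$ along a truncated outgoing null cone to be a smooth profile supported in an affine-parameter window of width $2\de$, with amplitude of order $\de^{-1/2}$, while the complementary incoming cone carries flat Minkowski data. Solving the double-null characteristic Cauchy problem using the relaxed short-pulse hierarchy produces a vacuum development $(\M,\g)$ extending past the predicted trapped sphere. Two outputs are then extracted: the short-pulse a priori estimates, which bound the renormalized connection and curvature coefficients by powers of $\de^{1/2}$ in the region dual to the barrier annulus; and Christodoulou's trapping criterion, which forces a future trapped $2$-surface to appear once the integrated $L^2$-mass of $\hat{\chi}$ exceeds the area radius. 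Finite propagation ensures that every point causally separated from the pulse is isometric to Minkowski.

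Inside $(\M,\g)$ I would then select $\Si_{\de,\ep,1}$ by interpolating between the Minkowski slice $\{t=0\}$ in the past of the pulse, a slightly perturbed spacelike slice crossing the short-pulse annulus, and a near-flat spacelike slice in the outer barrier region. The induced $(g_{\de,\ep,1},k_{\de,\ep,1})$ automatically solves the vacuum constraints by Gauss-Codazzi, the flatness inside $B_{1-2\de}$ follows because that ball lies in the Minkowski portion of $(\M,\g)$, and the estimates in \eqref{diffge} translate the pulse hierarchy into the claimed Sobolev bounds on the slice. Trapped surface formation in $D^+(B_1)$ is inherited from the trapped sphere already present in the development, since the short-pulse annulus lies in $B_1$ and the predicted trapped surface sits in its causal future.

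The main obstacle is the compatibility of short-pulse concentration with \emph{exact} flatness of the inner ball. Standard short-pulse data is supported along a full outgoing cone, not just an annular band, so one must truncate the seed cleanly at the inner radius and verify that this truncation neither destroys the trapping criterion nor leaks curvature into $B_{1-2\de}$. I would address this by introducing a smooth radial cutoff adapted to the pulse support, using finite propagation to keep the local development strictly Minkowski inside $B_{1-2\de}$, and exploiting the smallness hypothesis $\de\le \ep^2$ to absorb the cutoff error terms into the nonlinear short-pulse hierarchy. Showing that Christodoulou's trapping criterion persists after this truncation is the technical crux, and relies on the monotonicity of the integrated shear along the pulse together with the fact that the outer barrier annulus remains at distance of order unity from the concentration region, so its contribution to the hierarchy is negligible.
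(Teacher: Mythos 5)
The paper does not actually prove this lemma: it is imported verbatim as Theorem 4.27 of \cite{ShenWan2}, so there is no in-paper argument to compare your proposal against. That said, your outline does follow the strategy of the cited source and of its predecessors: a Christodoulou/An--Luk short-pulse characteristic seed supported in a thin annular window, flat data in the causal past of the pulse so that the inner ball is exactly Minkowskian by uniqueness, a trapped sphere produced by the focusing criterion, and a spacelike slice extracted from the development in the spirit of Li--Yu \cite{LY}. The scaling reduction to $R=1$ is also the right normalization (it is the same rescaling $(g,k)\mapsto(g(R\cdot),Rk(R\cdot))$ the paper uses for the annular gluing theorem), although note that $H^s$ is inhomogeneous under dilation, so the single weights $R^{-3/2}$ and $R^{-1/2}$ in \eqref{diffge} do not fall out of scaling alone and reflect a specific normalization in \cite{ShenWan2}.

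The two places where your sketch asserts rather than proves are exactly where the content of the cited theorem lies. First, the claim that the truncation of the seed to an annular band ``neither destroys the trapping criterion nor leaks curvature into $B_{(1-2\de)R}$'' and that the cutoff errors ``can be absorbed into the nonlinear short-pulse hierarchy'' is the technical core: one must verify the full relaxed hierarchy for the anisotropic, compactly supported pulse and track the precise coupling between the pulse width $\de$, the amplitude, and the smallness $\ep$ of the induced data on the barrier annulus $A_R$ (this pairing, with $\de\le\ep^2$, is what simultaneously yields \eqref{diffge} and a trapped surface in $D^+(B_R)$, and it is not automatic). Second, ``interpolating'' three slices is not enough: the slice must be globally spacelike while crossing the region where the short-pulse components of the metric are large ($O(\de^{-1/2})$ in some null frame components), it must carry the radial $r$--foliation on $(0,2R)$, and the induced $(g,k)$ on that foliation must satisfy the quantitative positivity $\tr_g(\th_{S_r}-k_{S_r})>0$ used later in Step~2 of the proof of Theorem \ref{maintheorem} (the paper invokes (6.14) of \cite{ShenWan2} for this). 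None of these properties follows from Gauss--Codazzi alone; they constrain the choice of slice and are the reason \cite{ShenWan2} calls it a \emph{well-prepared} short-pulse slice. So your proposal is the right roadmap but, as written, it defers the decisive estimates to the very hierarchy it is supposed to establish.
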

Lemma \ref{interiorsolution} provides a spacelike realization of the short-pulse mechanism inside a compact region, with quantitative control suitable for gluing. A closely related perspective is developed in the recent work of Chen-Klainerman \cite{ChenKlainerman}, where trapped surface formation is achieved purely at the level of spacelike Cauchy data, by identifying the appropriate freely prescribable components in the elliptic-transport formulation of the vacuum constraint equations.

We now glue in such an interior short-pulse region from \cite{ShenWan2} into an asymptotically flat initial data set with prescribed ADM parameters.
\begin{prop}\label{1BHformation}
Let $s\in\mathbb{N}$ and let $(\E_*,\P_*,\J_*)\in\RRR_+\times\RRR^3\times\RRR^3$ satisfy $\E_*>|\P_*|$. Then, there exist a sufficiently large $R>0$ and a sufficiently small $0<\ep\leq R^{-3}$ such that the following holds. For any $0<\de\leq \ep^2$, there exists a spacelike initial data $(\Si,g,k)$ solving \eqref{Econstraint}, endowed with a radial $r$--foliation for $r>0$, which satisfies the following properties:
\begin{enumerate}
\item We have
    \begin{align}
    \begin{split}
        &(g,k)=(e,0)\qquad\qquad\qquad\qquad\qquad\qquad\qquad\, \mbox{ in }\;B_{(1-2\de)R},\\
        &R^{-1}\|g-e\|_{H^s(B_{64R}\setminus\ov{B_R})}+\|k\|_{H^{s-1}(B_{64R}\setminus\ov{B_R})}\les 1,\\
        &(g,k)=(g',k')\qquad\qquad\qquad\qquad\qquad\qquad\quad\;\;\mbox{ in }\;B_{32R}^c,
    \end{split}
    \end{align}
    with $(g',k')$ the initial data obtained in Proposition \ref{mainergodic} with parameters $(\E_*,\P_*,\J_*)$.
    \item Trapped surfaces will form in $D^+(B_R)$.
\end{enumerate}
\end{prop}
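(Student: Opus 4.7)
The plan is to apply the \refAglu to merge a short-pulse interior from Lemma \ref{interiorsolution} with the boosted Kerr exterior from Proposition \ref{mainergodic}. First, I would use Proposition \ref{mainergodic} to fix a Kerr initial data set $(g',k')$ whose ADM charges are exactly $(\E_*,\P_*,0,\J_*)$, together with the pointwise decay \eqref{g'k'behavior} and the sphere–charge expansion \eqref{S_r'charges}. Next, with scale $R>0$ and short-pulse parameters $0<\ep\leq R^{-3}$, $0<\de\leq\ep^2$ to be fixed at the end, I would let $(g_{in},k_{in}):=(g_{\de,\ep,R},k_{\de,\ep,R})$ be the interior data from Lemma \ref{interiorsolution}, which is exactly Euclidean on $B_{(1-2\de)R}$, obeys \eqref{diffge} on $A_R$, and triggers trapped surface formation in $D^+(B_R)$. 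The goal is then to patch $(g_{in},k_{in})|_{A_R}$ to $(g',k')|_{A_{32R}}$.

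The main analytic step is the verification of the hypotheses of the \refAglu at radius $r=R$ for the averaged charge difference
\[
\De\Q
=\Q[(g',k');A_{32R}]-\Q[(g_{in},k_{in});A_R].
\]
Using \eqref{S_r'charges} together with the definition \eqref{eq:charge-avg}, the outer contribution is $(\E_*,\P_*,0,\J_*)+O(R^{-1})$. For the inner contribution, the bound \eqref{diffge} combined with Sobolev embedding on $A_R$ controls the integrands in \eqref{EPCJdf} pointwise by $O(R^{-1}\ep)$, so integrating on spheres of area $\sim R^2$ and averaging yields $\Q[(g_{in},k_{in});A_R]=O(R\ep)=O(R^{-2})$. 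Therefore
\[
\De\Q=(\E_*,\P_*,0,\J_*)+O(R^{-1}).
\]
The condition $\De\E>|\De\P|$ holds because $\E_*>|\P_*|$ for $R$ large, the ratio $\De\E/\sqrt{(\De\E)^2-|\De\P|^2}$ is bounded in terms of a constant $\Ga$ depending only on $(\E_*,\P_*)$, and the two smallness conditions in \eqref{eq:obs-free-unit:EPr} reduce to $R\gtrsim\E_*/\ep_o^{2}$ and $R\gtrsim(|\J_*|+1)/(\mu_o\E_*)$. On the norms, the outer bound \eqref{g'k'behavior} gives $R^{-2}\|g'-e\|_{H^s(A_{32R})}^{2}+\|k'\|_{H^{s-1}(A_{32R})}^{2}\les R^{-1}$, while \eqref{diffge} together with $\ep\leq R^{-3}$ gives $R^{-2}\|g_{in}-e\|_{H^s(A_R)}^{2}+\|k_{in}\|_{H^{s-1}(A_R)}^{2}\les R\ep^{2}\leq R^{-5}$, so \eqref{eq:obs-free-unit:datar} is satisfied once $R$ is chosen large enough that both sides are $<\mu_o\De\E\sim\mu_o\E_*$.

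Once the hypotheses are in place, the \refAglu produces $(g,k)\in H^s\times H^{s-1}(B_{64R}\setminus\ov{B_R})$ solving \eqref{Econstraint}, agreeing with $(g_{in},k_{in})$ on $A_R$ and with $(g',k')$ on $A_{32R}$, and obeying the bound $R^{-2}\|g-e\|_{H^s}^{2}+\|k\|_{H^{s-1}}^{2}\les\De\E\les\E_*$, i.e.\ item (i). One then extends globally by gluing in $(g_{in},k_{in})$ on $B_R$ (which coincides with the glued data on $\pr B_R$ since both equal $(g_{in},k_{in})$ on the common overlap $A_R$) and $(g',k')$ on $B_{32R}^c$ (matching on $A_{32R}$). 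The resulting initial data set is Euclidean on $B_{(1-2\de)R}$ because $(g_{in},k_{in})$ is.

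The trapped surface conclusion (item (ii)) follows from finite speed of propagation: the future domain of dependence $D^+(B_R)$ depends only on $(g,k)|_{B_R}=(g_{in},k_{in})|_{B_R}$, and Lemma \ref{interiorsolution} guarantees formation of a trapped surface in that region. The main obstacle I anticipate is the verification stage: one has to ensure that the charge differences and the analytic norms on the two annuli are simultaneously small enough for \eqref{eq:obs-free-unit:EPr}–\eqref{eq:obs-free-unit:datar} to hold, and that nothing in the Kerr charges expansion is large enough to spoil either $\De\E>|\De\P|$ or the ratio bound $<\Ga$. This is the place where the decoupling of scales $\ep\leq R^{-3}\ll R^{-1}$ is essential, and where the precise decay rates of \eqref{g'k'behavior} and \eqref{S_r'charges} provided by Proposition \ref{mainergodic} are used in a nontrivial way.
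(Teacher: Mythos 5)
Your proposal follows the paper's proof essentially verbatim: glue the short-pulse interior from Lemma \ref{interiorsolution} to the boosted Kerr exterior from Proposition \ref{mainergodic} via the \refAglu at scale $r=R$, verifying the hypotheses \eqref{eq:obs-free-unit:EPr}--\eqref{eq:obs-free-unit:datar} from the charge expansions and the decay/smallness bounds, using $\ep\leq R^{-3}$ and $R\gg1$. One small imprecision: you bound all inner-annulus charges by $O(R\ep)$, but the $\C$ and $\J$ integrands in \eqref{EPCJdf} carry an extra factor of $|x|\sim R$, so $|\C[(g_{in},k_{in});A_R]|+|\J[(g_{in},k_{in});A_R]|\les\ep R^2$ (which, with $\ep\leq R^{-3}$, is still $\les R^{-1}$ and suffices for the verification, as the paper notes).
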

\begin{proof}
We denote
\begin{align*}
(g_{in},k_{in}):=(g_{\de,\ep,R},k_{\de,\ep,R}),\qquad\quad (g_{out},k_{out}):=(g',k').
\end{align*}
By Lemma \ref{interiorsolution}, we have
\begin{align*}
    \E[(g_{in},k_{in});A_R]+|\P[(g_{in},k_{in});A_R]|&\les\ep R\les R^{-1},\\
    |\C[(g_{in},k_{in});A_R]|+|\J[(g_{in},k_{in});A_R]|&\les\ep R^2\les R^{-1}\\
    R^{-2}\|g_{in}-e\|_{H^s(A_R)}^2+\|k_{in}\|_{H^{s-1}(A_R)}^2&\les\ep^2 R\les R^{-1}\ll\De\E.
\end{align*}
We also have from Proposition \ref{mainergodic}
\begin{align*}
    \E[(g_{out},k_{out});A_{32R}]&=\E_*+O(R^{-1}),\\ \P[(g_{out},k_{out});A_{32R}],\;\C[(g_{out},k_{out});A_{32R}]&=O(R^{-1}),\\
    \J[(g_{out},k_{out});A_{32R}]&=\J_*+O(R^{-1}).
\end{align*}
It follows that
\begin{align*}
\De\E=\E_*+O(R^{-1}),\qquad \De\P=\P_*+O(R^{-1}),\qquad \De\C=O(R^{-1}),\qquad \De\J=\J_*+O(R^{-1}).
\end{align*}
Thus, we obtain for $\Ga:=\frac{2\E_*}{\sqrt{\E_*^2-|\P_*|^2}}$ and $R\gg 1$:
\begin{align*}
    |\De\P|<|\De \E|,\qquad \frac{\De\E}{\sqrt{|\De\E|^2-|\De\P|^2}}<\Ga,\qquad R^{-2}(|\De\C|+|\De\J|)\ll R^{-1}\De\E\ll 1.
\end{align*}
We also have
\begin{align*}
R^{-2}\|g_{out}-e\|_{H^s(A_{32R})}^2+\|k_{out}\|_{H^{s-1}(A_{32R})}^2\les\int_{A_{32R}}R^{-4}dx\les R^{-1}\ll\De\E.
\end{align*}
Hence, all conditions in \refAglu are valid. Then, there exists $(g, k) \in H^{s} \times H^{s-1}(B_{64R} \setminus \overline{B_{R}})$ that solves \eqref{Econstraint} and
\begin{equation*}
(g,k) = (g_{in},k_{in})\quad \mbox{ on }\; A_R,\qquad\quad (g,k)=(g_{out},k_{out}) \quad \mbox{ on }\; A_{32R}.
\end{equation*}
Moreover, we have
\begin{equation*}
R^{-2}\|g-e\|_{H^{s}(B_{64R}\setminus\ov{B_{R}})}^{2}+\|k\|_{H^{s-1}(B_{64R}\setminus\ov{B_{R}})}^2\les\De\E\les 1.
\end{equation*}
This concludes the proof of Proposition \ref{1BHformation}.
\end{proof}
We are now ready to prove the main theorem.
\begin{proof}[Proof of Theorem \ref{maintheorem}]
The proof is divided into 2 steps.
\paragraph{Step 1. Construction of Cauchy data.} We define $y_I=C\om_I$ for all $I=1,2,\ldots,N$ with $C\gg 1$ as a fixed constant such that
    \begin{equation*}
        B_1(y_I)\cap B_1(y_J)=\emptyset,\qquad\quad\forall\; I\ne J.
    \end{equation*}
Then, by construction, we have that $\{C_{\om_I,\th_I}(y_I)\}_{I=1}^N$ is mutually disjoint for $I=1,2,\ldots,N$. Next, by Proposition \ref{mainergodic}, for any $\cb_I\in\RRR^3$, there exists initial data $(g_I,k_I)$ that solves \eqref{Econstraint} such that the following holds:
    \begin{align}\label{gIkIbehavior}
        g_I-e=O(r_I^{-1}),\qquad\quad k_I=O(r_I^{-2}),\qquad\quad r_I:=|\x-\cb_I|.
    \end{align}
We also have from Proposition \ref{1BHformation} that there exists an initial data set $(g_I',k_I')$ such that:
\begin{align}
        \begin{split}\label{gIkI'est}
        (g_I',k_I')&=(e,0)\qquad\mbox{ in }\;B_{(1-2\de_I)R_I},\\
        (g_I',k_I')&=(g_I,k_I)\quad\;\mbox{ in }\;B_{32R_I(\cb_I)}^c,
        \end{split}
    \end{align}
    and
    \begin{equation}\label{gIkI'Sobolev}
        R_I^{-1}\|g_I'-e\|_{H^s(B_{64R_I}(\cb_I)\setminus\ov{B_R(\cb_I)})}+\|k_I'\|_{H^{s-1}(B_{64R_I}(\cb_I)\setminus\ov{B_R(\cb_I)}}\les 1,
    \end{equation}
    with $0<\de_I\ll R_I^{-6}\ll 1$. We now fix $\cb_I$ by defining $\cb_I:=y_I+C_IR_I\om_I$ with $C_I\gg 1$ such that $B_{64R_I}(\cb_I)\subset C_{\om_I,\frac{1}{2}\th_I}(y_I)$. We then have from \eqref{defOmI} and \eqref{gIkIbehavior} that
    \begin{align*}
        g_I'-e=O(r_I^{-1}),\qquad\quad k_I'-e=O(r_I^{-2})\qquad \mbox{ in }\;\Om_I,
    \end{align*}
    where $\Om_I$ is defined in \eqref{defOmI}. Thus, we infer for any $\de<-\frac{1}{2}$
    \begin{align*}
    \|(g'_I-e,k'_I)\|_{H_b^{s,\de}\times H_b^{s-1,\de+1}(\Om_I)}^2\les  \int_{\Om_I} r_I^{-2+2\de}=\int_{\{r_I\geq 64R_I\}}^\infty r_I^{-2+2\de}\les R_I^{1+2\de}\ll 1.
    \end{align*}
    Applying \refCglu with $\ep=R_I^{\de+\frac{1}{2}}$, there exists $(g_I'',k_I'')$ that solves \eqref{Econstraint} such that
    \begin{align*}
    (g_I'',k_I'')=\left\{
    \begin{aligned}
    (g_I',k_I'),\qquad&\mbox{ in }\;C_{\om_I,\frac{1}{2}\th_I}(y_I)\cup B_\frac{1}{2}(y_I),\\
    (e,0),\qquad&\mbox{ in }\;\RRR^3\setminus(C_{\om_I,\th_I}(y_I)\cup B_1(y_I)),
    \end{aligned}\right.
\end{align*}
and the following estimate holds:
\begin{align}\label{gIkI''est}
\|(g''_I-e,k''_I)\|_{H_b^{s,\de}\times H_b^{s-1,\de+1}(\Om_I)}\les R_I^{\de+\frac{1}{2}}\les 1.
\end{align}
We then define the desired Cauchy data $(g,k)$ as follows:
\begin{align*}
    (g,k)=\left\{
    \begin{aligned}
    (g_I'',k_I'')\qquad&\mbox{ in }\; \ov{C_{\om_I,\th_I}(y_I)\cup B_1(y_I)},\qquad\quad\forall\;I=1,2,\ldots,N,\\
    (e,0)\qquad&\mbox{ in }\;\Si_{ext}:=\bigcap_{I=1}^N\big(C_{\om_I,\th_I}(y_I)\cup B_1(y_I)\big)^c.
    \end{aligned}
    \right.
\end{align*}
As an immediate consequence of \eqref{gIkI'est}--\eqref{gIkI''est}, we obtain \eqref{gkformula} and \eqref{gkest} as stated.

By construction, for each $I\in\{1,2,\ldots,N\}$, we have that $B_{R_I}(\cb_I)\setminus B_{(1-2\de_I)R_I}(\cb_I)$ is a short-pulse annulus. Hence, a trapped surface will form in $D^+(B_{R_I}(\cb_I))$.
\paragraph{Step 2. Free of trapped surfaces.}
Fix $I\in\{1,\dots,N\}$ and consider the foliation
$S_r:=\partial B_r(\cb_I)$.
As in \cite[(6.14)]{ShenWan2}, we have
\begin{align}\label{in64R}
    \tr_g(\th_{S_r}-k_{S_r})\big|_p>0,
    \qquad \forall\,p\in S_r,\quad r\in(0,64R_I).
\end{align}
By construction,
\[
(g,k)=(g_I,k_I)
\quad\text{in }\;
(C_{\om_I,\frac12\th_I}(y_I)\cup B_{\frac12}(y_I))\cap B_{32R_I}^c(\cb_I),
\]
and by \eqref{gIkIbehavior} and \eqref{gIkI''est}, for $s\ge3$ and $\de<-\frac12$,
\[
g-e=O\big(r_I^{-\frac32-\de}\big),\qquad
k=O\big(r_I^{-\frac52-\de}\big)
\]
in $(C_{\om_I,\th_I}(y_I)\cup B_1(y_I))\cap B_{32R_I}^c(\cb_I)$.
Hence, for $r>32R_I$,
\[
\tr_{g_I}\th_{S_r}=\frac2r+O\big(r_I^{-\frac52-\de}\big),
\qquad
\tr_{g_I}k_{S_r}=O\big(r_I^{-\frac52-\de}\big),
\]
on $S_r\cap (C_{\om_I,\th_I}(y_I)\cup B_1(y_I))$, which yields, for
$-\frac32<\de<-\frac12$ and $R_I$ large,
\begin{align}\label{out32R}
    \tr_{g_I}(\th_{S_r}-k_{S_r})\big|_p>0,
    \qquad \forall\,r>32R_I,\;
    p\in S_r\cap (C_{\om_I,\th_I}(y_I)\cup B_1(y_I)).
\end{align}
Combining \eqref{in64R} and \eqref{out32R},
\begin{align}\label{Srcompare}
    \tr_g(\th_{S_r}-k_{S_r})\big|_p>0,
    \qquad \forall\,r>0,\;
    p\in S_r\cap (C_{\om_I,\th_I}(y_I)\cup B_1(y_I)).
\end{align}
Let $S$ be a compact embedded smooth $2$--surface with
$S\subset C_{\om_I,\th_I}(y_I)\cup B_1(y_I)$, and let $B_{r_S}(\cb_I)$ be the
smallest ball centered at $\cb_I$ containing $S$.
Then $S$ and $S_{r_S}$ are tangent at some $p$, and by mean curvature
comparison and \eqref{Srcompare},
\[
\tr_g(\th_S-k_S)\big|_p
\ge \tr_g(\th_{S_{r_S}}-k_{S_{r_S}})\big|_p>0,
\]
so $S$ is not trapped.

Finally, if a compact embedded smooth $2$--surface $S$ satisfies
$S\cap\Si_{ext}\neq\emptyset$, then for $p\in S\cap\Si_{ext}$,
since $(g,k)=(e,0)$ near $p$,
\begin{align}\label{extnotrapping}
    \tr_g(-\th_S-k_S)\,\tr_g(\th_S-k_S)\big|_p
    =-(\tr_e\th_S)^2\big|_p\le0,
\end{align}
contradicting the definition of a trapped surface.
This completes the proof of Theorem \ref{maintheorem}.
\end{proof}
\appendix
\section{Kerr-Schild initial data}\label{app:KS}
In this appendix, we record basic identities for metrics on $\RRR^{1+3}$ written in canonical coordinates $(t,x,y,z)$, of the Kerr-Schild form
\begin{align}\label{defgmunu}
\g_{\mu\nu}
=\etabf_{\mu\nu}+2H\ellbf_\mu\ellbf_\nu,
\end{align}
where $\etabf^{\mu\nu}\ellbf_\mu\ellbf_\nu=0$ and $H=H(x,y,z)$. Throughout, we normalize $\ellbf$ so that $\ellbf_0=1$. All statements apply, in particular, to the Kerr metric $\g_{m,a}$ written in Kerr-Schild form.
\begin{lem}\label{lem:KS-ADM}
Let $\g$ be a metric of the Kerr-Schild form \eqref{defgmunu} and let $\ellbf^\mu:=\etabf^{\mu\nu}\ellbf_\nu$. Then the following identities hold:
\begin{enumerate}
\item The metric components satisfy
\begin{align}
\begin{split}\label{eq:KS-inv}
\g^{\mu\nu}
&=\etabf^{\mu\nu}-2H\ellbf^\mu\ellbf^\nu,
\qquad
\g^{\mu\nu}\ellbf_\nu=\ellbf^\mu,
\qquad
\g_{\mu\nu}\ellbf^\mu\ellbf^\nu=0,
\\
\g_{ij}
&=\de_{ij}+2H\ellbf_i\ellbf_j,
\qquad
\g_{0i}=2H\ellbf_i,
\qquad
\g_{00}=-1+2H.
\end{split}
\end{align}
\item The lapse $\a$ and the shift $\b$ take the form
\begin{equation}\label{eq:KS-lapse-shift}
\a=(-\g^{00})^{-1/2}=(1+2H)^{-1/2},
\qquad
\b^i:=\a^2\g^{0i}=2\a^2H\ellbf^i,
\end{equation}
and satisfy
\begin{equation}\label{eq:KS-ADMids}
\b_i:=g_{ij}\b^j=\g_{0i},
\qquad
\g_{00}=-\a^2+g_{ij}\b^i\b^j.
\end{equation}
\item The future unit normal to the constant-time slices
$\Si_0=\{t=0\}$ is given by
\begin{equation}\label{eq:KS-normal}
\nbf_\mu=(-\a,0,0,0),
\qquad
\nbf^\mu=\g^{\mu\nu}\nbf_\nu=\a^{-1}(1,-\b^i).
\end{equation}
\end{enumerate}
\end{lem}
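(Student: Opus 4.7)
The plan is to verify these identities by direct linear algebra, exploiting only two structural facts: $\ellbf$ is $\etabf$--null ($\etabf^{\mu\nu}\ellbf_\mu\ellbf_\nu=0$) and the normalization $\ellbf_0=1$. All three parts reduce to coordinate computations, so the work is organizational rather than conceptual.

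For part (1), I would start by guessing the ansatz $\g^{\mu\nu}=\etabf^{\mu\nu}+cH\ellbf^\mu\ellbf^\nu$ (with $\ellbf^\mu:=\etabf^{\mu\nu}\ellbf_\nu$) and determine $c$ by imposing $\g_{\mu\a}\g^{\a\nu}=\de_\mu{}^\nu$. Expanding
\[
(\etabf_{\mu\a}+2H\ellbf_\mu\ellbf_\a)(\etabf^{\a\nu}+cH\ellbf^\a\ellbf^\nu)
=\de_\mu{}^\nu+(c+2)H\ellbf_\mu\ellbf^\nu+2cH^2(\ellbf^\a\ellbf_\a)\ellbf_\mu\ellbf^\nu
\]
and using $\ellbf^\a\ellbf_\a=0$ forces $c=-2$. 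The identity $\g^{\mu\nu}\ellbf_\nu=\ellbf^\mu$ then follows from $\ellbf^\nu\ellbf_\nu=0$, and $\g_{\mu\nu}\ellbf^\mu\ellbf^\nu=\etabf_{\mu\nu}\ellbf^\mu\ellbf^\nu+2H(\ellbf_\mu\ellbf^\mu)^2=0$ is immediate. The component formulas are just evaluations at $\mu,\nu\in\{0,i\}$ using $\ellbf_0=1$.

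For part (2), I would compute $\g^{00}=\etabf^{00}-2H(\ellbf^0)^2$. Since $\ellbf^0=\etabf^{0\mu}\ellbf_\mu=-\ellbf_0=-1$, this gives $\g^{00}=-1-2H$, hence $\a=(1+2H)^{-1/2}$. Similarly $\g^{0i}=-2H\ellbf^0\ellbf^i=2H\ellbf^i$, so $\b^i=\a^2\g^{0i}=2\a^2H\ellbf^i$. For the ADM identities in \eqref{eq:KS-ADMids}, I would use the null condition $\ellbf_\mu\ellbf^\mu=-\ellbf_0^2+|\vec\ell|^2=0$ to deduce $|\vec\ell|^2=1$, which makes both $g_{ij}\b^j=\g_{0i}$ and $\g_{00}=-\a^2+g_{ij}\b^i\b^j$ reduce to scalar identities in $H$ that are verified by elementary algebra (e.g.\ $-\a^2+4\a^4H^2(1+2H)=-1+2H=\g_{00}$).

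For part (3), I would start from $\nbf=-\a\,dt$, which is the unique future--directed unit conormal to $\Si_0=\{t=0\}$; this immediately gives $\nbf_\mu=(-\a,0,0,0)$. Raising the index via $\nbf^\mu=\g^{\mu\nu}\nbf_\nu=-\a\g^{\mu 0}$ and using the explicit formulas $\g^{00}=-\a^{-2}$, $\g^{0i}=\b^i/\a^2$ from part (2) yields $\nbf^\mu=\a^{-1}(1,-\b^i)$. No genuine obstacle is expected; the only point requiring care is consistent sign conventions when passing between $\etabf$ and $\g$ index-raising on $\ellbf$, which I would handle once at the outset by recording $\ellbf^0=-1$ and $\ellbf^i=\ellbf_i$.
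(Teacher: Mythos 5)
Your proposal is correct and follows essentially the same route as the paper's proof: derive $\g^{\mu\nu}$ from $\etabf$--nullness, compute $\g^{00}$, $\g^{0i}$ using $\ellbf^0=-1$, reduce the ADM identities via $\ellbf_i\ellbf^i=1$, and raise the conormal $-\a\,dt$. The only superficial difference is that for $\g_{00}=-\a^2+g_{ij}\b^i\b^j$ the paper invokes the inversion identity $\g_{0i}\g^{0i}=1-\g_{00}\g^{00}$ while you expand directly; both are elementary and equivalent.
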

\begin{proof}
The identities in \eqref{eq:KS-inv} immediately follow from the fact that
$\ellbf$ is $\etabf$--null. For \eqref{eq:KS-lapse-shift}, we compute
\[
\g^{00}=\etabf^{00}-2H=-1-2H,
\qquad\quad
\g^{0i}=-2H\ellbf^0\ellbf^i=2H\ellbf^i,
\]
which yields
\[
\a=(1+2H)^{-1/2},
\qquad\quad
\b^i=2\a^2H\ellbf^i.
\]
Next, using $\ellbf_i\ellbf^i=1$ (by $\etabf$--nullness), we have
\[
g_{ij}\ellbf^j
=\big(\de_{ij}+2H\ellbf_i\ellbf_j\big)\ellbf^j
=\ellbf_i+2H\ellbf_i
=\a^{-2}\ellbf_i.
\]
Therefore,
\[
\b_i=g_{ij}\b^j=2\a^2H g_{ij}\ellbf^j=2H\ellbf_i=\g_{0i}.
\]
Then, using $\g_{0i}\g^{0i}=1-\g_{00}\g^{00}$, we compute
\begin{align*}
-\a^2+g_{ij}\b^i\b^j
&=-\a^2+\b_i\b^i
=-\a^2+\a^2\g_{0i}\g^{0i}
\\
&=-\a^2+\a^2\big(1-\g_{00}\g^{00}\big)
=\a^2(-\g_{00}\g^{00})
=\g_{00}.
\end{align*}
Finally, $\nbf_\mu=(-\a,0,0,0)$ is the unit conormal to $\Si_0$, and raising the
index yields
\[
\nbf^\mu=(-\a)\g^{\mu0}=(\a^{-1},-\a^{-1}\b^i).
\]
A direct computation gives
\[
\g_{\mu\nu}\nbf^\mu\nbf^\nu
=\nbf_\mu\nbf^\mu
=(-\a)\nbf^0
=-\a\a^{-1}
=-1.
\]
This concludes the proof.
\end{proof}
\begin{lem}\label{lem:KS-K}
The second fundamental form of $\Si_0=\{t=0\}$ induced by $\g$ is given by
\begin{equation}\label{eq:K-Lie}
k_{ij}
:=-\frac12(\Lie_\nbf\g)_{ij}
=\frac{1}{2\a}\big(
\b^l\pr_l g_{ij}
+g_{il}\pr_j\b^l
+g_{jl}\pr_i\b^l
-\pr_t g_{ij}
\big).
\end{equation}
\end{lem}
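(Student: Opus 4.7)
The plan is to compute the Lie derivative $(\Lie_\nbf\g)_{ij}$ directly from the ADM decomposition of $\nbf^\mu$ provided in Lemma \ref{lem:KS-ADM}, and observe that the lapse derivatives cancel thanks to the identity $\b_i=g_{ij}\b^j=\g_{0i}$ recorded in \eqref{eq:KS-ADMids}.

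First I would expand the Lie derivative in canonical coordinates:
\[
(\Lie_\nbf\g)_{ij}=\nbf^\mu\pr_\mu\g_{ij}+\g_{\mu j}\pr_i\nbf^\mu+\g_{i\mu}\pr_j\nbf^\mu.
\]
Substituting $\nbf^0=\a^{-1}$, $\nbf^l=-\a^{-1}\b^l$ from \eqref{eq:KS-normal}, the first (transport) piece becomes $\a^{-1}\pr_t\g_{ij}-\a^{-1}\b^l\pr_l\g_{ij}$, which on $\Si_0$ equals $\a^{-1}\pr_t g_{ij}-\a^{-1}\b^l\pr_lg_{ij}$ since $\g_{ij}|_{\Si_0}=g_{ij}$.

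Next I would analyze the symmetric pieces $\g_{\mu j}\pr_i\nbf^\mu$ and $\g_{i\mu}\pr_j\nbf^\mu$. Writing $\g_{0j}=\b_j$ (from \eqref{eq:KS-ADMids}) and $\g_{lj}=g_{lj}$, one gets
\[
\g_{\mu j}\pr_i\nbf^\mu=-\a^{-2}\b_j\pr_i\a-\a^{-1}g_{lj}\pr_i\b^l+\a^{-2}g_{lj}\b^l\pr_i\a.
\]
The key observation is that $g_{lj}\b^l=\b_j$, so the two terms proportional to $\pr_i\a$ cancel, leaving $-\a^{-1}g_{lj}\pr_i\b^l$. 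The same computation with indices swapped yields $-\a^{-1}g_{il}\pr_j\b^l$. Combining everything,
\[
(\Lie_\nbf\g)_{ij}=\a^{-1}\bigl(\pr_t g_{ij}-\b^l\pr_l g_{ij}-g_{lj}\pr_i\b^l-g_{il}\pr_j\b^l\bigr),
\]
and multiplying by $-\tfrac12$ gives \eqref{eq:K-Lie}.

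The only nontrivial point in this computation is the cancellation of the $\pr_i\a$ and $\pr_j\a$ contributions, which relies precisely on the ADM identity $\b_i=\g_{0i}$ established in Lemma \ref{lem:KS-ADM}. Since $k_{ij}=-\tfrac12(\Lie_\nbf\g)_{ij}$ is the standard definition of the second fundamental form in terms of the unit normal, no further input is required, and the formula \eqref{eq:K-Lie} follows directly.
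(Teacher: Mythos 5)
Your computation is correct and gives the right formula, but it takes a route different from the paper's. You expand $(\Lie_\nbf\g)_{ij}=\nbf^\mu\pr_\mu\g_{ij}+\g_{\mu j}\pr_i\nbf^\mu+\g_{i\mu}\pr_j\nbf^\mu$ directly in coordinates, substitute $\nbf^\mu=\a^{-1}(1,-\b^l)$, and observe that the lapse-gradient contributions cancel because $\g_{0j}=\b_j=g_{lj}\b^l$. The paper instead starts from the decomposition $\pr_t=\a\nbf+\b$, writes $\pr_t g_{ij}=(\Lie_{\a\nbf}\g)_{ij}+(\Lie_\b\g)_{ij}$, and notes that $(\Lie_{\a\nbf}\g)_{ij}=\a(\Lie_\nbf\g)_{ij}$ since $\g(\nbf,\pr_i)=0$; it then evaluates $(\Lie_\b\g)_{ij}$ by the Lie-bracket formula. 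The two arguments exploit the same ADM structure from two sides: your cancellation via $\g_{0j}=g_{lj}\b^l$ is the coordinate avatar of the orthogonality $\g(\nbf,\pr_i)=0$ that the paper uses to factor out the lapse. The paper's version is slightly more geometric and avoids ever differentiating $\a$; yours is a more mechanical coordinate expansion that makes the cancellation explicit. Both are equally valid proofs of \eqref{eq:K-Lie}.
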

\begin{proof}
From \eqref{eq:KS-normal} we have
\[
\pr_t=\a\nbf+\b,\qquad\quad\b=\b^i\pr_i.
\]
Since $\g(\nbf,\pr_i)=0$, it follows that
\[
(\Lie_{\a\nbf}\g)_{ij}
=\a(\Lie_\nbf\g)_{ij}
-(\pr_i\a)\g(\nbf,\pr_j)
-(\pr_j\a)\g(\pr_i,\nbf)
=\a(\Lie_\nbf\g)_{ij}.
\]
Hence,
\[
\pr_t g_{ij}
=(\Lie_{\pr_t}\g)_{ij}
=(\Lie_{\a\nbf}\g)_{ij}+(\Lie_\b\g)_{ij}
=\a(\Lie_\nbf\g)_{ij}+(\Lie_\b\g)_{ij}.
\]
Moreover,
\begin{align*}
(\Lie_\b g)_{ij}
&=\b^l\pr_l(g_{ij})
-g(\Lie_\b\pr_i,\pr_j)
-g(\pr_i,\Lie_\b\pr_j)
\\
&=\b^l\pr_l(g_{ij})
-g([\b^k\pr_k,\pr_i],\pr_j)
-g(\pr_i,[\b^k\pr_k,\pr_j])
\\
&=\b^l\pr_l(g_{ij})
+g_{jk}\pr_i(\b^k)
+g_{ik}\pr_j(\b^k).
\end{align*}
Combining the above identities yields \eqref{eq:K-Lie}.
This concludes the proof.
\end{proof}
\section{Computation of Kerr initial data in Kerr-Schild form}
In this Appendix, we provide the detailed proofs of Lemmas \ref{preidentities}, \ref{expressionk} and \ref{moreidentities}. We will make use of the shorthanded $\O^p_q$--notation introduced in Definition \ref{KerrO} to denote the lower order terms.
\subsection{Proof of Lemma \ref{preidentities}}\label{proof-preidentities}
From \eqref{dfrt}, we have
\[
\rt^4-(r^2-a^2)\rt^2-a^2z^2=0.
\]
Hence, we obtain
\begin{align*}
    \rt^2=r^2+\O_0^2\qquad \mbox{ and }\qquad\rt=r(1+\O_2^2).
\end{align*}
Therefore,
\[
H=\frac{m\rt^3}{\rt^4+a^2z^2}=\frac{m}{r}(1+\O_2^2),
\qquad
\pr_iH=-\frac{mx_i}{r^3}+\O_4^3,
\qquad
\a=(1+2H)^{-1/2}=1-\frac{m}{r}+\O_2^2.
\]
Moreover, from \eqref{KerrSchildcomponents}, we compute
\begin{align}
\begin{split}\label{ellbfl}
\ellbf_1&=\frac{\rt x+ay}{\rt^2+a^2}=\left(\frac{x}{r}+\frac{ay}{r^2}\right)(1+\O_2^2),\qquad
\ellbf_2=\frac{\rt y-ax}{\rt^2+a^2}=\left(\frac{y}{r}-\frac{ax}{r^2}\right)(1+\O_2^2),\\
\ellbf_3&=\frac{z}{\rt}=\frac{z}{r}(1+\O_2^2).
\end{split}
\end{align}
These immediately imply
\begin{align}\label{computexell}
\ellbf_j \frac{x_j}{r}=\frac{x}{r}\left(\frac{x}{r}+\frac{ay}{r^2}\right)(1+\O_2^2)+\frac{y}{r}\left(\frac{y}{r}-\frac{ax}{r^2}\right)(1+\O_2^2)+\frac{z^2}{r^2}(1+\O_2^2)=1+\O_2^2.
\end{align}
We also have
\begin{align*}
\ellbf_i\pr_iH&=\left(\frac{x}{r}+\frac{ay}{r^2}\right)(1+\O_2^2)\left(-\frac{mx}{r^3}+\O_4^3\right)+\left(\frac{y}{r}-\frac{ax}{r^2}\right)(1+\O_2^2)\left(-\frac{my}{r^3}+\O_4^3\right)\\
&+\frac{z}{r}(1+\O_2^2)\left(-\frac{mz}{r^3}+\O_4^3\right)\\
&=-\frac{m}{r^2}+\O_4^3,
\end{align*}
and
\begin{align*}
\frac{x_j}{r}\pr_jH=\frac{x}{r}\left(-\frac{mx}{r^3}+\O_4^3\right)+\frac{y}{r}\left(-\frac{my}{r^3}+\O_4^3\right)+\frac{z}{r}\left(-\frac{mz}{r^3}+\O_4^3\right)=-\frac{m}{r^2}+\O_4^3.
\end{align*}
Next, using \eqref{ellbfl}, we compute
\begin{align*}
    \pr_i\ellbf_i&=\pr_x\left[\left(\frac{x}{r}+\frac{ay}{r^2}\right)(1+\O_2^2)\right]+\pr_y\left[\left(\frac{y}{r}-\frac{ax}{r^2}\right)(1+\O_2^2)\right]+\pr_z\left[\frac{z}{r}(1+\O_2^2)\right]\\
    &=\left(\frac{1}{r}-\frac{x^2}{r^3}-\frac{2ayx}{r^4}\right)(1+\O_2^2)+\left(\frac{x}{r}+\frac{ay}{r^2}\right)\O_3^2+\left(\frac{1}{r}-\frac{y^2}{r^3}+\frac{2axy}{r^4}\right)(1+\O_2^2)\\
    &+\left(\frac{y}{r}-\frac{ax}{r^2}\right)\O_3^2+\left(\frac{1}{r}-\frac{z^2}{r^3}\right)(1+\O_2^2)+\frac{z}{r}\O_3^2\\
    &=\frac{2}{r}+\O_3^2.
\end{align*}
Finally, we have from \eqref{computexell}
\begin{align*}
\ellbf_i\pr_i\ellbf_j\frac{x_j}{r}=\frac{1}{r}\ellbf_i\pr_i(\ellbf_j x_j)-\frac{1}{r}\ellbf_i\ellbf_j\de_{ij}=\frac{\ellbf_i}{r}\pr_i(r+\O_1^2)-\frac{1}{r}=\frac{\ellbf_i}{r}\left(\frac{x_i}{r}+\O_2^2\right)-\frac{1}{r}=\O_3^2.
\end{align*}
This concludes the proof of Lemma \ref{preidentities}.
\subsection{Proof of Lemma \ref{expressionk}}\label{proof-expressionk}
Since the Kerr metric is stationary, we have $\pr_t g_{ij}=0$ throughout the computation. We first compute
\begin{align*}
\pr_i(\a^2H)&=\pr_i\left(\frac{H}{1+2H}\right)=\frac{1}{(1+2H)^2}\pr_iH=\a^4\pr_iH,\\
\ellbf_l\pr_i\ellbf_l&=\frac{1}{2}\pr_i(\ellbf_l\ellbf_l)=0,
\end{align*}
since $\sum_{l=1}^3 \ellbf_l^2 = 1$. Then, we have from \eqref{eq:KS-lapse-shift} and \eqref{eq:K-Lie}
\begin{align*}
\a k_{ij}&=\a^2H\ellbf_l\pr_l (g_{ij})+g_{il}\pr_j(\a^2H\ellbf_l)+g_{jl}\pr_i(\a^2H\ellbf_l)\\
&=\a^2H\ellbf_l\pr_l(2H\ellbf_i\ellbf_j)+\pr_j(\a^2H\ellbf_i)+\pr_i(\a^2H\ellbf_j) +2H\ellbf_i\ellbf_l\pr_j(\a^2H\ellbf_l)+2H\ellbf_j\ellbf_l\pr_i(\a^2H\ellbf_l)\\
&=\a^2H\ellbf_l\pr_l(2H\ellbf_i\ellbf_j)+\a^2H(\pr_j\ellbf_i+\pr_i\ellbf_j)+\a^{-2}\ellbf_i\pr_j(\a^2H)+\a^{-2}\ellbf_j\pr_i(\a^2H)\\
&=\a^2H\ellbf_l\pr_l(2H\ellbf_i\ellbf_j)+\a^2H(\pr_j\ellbf_i+\pr_i\ellbf_j) +\a^2(\ellbf_i\pr_jH+\ellbf_j\pr_iH)\\
&=\a^2H\ellbf_l\pr_l(2H\ellbf_i\ellbf_j)+\a^2(\pr_j(H\ellbf_i)+\pr_i(H\ellbf_j)),
\end{align*}
which implies
\begin{equation}
\a^{-1}k_{ij}=2H\ellbf_l\pr_l(H\ellbf_i\ellbf_j)+ \pr_j(H\ellbf_i)+\pr_i(H\ellbf_j).
\end{equation}
Taking the trace, one obtains
\begin{equation}
\a^{-1}\tr_e k=2H\ellbf_l\pr_lH+2 \pr_l(H\ellbf_l)=2(1+H)\ellbf_l\pr_lH+2H\pr_l\ellbf_l.
\end{equation}
This concludes the proof of Lemma \ref{expressionk}.
\subsection{Proof of Lemma \ref{moreidentities}}\label{proof-moreidentities}
From \eqref{ellbfl}, we compute
\begin{align*}
    \ellbf_i\pr_i\ellbf_1&=\left(\frac{x}{r}+\frac{ay}{r^2}\right)\pr_x\left[\left(\frac{x}{r}+\frac{ay}{r^2}\right)(1+\O_2^2)\right](1+\O_2^2) +\left(\frac{y}{r}-\frac{ax}{r^2}\right)\pr_y\left[\left(\frac{x}{r}+\frac{ay}{r^2}\right)(1+\O_2^2)\right](1+\O_2^2)\\
    &+\frac{z}{r}\pr_z\left[\left(\frac{x}{r}+\frac{ay}{r^2}\right)(1+\O_2^2)\right](1+\O_2^2)\\
    &=\left(\frac{x}{r}+\frac{ay}{r^2}\right)\left(\frac{1}{r}-\frac{x^2}{r^3}-\frac{2ayx}{r^4}\right)+\left(\frac{y}{r}-\frac{ax}{r^2}\right)\left(-\frac{xy}{r^3}+\frac{a}{r^2}-\frac{2ay^2}{r^4}\right)\\
    &+\frac{z}{r}\left(-\frac{xz}{r^3}-\frac{2ayz}{r^4}\right)+\O_3^2\\
    &=\O_3^2,
\end{align*}
and
\begin{align*}
    \frac{x_i}{r}\pr_i\ellbf_1&=\frac{x}{r}\pr_x\left[\left(\frac{x}{r}+\frac{ay}{r^2}\right)(1+\O_2^2)\right]+\frac{y}{r}\pr_y\left[\left(\frac{x}{r}+\frac{ay}{r^2}\right)(1+\O_2^2)\right]+\frac{z}{r} \pr_z\left[\left(\frac{x}{r}+\frac{ay}{r^2}\right)(1+\O_2^2)\right]\\
    &=\frac{x}{r}\left(\frac{1}{r}-\frac{x^2}{r^3}-\frac{2ayx}{r^4}\right)+\frac{y}{r}\left(-\frac{xy}{r^3}+\frac{a}{r^2}-\frac{2ay^2}{r^4}\right)+\frac{z}{r}\left(-\frac{xz}{r^3}-\frac{2ayz}{r^4}\right)+\O_3^2\\
    &=-\frac{a y}{r^3}+\O_3^2.
\end{align*}
Similarly, we have
\begin{align*}
    \ellbf_i\pr_i\ellbf_2=\O_3^2,\qquad\qquad
    \frac{x_i}{r}\pr_i\ellbf_2
    =\frac{a x}{r^3}+\O_3^2.
\end{align*}
Finally, we compute
\begin{align*}
    \ellbf_i\pr_i\ellbf_3&=\left(\frac{x}{r}+\frac{ay}{r^2}\right)\pr_x\left[\frac{z}{r}(1+\O_2^2)\right](1+\O_2^2)+\left(\frac{y}{r}-\frac{ax}{r^2}\right)\pr_y\left[\frac{z}{r}(1+\O_2^2)\right](1+\O_2^2)\\
    &+\frac{z}{r}\pr_z\left[\frac{z}{r}(1+\O_2^2)\right](1+\O_2^2)\\
    &=\left(\frac{x}{r}+\frac{ay}{r^2}\right)\left(-\frac{zx}{r^3}\right)+\left(\frac{y}{r}-\frac{ax}{r^2}\right)\left(-\frac{zy}{r^3}\right)+\frac{z}{r}\left(\frac{1}{r}-\frac{z^2}{r^3}\right)+\O_3^2\\
    &=\O_3^2,
\end{align*}
and
\begin{align*}
    \frac{x_i}{r}\pr_i\ellbf_3&=\frac{x}{r}\pr_x\left[\frac{z}{r}(1+\O_2^2)\right]+\frac{y}{r}\pr_y\left[\frac{z}{r}(1+\O_2^2)\right]+\frac{z}{r}\pr_z\left[\frac{z}{r}(1+\O_2^2)\right]\\
    &=\frac{x}{r}\left(-\frac{xz}{r^3}\right)+\frac{y}{r}\left(-\frac{yz}{r^3}\right)+\frac{z}{r}\left(\frac{1}{r}-\frac{z^2}{r^3}\right)+\O_3^2\\
    &=\O_3^2.
\end{align*}
This concludes the proof of Lemma \ref{moreidentities}.

\end{document}